\documentclass[11pt,reqno]{amsart}

\pdfoutput = 1

\usepackage[english]{babel}
\usepackage{amsmath,amsfonts,amssymb,amsthm}
\usepackage{amsfonts}
\usepackage{amsxtra}
\usepackage[numbers]{natbib}
\usepackage{mathabx,epsfig}

\usepackage{xcolor}
\usepackage{array,dcolumn}
\usepackage{graphicx}
\usepackage{caption}
\usepackage{float}
\usepackage[hyperfootnotes=true,
        pdffitwindow=true,
        plainpages=false,
        pdfpagelabels=true,
        pdfpagemode=UseOutlines,
        pdfpagelayout=SinglePage,
        pagebackref,
        hyperindex,hidelinks]{hyperref}

\usepackage[T1]{fontenc}
\usepackage[utf8]{inputenc}
\usepackage[activate={true,nocompatibility},spacing,kerning]{microtype}
\usepackage[sc,osf]{mathpazo}
\microtypecontext{spacing=nonfrench}

 \calclayout

\newcommand{\mathsym}[1]{{}}
\newcommand{\unicode}[1]{{}}

\theoremstyle{plain}
\newtheorem{theorem}{Theorem}
\newtheorem{lemma}{Lemma}

\newtheorem{proposition}{Proposition}

\theoremstyle{definition}
\newtheorem{definition}{Definition}
\newtheorem{example}{Example}

\theoremstyle{remark}

 \DeclareMathOperator{\Tr}{Tr}
 
\allowdisplaybreaks

\newcommand{\+}{\!+\!}

\renewcommand{\leq}{\leqslant}
\renewcommand{\geq}{\geqslant}

\newcommand{\Det}{\operatorname{Det}}
\def\mean#1{\left\langle #1 \right\rangle}

\begin{document}
\title[Antisymmetrised and Hermitised matrix product ensembles]{Combinatorics and loop equations for antisymmetrised and Hermitised matrix product ensembles}
\author{Stephane Dartois}
\address{Laboratoire Borelais de Recherche en Informatique, 
Universit\'e de Bordeaux, 
Bordeaux, France}
\email{stephane.dartois@labri.fr}
\author{Anas A. Rahman}
\address{Department of Mathematics, 
The University of Hong Kong,
Hong Kong}
\email{aarahman@hku.hk}

\begin{abstract}
The order $m$ antisymmetrised matrix product ensemble is represented by the product $X_1^T\cdots X_m^TJX_m\cdots X_1$, where $X_1,\ldots,X_m$ are independent real Ginibre matrices and $J$ is the elementary antisymmetric matrix, while the order $m$ Hermitised matrix product ensemble is represented by $X_1^\dagger\cdots X_m^\dagger HX_1\cdots X_m$, where $X_1,\ldots,X_m$ are now independent complex Ginibre matrices and $H$ is a Hermitian matrix drawn from the Gaussian unitary ensemble. These ensembles have recently been shown to be related to certain Muttalib--Borodin ensembles and integrals of Harish-Chandra--Itzykson--Zuber type, thereby motivating further investigation into their eigenvalue statistics. In this work, we construct ribbon graphs and constellations that are enumerated by the mixed cumulants of these ensembles and give loop equation characterisations for the generating functions of said cumulants when $m=1$.
\end{abstract}
\maketitle

\section{Introduction}\label{s1}
Products of random matrices have been studied since the 1950s, with many fundamental results and techniques being developed in the works \citep{GN50}, \citep{HC57}, \citep{FK60}, \citep{Os68}, \citep{Ra79}, \citep{New86} (among others) throughout the second half of the twentieth century; see \citep{Hel78} (group theory) and \citep{CPV93} (statistical physics) for textbook treatments. In addition to their interesting structures, research into random matrix products has been partially motivated by applications in the study of wireless communications \citep{Mu02}, \citep{TV04}, quantum transport \citep{Be97}, the stability and chaoticity of large dynamical systems \citep{CPV93}, \citep{IF18}, finance \citep{BLMP07}, and the stability of neural networks \citep{PSG17}, \citep{HN20}. There are also interesting connections to Fuss--Catalan and Raney numbers \citep{PZ11}, \citep{FL15}. The last decade has seen a remarkable surge in interest in matrix product ensembles owing to the breakthrough derivation of correlation kernels encapsulating detailed knowledge of the eigenvalue and singular value statistics of certain matrix product ensembles, where the products contain a finite number of factors and each factor is a finite-sized matrix --- most results from before this era pertain to regimes where either the matrix sizes or number of factors are taken to infinity. See \citep{AI15}, \citep{Ips15} for reviews.

In the present work, we use the Isserlis--Wick theorem to develop combinatorial characterisations of the mixed moments and cumulants of the Hermitised and antisymmetrised matrix product ensembles and then, specialising to the smallest non-trivial products, we derive loop equations that govern them. These ensembles are defined as follows.
\begin{definition} \label{def1}
Let $m,N_0\in\mathbb{N}$, fix $\nu_0:=0,\nu_1,\ldots,\nu_m\in\mathbb{N}$, and define $N_i:=N_0+\nu_i$ with $N:=N_m$. For $i\in[m]:=\{1,\ldots,m\}$, let $X_i$ be drawn independently from the (scaled) $N_{i-1}\times N_i$ complex Ginibre ensemble with probability density function (p.d.f.)
\begin{equation} \label{eq1.1}
P^{(cGin)}(X_i)\propto\exp\left[-\sqrt{N_iN_{i-1}}\Tr X_i^\dagger X_i\right]
\end{equation}
and let $H$ be drawn from the $N_0\times N_0$ Gaussian unitary ensemble (GUE) with p.d.f.
\begin{equation*}
P^{(GUE)}(H)\propto\exp\left[-\frac{N_0}{2}\Tr H^2\right].
\end{equation*}
Then, the $N\times N$ product
\begin{equation} \label{eq1.2}
\mathcal{H}_m:=X_m^\dagger\cdots X_1^\dagger HX_1\cdots X_m
\end{equation}
represents the $(N_0,\ldots,N_{m-1},N)$ \textit{Hermitised matrix product ensemble} \citep{FIL18}. When $m=1$, we alternatively say that $\mathcal{H}_1$ represents the $(N_0,N)$ \textit{Hermitised Laguerre ensemble}.
\end{definition}
\begin{definition} \label{def2}
Let $m,N_0/2\in\mathbb{N}$, fix $\nu_0:=0,\nu_1,\ldots,\nu_m\in\mathbb{N}$, and define $N_i:=N_0+2\nu_i$ with $N:=N_m$. For $i\in[m]$, let $X_i$ be drawn independently from the $N_{i-1}\times N_i$ real Ginibre ensemble with p.d.f.
\begin{equation} \label{eq1.3}
P^{(rGin)}(X_i)\propto\exp\left[-\frac{\sqrt{N_iN_{i-1}}}{2}\Tr X_i^TX_i\right]
\end{equation}
and let
\begin{equation*}
J:=I_{N_0/2}\otimes\begin{bmatrix}0&-1\\1&0\end{bmatrix}
\end{equation*}
be the elementary antisymmetric $N_0\times N_0$ block-diagonal matrix with non-zero blocks given by the $2\times2$ matrix on the right-hand side of the above. Then, the $N\times N$ product
\begin{equation} \label{eq1.4}
\mathcal{J}_m:=X_m^T\cdots X_1^TJX_1\cdots X_m
\end{equation}
represents the $(N_0,\ldots,N_{m-1},N)$ \textit{antisymmetrised matrix product ensemble} \citep{FILZ19}. When $m=1$, we alternatively say that $\mathcal{J}_1$ represents the $(N_0,N)$ \textit{antisymmetrised Laguerre ensemble}.
\end{definition}
We think of these ensembles as perturbations of the more widely studied Wishart product ensembles \citep{AI15}, \citep{Ips15}, which are defined as follows.
\begin{definition} \label{def3}
As in Definition \ref{def1}, let $m,N_0\in\mathbb{N}$, fix $\nu_0:=0,\nu_1,\ldots,\nu_m\in\mathbb{N}$, and define $N_i:=N_0+\nu_i$ with $N:=N_m$. For $i\in[m]$, let $X_i$ be drawn independently from the $N_{i-1}\times N_i$ real or complex Ginibre ensemble, as specified by the p.d.f.s \eqref{eq1.1}, \eqref{eq1.3} above (fixing the field across all $X_i$). Then, the $N\times N$ product
\begin{equation*}
\mathcal{W}_m:=X_m^\dagger\cdots X_1^\dagger X_1\cdots X_m
\end{equation*}
represents the $(N_0,\ldots,N_{m-1},N)$ \textit{real or complex Wishart product ensemble}, respectively. Setting $m=1$ recovers the well known Laguerre ensemble.
\end{definition}

Our motivation for studying the ensembles of definitions \ref{def1} and \ref{def2} is twofold. Firstly, we would like to see how the recent loop equation analysis of the first present author and Forrester \citep{DF20} for the $(N,N,N)$ complex Wishart product ensemble extends to the ensembles at hand, thereby improving our understanding of the ramifications of perturbing the Wishart product ensembles in the corresponding ways (inserting a GUE matrix $H$ or elementary antisymmetric matrix $J$ in the middle of the relevant products). Secondly, the Hermitised and antisymmetrised matrix product ensembles have recently been shown to be related to certain Muttalib--Borodin ensembles and a generalisation, respectively analogue, of the Harish-Chandra--Itzykson--Zuber (HCIZ) integral \citep{FIL18}, \citep{FILZ19}.

\subsection{Related literature}
The Laguerre and Hermite Muttalib--Borodin (biorthogonal) ensembles are systems of real eigenvalues $\{\lambda_i\}_{i=1}^N$ distributed according to the joint p.d.f.s (j.p.d.f.s)
\begin{align*}
p^{(L,\theta)}(\lambda_1,\ldots,\lambda_N;a)&=\frac{1}{\mathcal{N}^{(L)}_{N,\theta}}\prod_{i=1}^N\lambda_i^ae^{-\lambda_i}\chi_{\lambda_i>0}\prod_{1\le j<k\le N}(\lambda_k-\lambda_j)(\lambda_k^\theta-\lambda_j^\theta),
\\p^{(H,\theta)}(\lambda_1,\ldots,\lambda_N;a)&=\frac{1}{\mathcal{N}^{(H)}_{N,\theta}}\prod_{i=1}^N|\lambda_i|^ae^{-\lambda_i}\prod_{1\le j<k\le N}(\lambda_k-\lambda_j)((\mathrm{sgn}\,\lambda_k)|\lambda_k|^\theta-(\mathrm{sgn}\,\lambda_j)|\lambda_j|^\theta),
\end{align*}
respectively. Here, $\theta>0$, $\alpha>-1$ are real parameters, $\chi_{\lambda>0}$ is the indicator function equal to one when $\lambda>0$ and zero otherwise, $\mathrm{sgn}\,\lambda$ is the sign of $\lambda$, and $\mathcal{N}^{(L)}_{N,\theta},\mathcal{N}^{(H)}_{N,\theta}$ are normalisation constants. These ensembles were first introduced by Borodin \citep{Bo98} as extensions of a model of Muttalib \citep{Mut95} that was itself proposed as an exactly solvable simplification of the solution to the so-called DMPK equation \citep{Dor82}, \citep{MPK88}, \citep{BR93}. They have since drawn considerable attention \citep{FL15}, \citep{FW17}, \citep{FI18}, \citep{Che18}, \citep{CGS19}; see also \citep{CR14}, \citep{KS14}, \citep{KM19}, \citep{CGS19}, \citep{CLM21}, \citep{Mol21}, \citep{WZ21}, \citep{CZ26} for various generalisations and \citep{LSZ06}, \citep{Miller26} for some applications in physics. Our interest in these ensembles stems from the fact that at the large eigenvalue limit, the j.p.d.f.s of the eigenvalues of $\mathcal{H}_m$ and positive eigenvalues of $\mathrm{i}\mathcal{J}_m$ have the following approximations, respectively \citep{FLZ15}, \citep{Ips15}, \citep{FIL18}, \citep{FILZ19}:
\begin{align*}
p^{(\mathcal{H}_m)}\left(\tfrac{\gamma_1^{2m+1}}{\sqrt{2}},\ldots,\tfrac{\gamma_{N_0}^{2m+1}}{\sqrt{2}}\right)&\approx \prod_{i=1}^{N_0}\frac{(m+1/2)^{m-1/2}}{\sqrt{2}\lambda_i^{2m}}\,p^{(H,2m+1)}\left(\lambda_1,\ldots,\lambda_{N_0};\,2\sum_{i=1}^m\nu_i+m\right),
\\ p^{(\mathrm{i}\mathcal{J}_m)}\left(\eta_1^m,\ldots,\eta_{N_0/2}^m\right)&\approx \prod_{i=1}^{N_0/2}\left(\tfrac{\lambda_i}{m}\right)^{1-m}\,p^{(L,2m)}\left(\lambda_1,\ldots,\lambda_{N_0/2};\,2\sum_{i=1}^m\nu_i+\tfrac{m-1}{2}\right).
\end{align*}
Here, $\gamma_i=N_0N_1\cdots N_{m-1}\sqrt{N_m/(2m+1)}\lambda_i$, $\eta_i=N_1\cdots N_{m-1}\sqrt{N_0N_m}/(2^mm)\lambda_i$, and we note that the latter approximation is exact when $m=1$ \citep[Cor.~4.4]{FILZ19}.

The Harish-Chandra--Itzykson--Zuber (HCIZ) integral formula is the evaluation
\begin{equation*}
\int_{U(N)}e^{\Tr(AUBU^\dagger)}\,\mathrm{d}\mu_{\mathrm{Haar}}(U)=\prod_{i=1}^{N-1}i!\,\frac{\Det[e^{a_jb_k}]_{j,k=1}^N}{\Delta_N(a)\Delta_N(b)},
\end{equation*}
where $A$ and $B$ are $N\times N$ complex Hermitian matrices with eigenvalues $\{a_j\}_{j=1}^N$ and $\{b_j\}_{j=1}^N$, $\Delta_N(\lambda)=\prod_{1\le i<j\le N}(\lambda_j-\lambda_i)$ is the Vandermonde determinant, and $\mathrm{d}\mu_{\mathrm{Haar}}(U)$ is the Haar probability measure on $U(N)$. This formula is so named due to it being a special case of the general group integral studied by Harish-Chandra in the 1957 work \citep{HC57}, and for its independent derivation and introduction to random matrix theory by Itzykson and Zuber in the 1980 work \citep{IZ80}. The HCIZ integral and its many generalisations and analogues have proven to be a powerful tool in random matrix theory (see, e.g., \citep{Joh01}, \citep{KT04}, \citep{FR05}, \citep{BK05}, \citep{BBP05}, \citep{SMM06}, \citep{AKW13}, \citep{AIK13}, \citep{AS16}, \citep{KKS16}, \citep{Li18}).

Of interest to us are the orthogonal Harish-Chandra integral \citep{FEFZ07} and the hyperbolic HCIZ integral \citep{Fyo02}, \citep{FS02}. The former, defined for even $N$, is given by
\begin{equation*}
\int_{O(N)}e^{\tfrac{1}{2}\Tr(AUBU^T)}\,\mathrm{d}\mu_{\mathrm{Haar}}(U)=2^{-N/2}\prod_{i=1}^{N/2-1}(2i)!\,\frac{\Det[2\cosh(a_jb_k)]_{j,k=1}^{N/2}}{\prod_{1\leq j<k\leq N/2}(a_k^2-a_j^2)(b_k^2-b_j^2)},
\end{equation*}
where $A$ and $B$ are $N\times N$ real antisymmetric matrices with imaginary eigenvalues $\{\pm\mathrm{i}a_j\}_{j=1}^{N/2}$ and $\{\pm\mathrm{i}b_j\}_{j=1}^{N/2}$, while $\mathrm{d}\mu_{\mathrm{Haar}}(U)$ now denotes the Haar probability measure on $O(N)$. For the latter, fixing $M,N\in\mathbb{N}$ such that $0\leq M\leq N$, letting $\eta=\mathrm{diag}(-I_M,I_{N-M})$ be a pseudo-metric tensor, and taking $A$ and $B$ to be $N\times N$ complex Hermitian matrices with eigenvalues
\begin{align*}
&a_1<\cdots<a_M<0<a_{M+1}<\cdots<a_N,
\\ &b_1<\cdots<b_M<0<b_{M+1}<\cdots<b_N,
\end{align*}
the hyperbolic HCIZ integral satisfies the relation
\begin{equation*}
\int_{U(\eta)/U(1)^N}e^{-\Tr(AUBU^{-1})}\,\mathrm{d}\tilde{\mu}_{\mathrm{Haar}}(U)\propto\frac{\Det[e^{-a_jb_k}]_{j,k=1}^M\Det[e^{-a_jb_k}]_{j,k=M+1}^N}{\Delta_N(a)\Delta_N(b)},
\end{equation*}
where $U(\eta):=\{U\in GL_N(\mathbb{C})\,|\,U^\dagger\eta U=U\eta U^\dagger=\eta\}$ is the pseudo-unitary group with metric $\eta$ and $\mathrm{d}\tilde{\mu}_{\mathrm{Haar}}(U)$ is the Haar probability measure on the quotient space $U(\eta)/U(1)^N$.

The orthogonal Harish-Chandra integral was used in \citep{FILZ19} to study the eigenvalue statistics of the antisymmetrised matrix product ensembles of Definition \ref{def2} and it was shown in \citep{FIL18} that the eigenvalue j.p.d.f. of the Hermitised matrix product ensemble of Definition~\ref{def1} with $m\in\mathbb{N}$ and $\nu_1=\cdots=\nu_m=0$ can be obtained via repeated applications of the relation hyperbolic HCIZ integral. This marked the first instance of an HCIZ-type integral with non-compact domain of integration being successfully applied to the study of matrix product ensembles.

\subsection{Outline and summary of results} \label{s1.2}
The eigenvalues of the Hermitised matrix product ensembles are supported on the real line while those of the antisymmetrised matrix product ensembles are supported on the imaginary axis. Thus, they are fully characterised by their mixed moments
\begin{align}
m_{k_1,\ldots,k_n}^{(\mathcal{H}_m)}&:=\left\langle\prod_{s=1}^n\Tr \mathcal{H}_m^{k_s}\right\rangle, \label{eq1.5}
\\ m_{k_1,\ldots,k_n}^{(\mathcal{J}_m)}&:=\left\langle\prod_{s=1}^n\Tr \mathcal{J}_m^{k_s}\right\rangle, \label{eq1.6}
\end{align}
where $k_1,\ldots,k_n\in\mathbb{N}$ and $\langle\,\cdot\,\rangle$ denotes an average with respect to the appropriate p.d.f. (one simply takes a product of the p.d.f.s in Definition \ref{def1}, respectively \ref{def2}, and then multiplies by a Dirac delta enforcing equation \eqref{eq1.2}, respectively \eqref{eq1.4}). Our interest lies in these mixed moments, the corresponding mixed cumulants, and their respective generating functions. The mixed cumulants are defined implicitly by the moment-cumulants relation \citep[Ch.~2]{McC87}
\begin{equation} \label{eq1.7}
m_{k_1,\ldots,k_n}=\sum_{K\vdash\{k_1,\ldots,k_n\}}\prod_{\kappa_i\in K}c_{\kappa_i},
\end{equation}
where $K\vdash\{k_1,\ldots,k_n\}$ means that $K$ is a partition of $\{k_1,\ldots,k_n\}$, i.e., $K=\{\kappa_i\}_{i=1}^m$ for some $m\in[n]$ such that the disjoint union $\kappa_1\sqcup\cdots\sqcup\kappa_m$ is equal to $\{k_1,\ldots,k_n\}$. The corresponding generating functions are
\begin{align}
U_n(x_1,\ldots,x_n)&:=\sum_{k_1,\ldots,k_n=1}^\infty\frac{m_{k_1,\ldots,k_n}}{x_1^{k_1+1}\cdots x_n^{k_n+1}}. \label{eq1.8}
\\ W_n(x_1,\ldots,x_n)&:=\sum_{k_1,\ldots,k_n=1}^\infty\frac{c_{k_1,\ldots,k_n}}{x_1^{k_1+1}\cdots x_n^{k_n+1}}. \label{eq1.9}
\end{align}
We refer to them as unconnected and connected $n$-point correlators, respectively.

As the matrices $X_i$ in definitions \ref{def1} and \ref{def2} are Ginibre, their entries are independently and identically distributed (i.i.d.) Gaussian variables. Thus, the mixed moments and cumulants of the Hermitised and antisymmetrised matrix product ensembles may be computed via the Isserlis--Wick theorem \citep{Iss18}, \citep{Wic50}:
\begin{theorem}[Isserlis--Wick]
For $k\in\mathbb{N}$, let $x_1,\ldots,x_k$ be centred normal variables. Then, if $k$ is odd, the expectation $\mean{x_1\cdots x_k}$ is equal to zero, while for $k$ even, it decomposes as
\begin{equation*}
\mean{x_1\cdots x_k}=\sum_{\sigma\in\mathcal{P}_k}\prod_{\{u,v\}\in\sigma}\langle x_ux_v\rangle,
\end{equation*}
where $\mathcal{P}_k$ is the set of pairings of $[k]:=\{1,\ldots,k\}$.
\end{theorem}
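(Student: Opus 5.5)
We prove the Isserlis--Wick theorem for a jointly Gaussian centred vector $(x_1,\ldots,x_k)$; this is the setting in which it will be applied, since the entries of the Ginibre and GUE matrices are jointly Gaussian. The tool is the moment generating function. For a centred Gaussian vector with covariance $C_{uv}=\langle x_ux_v\rangle$ we have
\begin{equation*}
\Big\langle e^{\sum_{u=1}^k t_ux_u}\Big\rangle=\exp\Big[\tfrac12\sum_{u,v=1}^k C_{uv}t_ut_v\Big].
\end{equation*}
This follows by diagonalising $C$ (or taking a square root $C=AA^T$, $x=Az$ with $z$ standard) and completing the square in a one-dimensional Gaussian integral. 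For complex variables, the same identity holds by multilinearity once we write each complex entry in terms of its real and imaginary parts. The mixed moment is then
\begin{equation*}
\langle x_1\cdots x_k\rangle=\partial_{t_1}\cdots\partial_{t_k}\big|_{t=0}\exp\Big[\tfrac12\textstyle\sum_{u,v}C_{uv}t_ut_v\Big].
\end{equation*}

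Next we expand the exponential as $\sum_{n\ge0}\frac{1}{n!2^n}\big(\sum_{u,v}C_{uv}t_ut_v\big)^n$. Each term is homogeneous of degree $2n$ in $t$. Applying $\partial_{t_1}\cdots\partial_{t_k}$ and setting $t=0$ only sees the part that is multilinear of degree $k$. Hence the result vanishes for odd $k$, and for even $k$ only $n=k/2$ contributes.

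Expanding $(\sum_{u,v}C_{uv}t_ut_v)^{k/2}$ gives a sum over ordered sequences of $k/2$ ordered pairs $(u_j,v_j)$. The coefficient of $t_1\cdots t_k$ collects exactly those sequences in which $\{u_1,v_1,\ldots,u_{k/2},v_{k/2}\}=[k]$. Each such sequence determines a pairing $\sigma\in\mathcal{P}_k$. Conversely, each pairing arises from $(k/2)!\,2^{k/2}$ sequences, coming from the ordering of the blocks and the orientation within each block. By symmetry of $C$, every one of these sequences contributes $\prod_{\{u,v\}\in\sigma}C_{uv}$. Since $\partial_{t_1}\cdots\partial_{t_k}(t_1\cdots t_k)=1$, the factor $(k/2)!\,2^{k/2}$ cancels the prefactor $\frac{1}{(k/2)!\,2^{k/2}}$, and we obtain
\begin{equation*}
\langle x_1\cdots x_k\rangle=\sum_{\sigma\in\mathcal{P}_k}\prod_{\{u,v\}\in\sigma}\langle x_ux_v\rangle.
\end{equation*}
The same argument covers repeated variables, since we index by positions $1,\ldots,k$ rather than by distinct variables.

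The main obstacle is the combinatorial bookkeeping in the last step, namely checking that the multiplicity $(k/2)!\,2^{k/2}$ is exactly uniform across pairings. Alternatively, one can avoid this by induction on $k$ using Gaussian integration by parts, $\langle x_1F(x)\rangle=\sum_v C_{1v}\langle\partial_{x_v}F\rangle$, with $F=x_2\cdots x_k$. This pairs $x_1$ with each $x_v$ in turn and reduces to $k-2$ variables, matching the recursive structure of $\mathcal{P}_k$.
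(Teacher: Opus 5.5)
Your proof is correct: it is the standard generating-function argument, and all the details check out. The degree count kills odd $k$, and each pairing is hit by exactly $(k/2)!\,2^{k/2}$ ordered sequences. These are all distinct because the blocks are disjoint and $u_j\neq v_j$, so the prefactor cancels exactly. There is no route in the paper to compare against: the paper does not prove this theorem. It quotes it as a classical result, citing Isserlis and Wick, so your write-up supplies a proof the paper omits.

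Three remarks on what your version makes explicit.

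First, you were right to assume the vector is \emph{jointly} Gaussian. The statement as phrased (``centred normal variables'') is false without that hypothesis. Take $Z$ standard normal and $\epsilon=\pm1$ an independent fair sign. Then $Z,Z,\epsilon Z,\epsilon Z$ are each centred normal, yet $\langle Z\cdot Z\cdot\epsilon Z\cdot\epsilon Z\rangle=3$, while the pairing sum gives $\langle Z^2\rangle\langle(\epsilon Z)^2\rangle+2\langle Z\,\epsilon Z\rangle^2=1$. In the paper's applications every matrix entry is a real-linear combination of independent real Gaussians, so joint Gaussianity does hold there.

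Second, the complex case is cleanest at the level of the final identity rather than the moment generating function. Write $x_u=\sum_a A_{ua}y_a$ with $y$ a real Gaussian vector, and apply the real theorem with repeated variables allowed. Expanding multilinearly, the pairing structure factorises the sums over the $a_u$, which gives the formula with the bilinear covariances $\langle x_ux_v\rangle$, not $\langle x_u\overline{x_v}\rangle$. Multilinearity by itself does not justify the moment generating function identity for complex $x_u$; that step also needs the real formula extended to complex arguments by analyticity. The bilinear form is exactly what the paper uses, e.g.\ in $\langle (X_u^\dagger)_{ij}(X_u)_{kl}\rangle$ and $\langle H_{ij}H_{kl}\rangle$.

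Third, your alternative via Gaussian integration by parts is the same mechanism the paper uses in Sections~\ref{s4} and~\ref{s5} to derive its loop equations, so it fits naturally with the rest of the paper.
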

In sections \ref{s2} and \ref{s3}, we use this theorem to show that the mixed moments and cumulants of $\mathcal{H}_m,\mathcal{J}_m$ have (terminating) genus expansions with coefficients being enumerations of certain $(m+1)$-coloured ribbon graphs, equivalently constellations. These genus expansions carry over to the $n$-point correlators and we have in particular that there exist \textit{correlator expansion coefficients} independent of $N$ such that
\begin{equation} \label{eq1.10}
W_n(x_1,\ldots,x_n)=N^{2-n}\sum_{l=0}^\infty\frac{W_n^l(x_1,\ldots,x_n)}{N^l};
\end{equation}
note that the equivalent expansion for $U_n(x_1,\ldots,x_n)$ is of order $N^2$, rather than $N^{2-n}$.

In sections \ref{s4} and \ref{s5}, we transition to a loop equation analysis of the objects of interest, specialising to $m=1$. The loop equation formalism is a storied tool that is, at its core, essentially an advanced application of integration by parts that relates the quantities defined above to products of similar objects with smaller index or variable sets. Loop equations are closely related to the concepts of Ward identities, Virasoro constraints, Schwinger--Dyson equations, Tutte equations, and topological recursion (see, e.g., \citep{Mig83}, \citep{ACKM93}, \citep{EO09}, \citep{Oxf15}).

In this work, we begin with loop equations on the mixed moments $m_{k_1,\ldots,k_n}$ of the ensembles at hand. Although these relations connect complicated moments to simpler ones, they do not result in closed recursions because the number of independent relations is always less than the number of indeterminate moments. Hence, we proceed by converting said loop equations into equivalents for the unconnected correlators $U_n(x_1,\ldots,x_n)$ and then, through an analogue of the moment-cumulants relation, loop equations for the connected correlators $W_n(x_1,\ldots,x_n)$. Then, inserting the genus expansion \eqref{eq1.10} and comparing terms of equal order in $N$ yields loop equations on the coefficients $W_n^l(x_1,\ldots,x_n)$ that are finally closed, meaning that they can be solved. We conclude each of sections \ref{s4} and \ref{s5} by solving the simplest of these loop equations and making explicit the asymptotic moment and covariance generating functions $W_1^0(x_1)$ and $W_2^0(x_1,x_2)$.

\setcounter{equation}{0}
\section{Genus expansions for Hermitised matrix product ensembles} \label{s2}
It is known that the mixed moments and cumulants of the Laguerre unitary ensemble (LUE) and GUE are enumerations of ribbon graphs weighted by powers of $N$. In the following, we first show how the same quantities for the Hermitised matrix product ensembles enumerate amalgamations of the LUE and GUE ribbon graphs, thereby proving that they have terminating genus expansions. Then, in \S\ref{s2.2}, we give a formulation of our results in terms of constellations \citep{LZ04}.

\subsection{Ribbon graphs for Hermitised matrix product ensembles} \label{s2.1}
We begin with the elementary observation that since the matrices $H,X_1,\ldots,X_m$ within Definition \ref{def1} are independent, writing out the mixed moment $m_{k_1,\ldots,k_n}^{(\mathcal{H}_m)}$ in terms of the entries of $\mathcal{H}_m$ allows one to decouple the average \eqref{eq1.5} into a sum of a product of $m+1$ distinct averages over each of $H,X_1,\ldots,X_m$.

\begin{lemma} \label{Lemma1}
Fix $m,n,k_1,\ldots,k_n\in\mathbb{N}$ and let $\mathcal{H}_m$ be as in Definition \ref{def1}. For $s\in[n]$ and $t\in[k_s-1]$, set $i_{k_s}^{(s;-m-1)}:=i_1^{(s;m+1)}$ and $i_t^{(s;-m-1)}:= i_{t+1}^{(s;m+1)}$. Then, we have that
\begin{equation} \label{eq2.1}
m_{k_1,\ldots,k_n}^{(\mathcal{H}_m)}=\sum_{i_b^{(a;\pm c)}=1}^{N_{c-1}}\mean{\prod_{s=1}^n\prod_{t=1}^{k_s}H_{i_t^{(s;1)}i_t^{(s;-1)}}}\prod_{u=1}^m\mean{\prod_{s=1}^n\prod_{t=1}^{k_s}(X^\dagger_u)_{i_t^{(s;u+1)}i_t^{(s;u)}}(X_u)_{i_t^{(s;-u)}i_t^{(s;-u-1)}}},
\end{equation}
where the sum is taken over $c\in[m+1]$, $a\in[n]$, and $b\in[k_a]$.
\end{lemma}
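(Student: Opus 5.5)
The plan is to expand each trace $\Tr\mathcal{H}_m^{k_s}$ into a sum over matrix indices, then use independence of $H,X_1,\ldots,X_m$ to factor the expectation. First I would write $\mathcal{H}_m=X_m^\dagger\cdots X_1^\dagger HX_1\cdots X_m$ as a product of $2m+1$ matrices and expand a single factor entrywise. I would label the index between $X_u^\dagger$ and $X_{u-1}^\dagger$ (counting from the right of the left half) by $i^{(s;u)}$, the one between $H$ and $X_1$ by $i^{(s;-1)}$, and the one between $X_u$ and $X_{u+1}$ by $i^{(s;-u-1)}$. Concretely, the $t$-th copy of $\mathcal{H}_m$ in $\Tr\mathcal{H}_m^{k_s}$ is
\[
(\mathcal{H}_m)_{i_t^{(s;m+1)}i_t^{(s;-m-1)}}=\sum (X_m^\dagger)_{i_t^{(s;m+1)}i_t^{(s;m)}}\cdots(X_1^\dagger)_{i_t^{(s;2)}i_t^{(s;1)}}H_{i_t^{(s;1)}i_t^{(s;-1)}}(X_1)_{i_t^{(s;-1)}i_t^{(s;-2)}}\cdots(X_m)_{i_t^{(s;-m)}i_t^{(s;-m-1)}}.
\]
Here I would check the ranges against the dimensions: $X_u$ is $N_{u-1}\times N_u$, so $i^{(s;\pm c)}$ runs over $[N_{c-1}]$. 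For instance, $i^{(s;\pm1)}$ ranges up to $N_0$, which is the size of $H$, and $i^{(s;\pm(m+1))}$ ranges up to $N_m=N$.

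Next, the trace of the $k_s$-fold product is cyclic. The column index of the $t$-th copy must equal the row index of the $(t+1)$-th copy, and the last column index must equal the first row index. This is exactly the identification $i_t^{(s;-m-1)}=i_{t+1}^{(s;m+1)}$ for $t<k_s$ and $i_{k_s}^{(s;-m-1)}=i_1^{(s;m+1)}$ imposed in the statement. So the outermost indices are not independent summation variables; they are merely relabelled.

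Finally, I would take the product over $s\in[n]$ and move the (finite) sum outside the expectation. The scalar entries commute, so each summand can be regrouped by matrix: all the $H$ entries together, and for each $u$ all the $X_u^\dagger$ and $X_u$ entries together. Since $H,X_1,\ldots,X_m$ are independent, the expectation of the product factorises into $m+1$ expectations, which gives \eqref{eq2.1}.

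The only real obstacle is bookkeeping: making sure the sign and superscript conventions give consistent ranges $N_{c-1}$, and that the cyclic identification is handled so that no index is double-summed. I would verify this by checking $m=1$, $n=1$, $k_1=1$ explicitly. The dagger entries $(X_u^\dagger)_{ab}=\overline{(X_u)_{ba}}$ can be left as they are, since the lemma keeps them in that form.
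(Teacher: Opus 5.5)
Your proposal is correct and follows the paper's proof essentially step for step. You expand each $\Tr\mathcal{H}_m^{k_s}$ entrywise with the cyclic identification of the outer indices, and substitute the $(2m+1)$-fold entrywise product for each factor of $\mathcal{H}_m$, with ranges $[N_{c-1}]$. You then pull the finite sum outside the average and factorise it using the mutual independence of $H,X_1,\ldots,X_m$.
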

\begin{proof}
Taking equation \eqref{eq1.5} as the definition of $m_{k_1,\ldots,k_n}^{(\mathcal{H}_m)}$ and inserting the expansion
\begin{equation*}
\Tr \mathcal{H}_m^{k_s}=\sum_{i_1^{(s)},\ldots,i_{k_s}^{(s)}=1}^N(\mathcal{H}_m)_{i_1^{(s)}i_2^{(s)}}(\mathcal{H}_m)_{i_2^{(s)}i_3^{(s)}}\cdots(\mathcal{H}_m)_{i_{k_s}^{(s)}i_1^{(s)}}
\end{equation*}
shows that
\begin{equation*}
m_{k_1,\ldots,k_n}^{(\mathcal{H}_m)}=\mean{\prod_{s=1}^n\Tr \mathcal{H}_m^{k_s}}=\left(\prod_{a=1}^n\prod_{b=1}^{k_a}\sum_{i_b^{(a)}=1}^N\right)\mean{\prod_{s=1}^n\prod_{t=1}^{k_s}(\mathcal{H}_m)_{i_t^{(s)}i_{t+1}^{(s)}}},
\end{equation*}
where we have set $i_{k_s+1}^{(s)}:=i_1^{(s)}$ for all $s\in[n]$. Making the replacement $(\mathcal{H}_m)_{i_t^{(s)}i_{t+1}^{(s)}}\mapsto (\mathcal{H}_m)_{i_t^{(s;m+1)}i_{t+1}^{(s;m+1)}}=(\mathcal{H}_m)_{i_t^{(s;m+1)}i_t^{(s;-m-1)}}$ and using the definition of $\mathcal{H}_m$ to write
\begin{multline} \label{eq2.2}
(\mathcal{H}_m)_{i_t^{(s;m+1)}i_t^{(s;-m-1)}}=(X_m^{\dagger})_{i_t^{(s;m+1)}i_t^{(s;m)}}(X_{m-1}^{\dagger})_{i_t^{(s;m)}i_t^{(s;m-1)}}\cdots
\\\cdots(X_1^{\dagger})_{i_t^{(s;2)}i_t^{(s;1)}}H_{i_t^{(s;1)}i_t^{(s;-1)}}(X_1)_{i_t^{(s;-1)}i_t^{(s;-2)}}\cdots(X_m)_{i_t^{(s;-m)}i_t^{(s;-m-1)}}
\end{multline}
then shows that
\begin{align*}
m_{k_1,\ldots,k_n}^{(\mathcal{H}_m)}&=\sum_{i_b^{(a;\pm c)}=1}^{N_{c-1}}\left\langle\prod_{s=1}^n\prod_{t=1}^{k_s}(X_m^{\dagger})_{i_t^{(s;m+1)}i_t^{(s;m)}}(X_{m-1}^{\dagger})_{i_t^{(s;m)}i_t^{(s;m-1)}}\cdots\right.
\\&\hspace{10em}\left.\cdots(X_1^{\dagger})_{i_t^{(s;2)}i_t^{(s;1)}}H_{i_t^{(s;1)}i_t^{(s;-1)}}(X_1)_{i_t^{(s;-1)}i_t^{(s;-2)}}\cdots(X_m)_{i_t^{(s;-m)}i_t^{(s;-m-1)}}\right\rangle .
\end{align*}
As the random matrices $H,X_1,\ldots,X_m$ are assumed to be (mutually) independent, the average in the above can be split into the product of averages shown on the right-hand side of equation \eqref{eq2.1}, thereby proving the latter formula.
\end{proof}

Each average on the right-hand side of equation \eqref{eq2.1} contains a product of centred normal variables, so we may apply the Isserlis--Wick theorem to deconstruct them into sums of products of covariances when $k_1+\cdots+k_n$ is even (otherwise, the average over $H$ vanishes). Such covariances are well known to represent ribbons of orientable ribbon graphs (see, e.g., \citep{LZ04}, \citep[Sec.~3.3]{Rah22}, \citep{RGM26}), with the summation indices dictating how the ribbon edges are coloured and how the ribbons must be glued together. In this way, one is simply required to appropriately glue together LUE and GUE ribbon graphs.

To visualise each summand of equation \eqref{eq2.1} as a collection of ribbon graphs, we first draw a $k_s(2m+1)$-gon for each $s\in[n]$ to represent $\Tr\mathcal{H}_m^{k_s}$. The edges of these polygons represent entries of $X_m^\dagger,\ldots,X_1^\dagger,H,X_1,\ldots,X_m$ in that cyclic order and the vertices are labelled by the summation indices $i_t^{(s;\pm u)}$ in accordance with equation \eqref{eq2.2}. Then, the Isserlis--Wick theorem says that the product of averages on the right-hand side of equation \eqref{eq2.1} is given by a sum of a product of covariances, where the sum is taken over all possible ways of joining together pairs of polygon edges with ribbons such that edges representing entries of $X_i$ are identified with those representing $X_i^\dagger$ and edges representing entries of $H$ are paired among themselves. This latter constraint can be satisfied by assigning, for each $u\in[m+1]$, a unique colour $\phi_u$ to all polygon vertices labelled by indices of the form $i_t^{(s;\pm u)}$, with $s\in[n]$ and $t\in[k_s]$, and then only allowing ribbon graphs whose boundaries have well-defined colours inherited from the vertices they pass through; see Figure \ref{fig1} below for an example.

\begin{definition} \label{def4}
Fix $m,n,k_1,\ldots,k_n\in\mathbb{N}$ such that $\overline{k}_n:=k_1+\cdots+k_n$ is even and let $\phi_1,\ldots,\phi_{m+1}$ be unique colours. Define $\mathcal{P}_{k_1,\ldots,k_n}^{(m)}$ to be the set of $(m+1)$-coloured orientable ribbon graphs that can be constructed as follows:
\begin{enumerate}
\item For each $s\in[n]$, draw an oriented $k_s(2m+1)$-gon with a marked vertex (to be labelled $i_1^{(s;m+1)}$).
\item Colour the vertices of each polygon by $\phi_{m+1},\phi_m,\ldots,\phi_1,\phi_1,\ldots,\phi_m$ repeated in cyclic order, starting at the marked vertex.
\item Glue untwisted ribbons (i.e., ribbons that do not identify edges representing entries of $X_u$ to other such edges, and likewise for entries of $X_u^\dagger$) to pairs of polygon edges such that the boundaries of the resulting ribbon graphs inherit well-defined colours from the vertices they pass through.
\end{enumerate}
Let $\mathcal{P}_{k_1,\ldots,k_n}^{(m),c}\subseteq\mathcal{P}_{k_1,\ldots,k_n}^{(m)}$ denote the subset of connected ribbon graphs.
\end{definition}

\begin{figure}[H]
        \centering
\captionsetup{width=.9\linewidth}
        \includegraphics[width=0.8\textwidth]{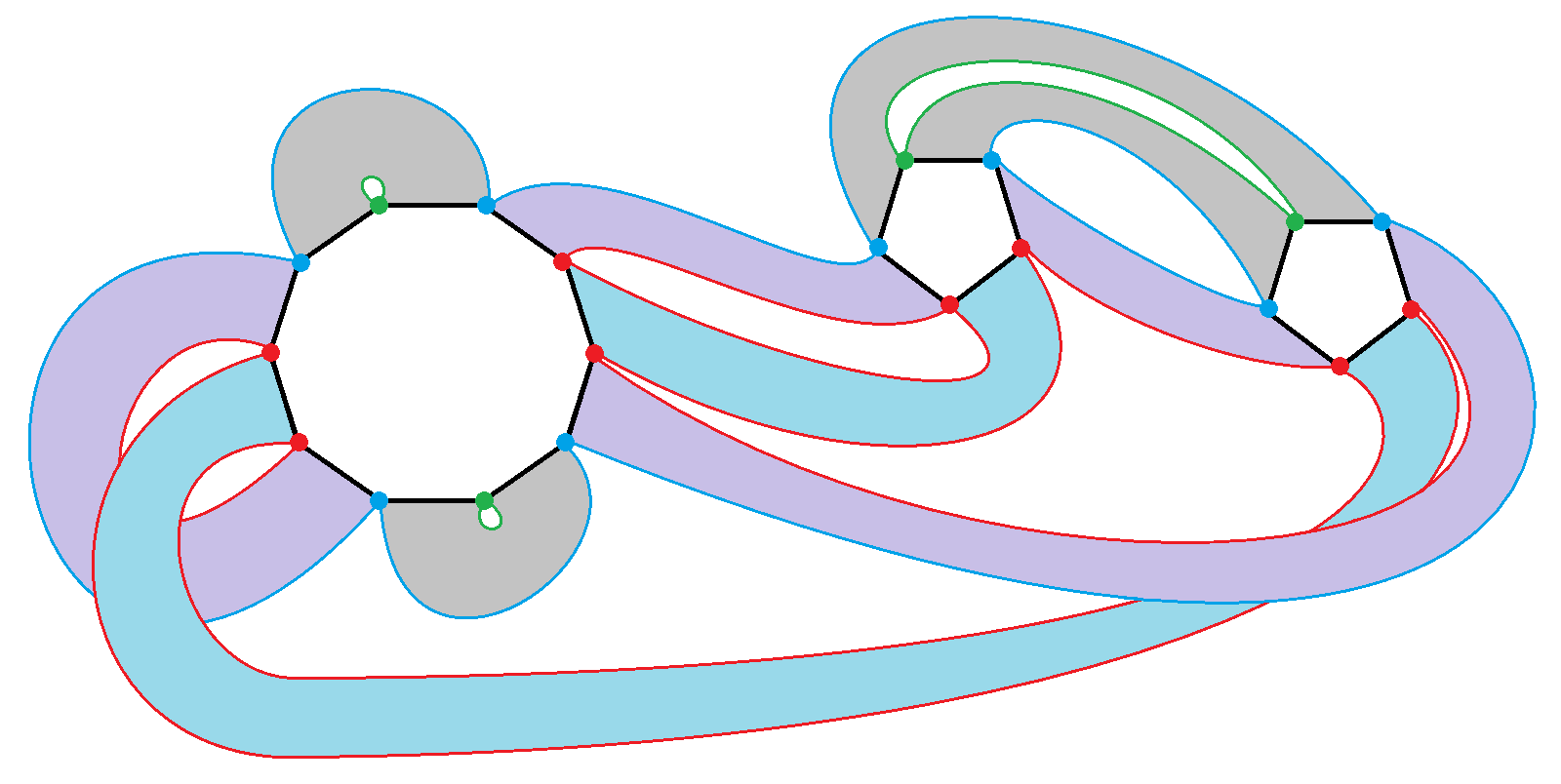}
        \caption{In computing $m_{2,1,1}^{(\mathcal{H}_2)}=\langle\Tr(\mathcal{H}_2^2)\Tr(\mathcal{H}_2)^2\rangle$, one first draws a decagon, representing $\Tr(\mathcal{H}_2^2)$, and two pentagons, each representing a factor of $\Tr(\mathcal{H}_2)$, with their vertices coloured as shown (the colours $\phi_1,\phi_2,\phi_3$ are respectively red, blue, and green). Pictured is a ribbon graph representing a pairwise identification of polygon edges using untwisted ribbons whose sides inherit well-defined colours from the vertices. The interior colours are purely for visual clarity and correspond to covariances involving $H$ (blue), $X_1,X_1^\dagger$ (purple), and $X_2,X_2^\dagger$ (grey).} \label{fig1}
\end{figure}

As the entries of $H,X_1,\ldots,X_m$ are independent (up to symmetry in the case of $H$), the covariances represented by the ribbons described above are given by
\begin{align}
\left\langle H_{ij}H_{kl}\right\rangle &= \frac{1}{N_0}\chi_{k=j}\chi_{l=i},
\\\left\langle (X_u^\dagger)_{ij}(X_u)_{kl}\right\rangle&=\frac{1}{\sqrt{N_uN_{u-1}}}\chi_{k=j}\chi_{l=i}, \label{eq2.4}
\end{align}
where $u\in[m]$ and we recall that $\chi_{k=j},\chi_{l=i}$ are indicator functions. Thus, the entire product of covariances corresponding to a ribbon graph either equals $(N_0N_1\cdots N_{m-1}\sqrt{N_m})^{-\overline{k}_n}$ or vanishes, depending on how the indices in the sum \eqref{eq2.1} are constrained. It turns out that said sum is non-vanishing only when summation indices are identified according to how boundaries of the ribbon graph prescribe identifications of polygon vertices. This prescription is the key benefit of the ribbon graph formulation.

\begin{lemma} \label{Lemma2}
Fix $m,n,k_1,\ldots,k_n\in\mathbb{N}$. If $\overline{k}_n=k_1+\cdots+k_n$ is odd,
\begin{equation*}
m_{k_1,\ldots,k_n}^{(\mathcal{H}_m)}=c_{k_1,\ldots,k_n}^{(\mathcal{H}_m)}=0,
\end{equation*}
else
\begin{align}
m_{k_1,\ldots,k_n}^{(\mathcal{H}_m)}&=(N_0N_1\cdots N_{m-1}\sqrt{N_m})^{-\overline{k}_n}\sum_{\Gamma\in\mathcal{P}^{(m)}_{k_1,\ldots,k_n}}N_0^{V_0(\Gamma)}N_1^{V_1(\Gamma)}\cdots N_m^{V_m(\Gamma)}, \label{eq2.5}
\\ c_{k_1,\ldots,k_n}^{(\mathcal{H}_m)}&=(N_0N_1\cdots N_{m-1}\sqrt{N_m})^{-\overline{k}_n}\sum_{\Gamma\in\mathcal{P}^{(m),c}_{k_1,\ldots,k_n}}N_0^{V_0(\Gamma)}N_1^{V_1(\Gamma)}\cdots N_m^{V_m(\Gamma)}, \label{eq2.6}
\end{align}
where, for $0\leq u\leq m$, $V_u(\Gamma)$ is the number of boundaries of $\Gamma$ that are of the colour $\phi_{u+1}$.
\end{lemma}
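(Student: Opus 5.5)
We start from Lemma \ref{Lemma1} and apply the Isserlis--Wick theorem separately to each of the $m+1$ averages on the right-hand side of \eqref{eq2.1}. The average over $H$ contains $\overline{k}_n$ centred Gaussian factors. If $\overline{k}_n$ is odd it vanishes, so every summand of \eqref{eq2.1} is zero and $m_{k_1,\ldots,k_n}^{(\mathcal{H}_m)}=0$. The cumulants are recovered from the moments by inverting \eqref{eq1.7}, so each cumulant is a sum of products of moments $m_{\kappa}$ over the blocks of partitions of $\{k_1,\ldots,k_n\}$. When $\overline{k}_n$ is odd, every such partition has at least one block with odd total, so every term vanishes and $c_{k_1,\ldots,k_n}^{(\mathcal{H}_m)}=0$.

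For $\overline{k}_n$ even, the Wick expansion of the $X_u$ average only retains pairings that match each $X_u^\dagger$ entry with an $X_u$ entry. The pairings $(X_u^\dagger,X_u)$ and $(X_u,X_u)$ contribute zero, because the complex Ginibre p.d.f.~\eqref{eq1.1} is invariant under $X_u\mapsto e^{\mathrm{i}\theta}X_u$. The $H$ average pairs $H$ entries among themselves. Choosing one Wick pairing in each factor is therefore the same as choosing an untwisted gluing of polygon edges, i.e.\ a ribbon graph $\Gamma\in\mathcal{P}^{(m)}_{k_1,\ldots,k_n}$, and this correspondence is a bijection.

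For a fixed $\Gamma$ there are $\overline{k}_n/2$ ribbons of type $H$ and $\overline{k}_n$ ribbons of type $X_u$ for each $u$, so the covariance prefactors from (2.3)--\eqref{eq2.4} multiply to
\[
N_0^{-\overline{k}_n/2}\prod_{u=1}^m (N_uN_{u-1})^{-\overline{k}_n/2}=(N_0N_1\cdots N_{m-1}\sqrt{N_m})^{-\overline{k}_n}.
\]
The remaining factor is the product of Kronecker deltas $\chi$ summed over all indices $i_b^{(a;\pm c)}$. Each delta identifies the two polygon vertices that a ribbon side glues together, so after summation the indices are constrained to be constant on the equivalence classes generated by these identifications. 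These classes are exactly the boundary components (faces) of $\Gamma$. Every class is monochromatic, because the colouring in Definition~\ref{def4} was designed so that boundaries inherit a well-defined colour. A boundary of colour $\phi_{u+1}$ carries an index ranging over $[N_u]$, so the sum evaluates to $\prod_{u=0}^m N_u^{V_u(\Gamma)}$. Summing over $\Gamma$ gives \eqref{eq2.5}.

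The main technical check is that the identifications made by the deltas really trace out the boundaries. I would verify this locally. A covariance $\langle A_{ij}A'_{kl}\rangle\propto\chi_{k=j}\chi_{l=i}$ identifies the head of one edge with the tail of the other, which is precisely what an untwisted ribbon between two oppositely oriented edges does. This is the same convention as for the GUE and LUE ribbon graphs cited above. It must also be confirmed that the index sets for the $\pm c$ labels coincide, namely that both carry range $N_{c-1}$, which holds by construction.

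For \eqref{eq2.6}, the weight of $\Gamma$ factorises over its connected components. Each component uses a subset of the polygons, i.e.\ a block $\kappa$ of the index set, and contributes the prefactor power $\overline{\kappa}$, so the prefactors multiply correctly. Grouping the graphs in \eqref{eq2.5} by the partition of polygons induced by their connected components shows that the right-hand sides of \eqref{eq2.6} satisfy the moment--cumulant relation \eqref{eq1.7}. Since \eqref{eq1.7} determines the cumulants uniquely by recursion on $n$, these expressions are the cumulants.
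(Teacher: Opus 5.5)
Your proposal is correct and follows the same route as the paper: you Wick-expand each factor of Lemma~\ref{Lemma1}, identify pairings with untwisted colour-respecting gluings, let the deltas collapse indices along boundaries to get $\prod_u N_u^{V_u(\Gamma)}$, and pass to cumulants via the connected-component principle, while making explicit what the paper only states or cites (the prefactor count, the vanishing of odd-total cumulants via the inverted relation \eqref{eq1.7}, and the factorisation argument behind the ``well known fact''). One small slip: the pairings killed by phase invariance are $(X_u^\dagger,X_u^\dagger)$ and $(X_u,X_u)$, not $(X_u^\dagger,X_u)$.
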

\begin{proof}
We first consider the mixed moments. Applying the Isserlis--Wick theorem to the averages in equation \eqref{eq2.1} shows that the first average over $H$ vanishes when $\overline{k}_n$ is odd. For even $\overline{k}_n$, interchanging the order of summation shows that
\begin{equation*}
m_{k_1,\ldots,k_n}^{(\mathcal{H}_m)}=\sum_{\Gamma\in\mathcal{P}_{k_1,\ldots,k_n}^{(m)}}\sum_{i_b^{(a;\pm c)}=1}^{N_{c-1}}(N_0N_1\cdots N_{m-1}\sqrt{N_m})^{-\overline{k}_n}\chi_{\Gamma},
\end{equation*}
where $\chi_{\Gamma}$ is the indicator function that equals one when, upon labelling the polygon vertices of $\Gamma$ by the indices $i_b^{(a;\pm c)}$ according to equation \eqref{eq2.2}, said indices are identified when the corresponding vertices belong to the same boundary of $\Gamma$. After said identification, we have for each $u\in[m+1]$ that the number of free, distinct indices of the form $i_b^{(a;\pm u)}$ is equal to $V_{u-1}(\Gamma)$. Equation \eqref{eq2.5} follows upon observing that
\begin{equation*}
\sum_{i_b^{(a;\pm1)}=1}^{N_0}\cdots\sum_{i_b^{(a;\pm m+1)}=1}^{N_m}\chi_{\Gamma}=N_0^{V_0(\Gamma)}N_1^{V_1(\Gamma)}\cdots N_m^{V_m(\Gamma)}.
\end{equation*}

To obtain the equivalent result for the mixed cumulants, we appeal to the well known fact that when mixed moments $m_{k_1,\ldots,k_n}$ count graphs or graph-like structures on $n$ nodes, the moment-cumulants relation requires that the corresponding mixed cumulants count the subset of graphs that are connected \citep{Smi95}, \citep[Ch.~5]{Sta99}, \citep{WF15}, \citep{Rah22}.
\end{proof}

In Definition \ref{def1}, we require that $N_1,\ldots,N_m$ differ from $N_0$ by fixed parameters $\nu_1,\ldots,\nu_m$. If we further assume that each $\nu_i$ is a constant multiple of $N_0$, we are able to reformulate the above lemma as an elegant genus expansion.
\begin{proposition} \label{prop1}
Let $m,N_0,\ldots,N_m\in\mathbb{N}$ be as in Definition \ref{def1} and suppose further that for each $i\in[m]$, there exists a non-negative constant $\hat{\nu}_i:=N_i/N_0$. Then, for $n,k_1,\ldots,k_n\in\mathbb{N}$ such that $\overline{k}_n=k_1+\cdots+k_n$ is even,
\begin{multline} \label{eq2.7}
c_{k_1,\ldots,k_n}^{(\mathcal{H}_m)}=(\hat{\nu}_1\cdots\hat{\nu}_{m-1}\sqrt{\hat{\nu}_m})^{-\overline{k}_n}N_0^{2-n}\sum_{g=0}^{\tfrac{1}{2}[(m+1/2)\overline{k}_n+1-n-m]}\frac{1}{N_0^{2g}}\sum_{p_1,\ldots,p_m=1}^{\overline{k}_n}\hat{\nu}_1^{p_1}\cdots\hat{\nu}_m^{p_m}
\\\times\#\{\Gamma\in\mathcal{P}_{k_1,\ldots,k_n}^{(m),c}\,:\,g(\Gamma)=g\textrm{ and }V_i(\Gamma)=p_i\textrm{ for each }i\in[m]\},
\end{multline}
where the set $\mathcal{P}^{(m),c}_{k_1,\ldots,k_n}$ and the functions $V_1(\Gamma),\ldots,V_m(\Gamma)$ are as in Lemma \ref{Lemma2} above, $g(\Gamma)$ is the genus of the ribbon graph $\Gamma$, and $\# S$ denotes the cardinality of the set $S$.
\end{proposition}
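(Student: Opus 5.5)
Starting from equation \eqref{eq2.6} of Lemma \ref{Lemma2}, substitute $N_i=\hat{\nu}_iN_0$ for each $i\in[m]$ (with $\hat{\nu}_0=1$). The prefactor becomes
\begin{equation*}
(N_0N_1\cdots N_{m-1}\sqrt{N_m})^{-\overline{k}_n}=(\hat{\nu}_1\cdots\hat{\nu}_{m-1}\sqrt{\hat{\nu}_m})^{-\overline{k}_n}N_0^{-(m+1/2)\overline{k}_n},
\end{equation*}
and each summand becomes $\hat{\nu}_1^{V_1(\Gamma)}\cdots\hat{\nu}_m^{V_m(\Gamma)}N_0^{V(\Gamma)}$, where $V(\Gamma):=V_0(\Gamma)+\cdots+V_m(\Gamma)$ is the total number of boundaries of $\Gamma$. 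We then group the connected ribbon graphs according to $(g(\Gamma),V_1(\Gamma),\ldots,V_m(\Gamma))$, which produces the counting function in \eqref{eq2.7}. It remains to express the power of $N_0$ in terms of the genus and to justify the summation ranges.

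For the genus we use Euler's formula for the connected orientable surface obtained by capping each boundary of $\Gamma$ with a disc. The $n$ polygons act as faces, so $F=n$. Each polygon has $k_s(2m+1)$ edges, and these are identified in pairs, so $E=(2m+1)\overline{k}_n/2$. After gluing, the vertices of the surface correspond exactly to the boundaries of $\Gamma$, so the vertex count is $V(\Gamma)$. Hence
\begin{equation*}
V(\Gamma)-(m+1/2)\overline{k}_n+n=2-2g(\Gamma).
\end{equation*}
Substituting gives $N_0^{V(\Gamma)-(m+1/2)\overline{k}_n}=N_0^{2-n-2g(\Gamma)}$. Factoring out $N_0^{2-n}$ yields the $N_0^{-2g}$ structure of \eqref{eq2.7}.

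The ranges of summation are where the real care is needed. For each $p_i=V_i(\Gamma)$, the lower bound $p_i\geq1$ holds because every colour $\phi_{i+1}$ appears on polygon vertices, so at least one boundary carries it. For the upper bound, each boundary of colour $\phi_{i+1}$ with $i\geq1$ passes through at least one of the $2\overline{k}_n$ vertices of that colour. I expect the intended bound $p_i\leq\overline{k}_n$ to come from a finer observation: each such boundary alternates between $X_i$-type and $X_i^\dagger$-type corners. Alternatively, terms with zero count can simply be included harmlessly; since the counting factor vanishes outside the true range, over-extended ranges are fine.

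For the genus range, $g\geq0$ is automatic. The maximal $g$ arises from the minimal boundary count $V(\Gamma)\geq m+1$, since each of the $m+1$ colours occurs at least once. This gives $2g\leq(m+1/2)\overline{k}_n+1-n-m$, which is exactly the stated upper limit. The only step requiring genuine verification is the vertex identification in Euler's formula, namely that boundaries of the ribbon graph are in bijection with vertices of the glued surface. This is the standard ribbon graph/surface duality already implicit in the proof of Lemma \ref{Lemma2}.
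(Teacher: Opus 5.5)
Your proposal is correct and follows the paper's proof essentially step for step: substitute $N_i=\hat{\nu}_iN_0$ into \eqref{eq2.6}, group the graphs by genus and boundary counts, apply Euler's formula with $n$ faces, $(m+\tfrac12)\overline{k}_n$ edges and $V(\Gamma)$ vertices, and bound the genus using $V(\Gamma)\geq m+1$. The one point to tighten is $p_i\leq\overline{k}_n$. For $i\in[m-1]$, your alternation observation is exactly the needed argument and should be stated as the proof rather than hedged: each boundary of colour $\phi_{i+1}$ alternates between the $\overline{k}_n$ corners lying between $X_{i+1}^\dagger$ and $X_i^\dagger$ and the $\overline{k}_n$ corners lying between $X_i$ and $X_{i+1}$. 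For $i=m$, colour $\phi_{m+1}$ has only $\overline{k}_n$ corners in total (not $2\overline{k}_n$), so that case is immediate. Your fallback of ``including zero terms harmlessly'' cannot replace this bound, since it only works when the stated range already contains every attained value of $V_i(\Gamma)$, and that is precisely what the bound establishes.
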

\begin{proof}
Inserting $N_i=\hat{\nu}_iN_0$ in equation \eqref{eq2.6} shows that
\begin{equation*}
c_{k_1,\ldots,k_n}^{(\mathcal{H}_m)}=(\hat{\nu}_1\cdots \hat{\nu}_{m-1}\sqrt{\hat{\nu}_m})^{-\overline{k}_n}N_0^{-E(\Gamma)}\sum_{\Gamma\in\mathcal{P}^{(m),c}_{k_1,\ldots,k_n}}N_0^{V(\Gamma)}\hat{\nu}_1^{V_1(\Gamma)}\cdots \hat{\nu}_m^{V_m(\Gamma)},
\end{equation*}
where $E(\Gamma)=(m+1/2)\overline{k}_n$ is the total number of ribbons of a given ribbon graph $\Gamma$ and $V(\Gamma)=\sum_{i=0}^m V_i(\Gamma)$ is the total number of boundaries. Grading $\mathcal{P}_{k_1,\ldots,k_n}^{(m),c}$ by genus and number of boundaries of given colours then allows us to replace the sum over $\mathcal{P}_{k_1,\ldots,k_n}^{(m),c}$ by the displayed sum over $g,p_1,\ldots,p_m$. To see the connection with genus, note that each ribbon graph is homeomorphic to some surface obtained by gluing together the polygons mentioned in Definition \ref{def4}. After identification, there are $n$ distinct polygon faces, $E(\Gamma)$ distinct polygon edges, and $V(\Gamma)$ distinct polygon vertices. Thus, by Euler's formula for the Euler characteristic, we have that
\begin{equation*}
2-2g(\Gamma)=n-E(\Gamma)+V(\Gamma)\implies N_0^{V(\Gamma)-E(\Gamma)}=N_0^{2-n-2g(\Gamma)}.
\end{equation*}
This explains the powers of $N_0$ seen in equation \eqref{eq2.7}. The upper terminals of the sums therein follow from consideration of the extremal cases where either no vertices of a chosen colour $\phi_{i+1}$ are identified (so $p_i=\overline{k}_n$) or all vertices of each colour are identified (so $V(\Gamma)=m+1$).
\end{proof}

\subsection{Constellations for Hermitised matrix product ensembles} \label{s2.2}
In this subsection, we give an algebraic formulation of Proposition \ref{prop1}. This is done through the theory of constellations \citep{LZ04}, which are tuples of permutations satisfying certain conditions. We encode the ribbon graphs of the previous subsection according to the following prescription:
\begin{enumerate}
\item Recall that the ribbon graphs drawn from $\mathcal{P}_{k_1,\ldots,k_n}^{(m),c}$ are built from $n$ oriented polygons, each having a marked vertex. Starting at the marked vertex of the first polygon, traverse the boundary of the polygon in the direction of its orientation and assign the label $1$ to the first $(2m+1)$ encountered edges, $2$ to the next $(2m+1)$ edges, and so on. Then, repeat this exercise for the remaining polygons, but with the first $(2m+1)$ edges of the $s$-th polygon being assigned the label $\overline{k}_{s-1}+1$; we write $\overline{k}_t:=k_1+\cdots+k_t$. In the notation of Lemma \ref{Lemma1}, the polygon edges representing entries of the form $H_{i_t^{(s;1)}i_t^{(s;-1)}},(X^\dagger_u)_{i_t^{(s;u+1)}i_t^{(s;u)}},(X_u)_{i_t^{(s;-u)}i_t^{(s;-u-1)}}$ are labelled $\overline{k}_{s-1}+t$.
\item Define $\gamma:=(1,\ldots,k_1)(k_1+1,\ldots,\overline{k}_2)\cdots(\overline{k}_{n-1}+1,\ldots,\overline{k}_n)\in\mathcal{S}_{\overline{k}_n}$ to be a permutation on $[\overline{k}_n]$ that describes traversal around the polygons at hand.
\item We say a polygon edge is of type $M$ if it represents an entry of the matrix $M$. Define $\sigma_0\in\mathcal{S}_{\overline{k}_n}$ to be a fixed-point free involution such that $\sigma_0(a)=b$ if there is a ribbon connecting an edge of type $H$ and label $a$ to one of type $H$ and label $b$.
\item For $i\in[m]$, define $\sigma_i\in\mathcal{S}_{\overline{k}_n}$ to be a permutation such that $\sigma_i(a)=b$ if there is a ribbon connecting an edge of type $X_i$ and label $a$ to one of type $X_i^\dagger$ and label $b$. Note that $\sigma_1,\ldots,\sigma_m$ are not necessarily involutions, nor fixed-point free.
\end{enumerate}
The permutations $\sigma_0,\ldots,\sigma_m$ defined above are enough to uniquely specify each ribbon graph in $\mathcal{P}_{k_1,\ldots,k_n}^{(m),c}$. Moreover, it is straightforward to check that, so long as $\sigma_0$ is a fixed-point free involution and the group $\mean{\gamma,\sigma_0,\ldots,\sigma_m}$ is transitive (transitivity of this group ensures that the associated ribbon graph is connected), any choice of $\sigma_0,\ldots,\sigma_m\in\mathcal{S}_{\overline{k}_n}$ describes a ribbon graph in $\mathcal{P}_{k_1,\ldots,k_n}^{(m),c}$. Thus, defining $\mathcal{S}_{k_1,\ldots,k_n}^{(m),c}$ to be the set of tuples $(\sigma_0,\ldots,\sigma_m)\in\mathcal{S}_{\overline{k}_n}^{m+1}$ such that $\sigma_0$ is a fixed-point free involution and $\mean{\gamma,\sigma_0,\ldots,\sigma_m}$ is transitive, we have that
\begin{equation*}
\mathcal{P}_{k_1,\ldots,k_n}^{(m),c}\simeq \mathcal{S}_{k_1,\ldots,k_n}^{(m),c}.
\end{equation*}

For our reformulation of Proposition \ref{prop1}, it remains to determine $V_0(\Gamma),\ldots,V_m(\Gamma)$ for the ribbon graph $\Gamma\in\mathcal{P}_{k_1,\ldots,k_n}^{(m),c}$ corresponding to a given tuple $(\sigma_0,\ldots,\sigma_m)\in\mathcal{S}_{k_1,\ldots,k_n}^{(m),c}$. Fixing such a tuple and its corresponding ribbon graph $\Gamma$, consider first the set $v_0(\Gamma)$ of vertices of colour $\phi_1$ that sit between polygon edges of type $H$ and $X_1$. The ribbon graph boundaries of colour $\phi_1$ each contain at least one element of $v_0(\Gamma)$ and, moreover, said boundaries act as equivalence classes on $v_0(\Gamma)$. These equivalence classes are seen to be the cycles of $\sigma_1^{-1}\sigma_0$ upon labelling the elements of $v_0(\Gamma)$ by the labels of the polygon edges that they are incident to. Thus, $V_0(\Gamma)=\#\sigma_1^{-1}\sigma_0$, where $\#\tau$ denotes the number of cycles of a permutation $\tau$. Likewise, defining $v_i(\Gamma)$, for $i\in[m-1]$, to be the set of vertices of colour $\phi_{i+1}$ sitting between polygon edges of type $X_i$ and $X_{i+1}$, the above reasoning shows that the boundaries of colour $\phi_{i+1}$ define equivalence classes on $v_i(\Gamma)$ and we consequently have $V_i(\Gamma)=\#\sigma_{i+1}^{-1}\sigma_i$. Finally, the cycles of $\gamma^{-1}\sigma_m$ describe tours of the boundaries of colour $\phi_{m+1}$ upon assigning each vertex of colour $\phi_{m+1}$ the label of the edge of type $X$ that it is incident to. Hence, $V_m(\Gamma)=\#\gamma^{-1}\sigma_m$. Combining all of this together yields the sought algebraic reformulation of Proposition \ref{prop1}.
\begin{proposition} \label{prop2}
Let $m,N_0,\ldots,N_m\in\mathbb{N}$ be as in Definition \ref{def1} and suppose that for each $i\in[m]$, there exists a non-negative constant $\hat{\nu}_i=N_i/N_0$. Then, for $n,k_1,\ldots,k_n\in\mathbb{N}$ such that $\overline{k}_n$ is even,
\begin{multline*}
c_{k_1,\ldots,k_n}^{(\mathcal{H}_m)}=(\hat{\nu}_1\cdots\hat{\nu}_{m-1}\sqrt{\hat{\nu}_m})^{-\overline{k}_n}N_0^{2-n}
\\\times\sum_{g=0}^{\tfrac{1}{2}[(m+1/2)\overline{k}_n+1-n-m]}\frac{1}{N_0^{2g}}\sum_{\substack{(\sigma_0,\ldots,\sigma_m)\\ \in\mathcal{S}_{k_1,\ldots,k_n}^{(m),c}(g)}}\hat{\nu}_1^{\#\sigma_2^{-1}\sigma_1}\cdots\hat{\nu}_{m-1}^{\#\sigma_m^{-1}\sigma_{m-1}}\hat{\nu}_m^{\#\gamma^{-1}\sigma_m},
\end{multline*}
where, for $\mathcal{S}_{k_1,\ldots,k_n}^{(m),c}$ as defined above, we further define
\begin{equation*}
\mathcal{S}_{k_1,\ldots,k_n}^{(m),c}(g):=\Big\{(\sigma_0,\ldots,\sigma_m)\in\mathcal{S}_{k_1,\ldots,k_n}^{(m),c}\,:\,\#\gamma^{-1}\sigma_m+\sum_{i=0}^{m-1}\#\sigma_{i+1}^{-1}\sigma_i=(m+\tfrac{1}{2})\overline{k}_n+2-n-2g\Big\}
\end{equation*}
to be the subset of tuples in $\mathcal{S}_{k_1,\ldots,k_n}^{(m),c}$ that are of genus $g$. Here, the genus $g$ of a tuple $(\sigma_0,\ldots,\sigma_m)$ is defined by the condition in the definition of $\mathcal{S}_{k_1,\ldots,k_n}^{(m),c}(g)$; this condition descends canonically from the definition of the genus of a ribbon graph.
\end{proposition}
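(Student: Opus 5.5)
Proposition \ref{prop2} follows from Proposition \ref{prop1} once the bijection $\mathcal{P}_{k_1,\ldots,k_n}^{(m),c}\simeq\mathcal{S}_{k_1,\ldots,k_n}^{(m),c}$ is made precise and shown to carry the boundary counts and the genus over to the stated cycle statistics. Concretely, I would show that the map $\Gamma\mapsto(\sigma_0,\ldots,\sigma_m)$ built by the prescription above is a bijection. Then I would verify the identities
\[
V_0(\Gamma)=\#\sigma_1^{-1}\sigma_0,\qquad V_i(\Gamma)=\#\sigma_{i+1}^{-1}\sigma_i \ \ (i\in[m-1]),\qquad V_m(\Gamma)=\#\gamma^{-1}\sigma_m .
\]
Finally I would substitute these into equation \eqref{eq2.7}, re-indexing the sum over $\Gamma$ as a sum over tuples.

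\textbf{The bijection.} Each ribbon pairs an $H$-edge with an $H$-edge, or an $X_i$-edge with an $X_i^\dagger$-edge. Every label $a\in[\overline{k}_n]$ carries exactly one edge of each type. Hence the $H$-ribbons are a perfect matching of the labels, i.e.\ a fixed-point free involution $\sigma_0$, and the $X_i$-ribbons are an arbitrary bijection from $X_i$-edges to $X_i^\dagger$-edges, i.e.\ an arbitrary $\sigma_i\in\mathcal{S}_{\overline{k}_n}$. Conversely, any such tuple determines the ribbons. The colouring condition in Definition \ref{def4} is automatic, because the untwisted ribbons between matching edge types always join vertices of equal colour.

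For connectivity, the $n$ polygons are glued into one component exactly when the labels are linked through polygon adjacency (the permutation $\gamma$) and through ribbons (the $\sigma_i$). This is precisely the statement that $\langle\gamma,\sigma_0,\ldots,\sigma_m\rangle$ acts transitively.

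\textbf{The boundary identities.} This is the main obstacle, since it requires tracing boundaries carefully with the orientation conventions. Take a vertex of colour $\phi_{i+1}$ between edges of type $X_i$ and $X_{i+1}$ with label $a$, or its mirror image on the $X^\dagger$ side. Following the boundary, one crosses the $X_{i+1}$-ribbon, which moves label $a$ to $\sigma_{i+1}(a)$ on the $X_{i+1}^\dagger$ side. One then returns through the adjacent $X_i^\dagger$/$X_i$ ribbon, applying $\sigma_i^{-1}$. So each colour-$\phi_{i+1}$ boundary is one cycle of $\sigma_i^{-1}\sigma_{i+1}$, which is conjugate to $\sigma_{i+1}^{-1}\sigma_i$ and therefore has the same number of cycles.

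The case $i=0$ works the same way, with $\sigma_0$ in place of $\sigma_i$ (note $\sigma_0^{-1}=\sigma_0$). For colour $\phi_{m+1}$, the two $X_m$-type edges at a vertex belong to consecutive labels $t$ and $t+1$ on the same polygon. That step is $\gamma$, giving cycles of $\gamma^{-1}\sigma_m$ up to conjugation/inversion. I would check a small example, such as $m=1$, $n=1$, $k_1=2$, to pin down the directions. Only cycle counts are needed, so any consistent choice of convention suffices.

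\textbf{Substitution.} Insert these identities into equation \eqref{eq2.7}. The Euler relation from the proof of Proposition \ref{prop1},
\[
V(\Gamma)=E(\Gamma)+2-n-2g,\qquad E(\Gamma)=(m+\tfrac12)\overline{k}_n ,
\]
turns into the defining condition of $\mathcal{S}_{k_1,\ldots,k_n}^{(m),c}(g)$. Replacing each count $\#\{\Gamma:\dots\}$ weighted by $\hat{\nu}_1^{p_1}\cdots\hat{\nu}_m^{p_m}$ with the sum over tuples in $\mathcal{S}_{k_1,\ldots,k_n}^{(m),c}(g)$ then yields the stated formula, with the same range of $g$.
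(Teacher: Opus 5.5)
Your proposal is correct and follows essentially the same route as the paper. The paper likewise takes the bijection $\mathcal{P}_{k_1,\ldots,k_n}^{(m),c}\simeq\mathcal{S}_{k_1,\ldots,k_n}^{(m),c}$ (connectivity corresponding to transitivity of $\langle\gamma,\sigma_0,\ldots,\sigma_m\rangle$), reads off $V_0(\Gamma)=\#\sigma_1^{-1}\sigma_0$, $V_i(\Gamma)=\#\sigma_{i+1}^{-1}\sigma_i$ and $V_m(\Gamma)=\#\gamma^{-1}\sigma_m$ by labelling the colour-$\phi_{i+1}$ vertices by their incident edges and viewing boundaries as orbits, and then substitutes into Proposition~\ref{prop1} via the Euler relation; your traced cycles of $\sigma_i^{-1}\sigma_{i+1}$ are inverses of the paper's, which is harmless since only cycle counts enter.
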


To conclude, let us elucidate the connection to the constellations of \citep{LZ04}. Defining $\mathcal{C}_{k_1,\ldots,k_n}^{(m),c}$ to be the set of tuples $(\gamma,\alpha_m,\ldots,\alpha_0)\in\mathcal{S}_{\overline{k}_n}^{m+2}$ such that $\alpha_0$ is a fixed-point free involution and $\gamma=(1,\ldots,k_1)(k_1+1,\ldots,\overline{k}_2)\cdots(\overline{k}_{n-1}+1,\ldots,\overline{k}_n)\in\mathcal{S}_{\overline{k}_n}$, we see that the mappings $\alpha_0=\sigma_0$, $\alpha_i=\sigma_i^{-1}\sigma_{i-1}$ for $i\in[m]$, and $\alpha_m=\gamma^{-1}\sigma_m$ define a bijection $\mathcal{S}_{k_1,\ldots,k_n}^{(m),c}\simeq \mathcal{C}_{k_1,\ldots,k_n}^{(m),c}$. Moreover, we have by the conditions on $\sigma_0,\ldots,\sigma_m$ that $\gamma\alpha_m\cdots\alpha_0=\mathrm{id}$ and $\mean{\gamma,\alpha_m,\ldots,\alpha_0}$ is transitive. These are precisely the properties defining constellations, so $\mathcal{C}_{k_1,\ldots,k_n}^{(m),c}$ is the set of $(m+2)$-constellations $(\gamma,\alpha_m,\ldots,\alpha_0)$ with $\gamma,\alpha_0$ obeying the constraints given above. 

\section{Genus expansions for antisymmetrised matrix product ensembles} \label{s3}
We now move on to the mixed moments and cumulants of the antisymmetrised matrix product ensembles of Definition \ref{def2}. The key differences here are that the matrices $X_1,\ldots,X_m$ are now real and of even dimension, and that extra symmetries are now enforced due to the (sparse) structure of $J$. As in Section \ref{s2}, we begin by developing ribbon graph characterisations of the mixed moments and cumulants. Then, in \S\ref{s3.2}, we give a reformulation of our results in terms of constellations.

\setcounter{equation}{0}
\subsection{Ribbon graphs for antisymmetrised matrix product ensembles} \label{s3.1}
We take $N_0,\ldots,N_m$ to be positive even integers with $N_1,\ldots,N_m\ge N_0$. Then, with $J,X_1,\ldots,X_m$ as in Definition~\ref{def2}, we study the mixed moments and cumulants of $\mathcal{J}_m$. In what follows, the only properties of $J$ that will be used is that $J^T=-J$ and $J^2=-I_{N_0}$. We begin with an analogue of Lemma \ref{Lemma1}.

\begin{lemma} \label{Lemma3}
Fix $m,n,k_1,\ldots,k_n\in\mathbb{N}$. The mixed moment $m_{k_1,\ldots,k_n}^{(\mathcal{J}_m)}$ vanishes if any of the $k_1,\ldots,k_n$ are odd. Otherwise, we have that
\begin{equation} \label{eq3.1}
m_{k_1,\ldots,k_n}^{(\mathcal{J}_m)}=\sum_{i_b^{(a;\pm c)}=1}^{N_{c-1}}\mean{\prod_{s=1}^n\prod_{t=1}^{k_s}J_{i_t^{(s;1)}i_t^{(s;-1)}}}\prod_{u=1}^m\mean{\prod_{s=1}^n\prod_{t=1}^{k_s}(X^T_u)_{i_t^{(s;u+1)}i_t^{(s;u)}}(X_u)_{i_t^{(s;-u)}i_t^{(s;-u-1)}}},
\end{equation}
where the sum is again taken over $c\in[m+1]$, $a\in[n]$, and $b\in[k_a]$ and we retain the identifications $i_{k_s}^{(s;-m-1)}=i_1^{(s;m+1)}$ and $i_t^{(s;-m-1)}=i_{t+1}^{(s;m+1)}$ for $s\in[n]$ and $t\in[k_s-1]$.
\end{lemma}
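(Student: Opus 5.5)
The plan has two parts. First comes the index expansion, which copies the proof of Lemma \ref{Lemma1} word for word. Then comes the vanishing statement, which I would derive from the antisymmetry of $\mathcal{J}_m$ rather than from Wick contractions.

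For the expansion, insert
\begin{equation*}
\Tr\mathcal{J}_m^{k_s}=\sum_{i_1^{(s)},\ldots,i_{k_s}^{(s)}=1}^N(\mathcal{J}_m)_{i_1^{(s)}i_2^{(s)}}\cdots(\mathcal{J}_m)_{i_{k_s}^{(s)}i_1^{(s)}}
\end{equation*}
into \eqref{eq1.6}. Relabel the indices as $i_t^{(s;m+1)}$, using the stated cyclic identifications $i_t^{(s;-m-1)}=i_{t+1}^{(s;m+1)}$. Each entry is then expanded as the analogue of \eqref{eq2.2}:
\begin{equation*}
(\mathcal{J}_m)_{i_t^{(s;m+1)}i_t^{(s;-m-1)}}=(X_m^T)_{i_t^{(s;m+1)}i_t^{(s;m)}}\cdots(X_1^T)_{i_t^{(s;2)}i_t^{(s;1)}}J_{i_t^{(s;1)}i_t^{(s;-1)}}(X_1)_{i_t^{(s;-1)}i_t^{(s;-2)}}\cdots(X_m)_{i_t^{(s;-m)}i_t^{(s;-m-1)}}.
\end{equation*}
The index $i^{(s;\pm c)}$ ranges over $[N_{c-1}]$, which is dictated by the dimensions of $X_{c-1},X_c$. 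The matrices $X_1,\ldots,X_m$ are independent, and $J$ is deterministic, so its "average" is just the product of its entries. Hence the expectation factorises into the product displayed in \eqref{eq3.1}. Nothing new is needed here beyond replacing ${}^\dagger$ by ${}^T$.

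For the vanishing claim, use only $J^T=-J$. This gives
\begin{equation*}
\mathcal{J}_m^T=X_m^T\cdots X_1^TJ^TX_1\cdots X_m=-\mathcal{J}_m,
\end{equation*}
so $\mathcal{J}_m$ is real antisymmetric for every realisation of $X_1,\ldots,X_m$. For odd $k$ it follows that
\begin{equation*}
\Tr\mathcal{J}_m^k=\Tr(\mathcal{J}_m^k)^T=\Tr(\mathcal{J}_m^T)^k=(-1)^k\Tr\mathcal{J}_m^k=-\Tr\mathcal{J}_m^k.
\end{equation*}
Thus $\Tr\mathcal{J}_m^{k_s}=0$ identically whenever some $k_s$ is odd. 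The product inside the average is then identically zero, and so $m^{(\mathcal{J}_m)}_{k_1,\ldots,k_n}=0$. This also matches the imaginary spectrum noted in \S\ref{s1.2}.

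The main obstacle is deciding how to phrase the parity statement. One might try to obtain it from the Wick expansion, since the sum \eqref{eq3.1} is formally valid for all $k_s$. However, the parity of each individual $k_s$ is not visible from pairings: the $X_u$ averages only force $\overline{k}_n$-type constraints, and cancellations between pairings would have to be tracked. The pathwise antisymmetry argument avoids this entirely, so I would present the vanishing first via the trace identity. The expansion \eqref{eq3.1} would then be stated for the remaining case, where all $k_s$ are even.
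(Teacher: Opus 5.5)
Your proposal is correct and follows the paper's proof: odd $k_s$ are handled by the antisymmetry $\mathcal{J}_m^T=-\mathcal{J}_m$, which forces $\Tr\mathcal{J}_m^{k_s}=0$. The expansion \eqref{eq3.1} then follows by repeating the index expansion and independence argument of Lemma \ref{Lemma1}, with ${}^\dagger$ replaced by ${}^T$ and the deterministic $J$ simply left inside its trivial average.
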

\begin{proof}
If $k_s$ is odd for some $s\in[n]$, then $\Tr\mathcal{J}_m^{k_s}$ vanishes due to $\mathcal{J}_m$ being antisymmetric. For $k_1,\ldots,k_n$ even, the proof is as for Lemma \ref{Lemma1}.
\end{proof}

Ribbon graph representations corresponding to the summand of equation \eqref{eq3.1} can be constructed in much the same way as in the Hermitised case, bar a few caveats: First, the matrices $X_u$ ($u\in[m]$) are now real Ginibre, so $(X_u^T)_{ij}=(X_u)_{ji}$ and, consequently, one must now take into account covariances of the form
\begin{equation} \label{eq3.2}
\mean{(X_u^T)_{ij}(X_u^T)_{lk}}=\mean{(X_u)_{ji}(X_u)_{kl}}=\frac{1}{2\sqrt{N_uN_{u-1}}}\chi_{i=l}\chi_{j=k}
\end{equation}
in addition to the replacement of \eqref{eq2.4},
\begin{equation*}
\mean{(X_u^T)_{ij}(X_u)_{kl}}=\frac{1}{2\sqrt{N_uN_{u-1}}}\chi_{k=j}\chi_{l=i};
\end{equation*}
covariances of the form \eqref{eq3.2} are to be represented by twisted ribbons and the resulting ribbon graphs may now be non-orientable. The second difference between the Hermitised and antisymmetrised case is that the matrices $H$ have been replaced by the deterministic matrix $J$. Thus, there is no need to consider covariances involving $J$ and one can even do away with edges representing entries of $J$, so long as care is taken to properly interpret the vertices replacing them. We therefore glue together $m$ ribbon graphs for the Laguerre orthogonal ensemble.

\begin{definition}
Fix $m,n,k_1,\ldots,k_n\in\mathbb{N}$ such that each of $k_1,\ldots,k_n$ are even and let $\phi_1,\ldots,\phi_{m+1}$ be unique colours. Define $\widetilde{\mathcal{P}}_{k_1,\ldots,k_n}^{(m)}$ to be the set of $(m+1)$-coloured locally orientable ribbon graphs that can be constructed as follows:
\begin{enumerate}
\item For each $s\in[n]$, draw an oriented $2mk_s$-gon with a marked vertex (to be labelled $i_1^{(s;m+1)}$).
\item Colour the vertices of each polygon by $\phi_{m+1},\phi_m,\ldots,\phi_2,\phi_1,\phi_2,\ldots,\phi_m$ repeated in cyclic order, starting at the marked vertex.
\item Glue ribbons to pairs of polygon edges such that the boundaries of the resulting ribbon graphs inherit well-defined colours from the vertices they pass through. A ribbon is said to be twisted if it represents the identification of polygon edges that have the same orientation.
\end{enumerate}
Let $\widetilde{\mathcal{P}}_{k_1,\ldots,k_n}^{(m),c}\subseteq\widetilde{\mathcal{P}}_{k_1,\ldots,k_n}^{(m)}$ denote the subset of connected ribbon graphs.
\end{definition}

Application of the Isserlis--Wick theorem to equation \eqref{eq3.1} follows in mostly the same way as in the proof of Lemma \ref{Lemma2}: The mixed moments and cumulants are given by normalised sums over the sets defined above and for each ribbon graph $\Gamma$ of the sum, boundaries of colour $\phi_u$ for each $u=2,\ldots,m+1$ corresponding to identification of vertex labels of type $i_b^{(a;\pm u)}$ contribute a weight of $N_{u-1}$. However, boundaries of colour $\phi_1$ must be treated with care, because the vertices they pass through each represent two summation indices designating an entry of $J$. Thus, rather than weighting each boundary of colour $\phi_1$ by a factor of $N_0$, we must instead sum over a product of entries of $J$.

\begin{definition} \label{def6}
Fix $m,n,k_1,\ldots,k_n$ with $k_1,\ldots,k_n$ even and let $\Gamma\in\widetilde{\mathcal{P}}_{k_1,\ldots,k_n}^{(m)}$. Recall the notation $V_u(\Gamma)$ for the number of boundaries of colour $\phi_{u+1}$ ($0\le u\le m$) from Lemma \ref{Lemma2}. Let $\gamma_1,\ldots,\gamma_{V_0(\Gamma)}$ be the boundaries of $\Gamma$ that are of colour $\phi_1$.

For $i\in[V_0(\Gamma)]$, define the $J$-weight $\omega_J(\gamma_i)$ of a boundary $\gamma_i$ as the trace of a word in $\{J,J^T\}$ constructed as follows: Travel once around the boundary $\gamma_i$ and each time a vertex of colour $\phi_1$ belonging to $\gamma_i$ is passed through, record a factor of $J$ if the traversal is in the same direction as the orientation of the polygon that the vertex belongs to, else record a factor of $J^T$.
\end{definition}

We give an example of the computation of $J$-weights in Figure \ref{fig2} below. The procedure outlined above can be understood as a special case of the decompositions of traces of matrix products studied in free probability theory \citep[Lecture~22]{NS06}. It should be noted that due to the antisymmetric nature of $J$, $\omega_J(\gamma_i)=0$ if $\gamma_i$ passes through an odd number of vertices. Moreover, the computation of $\omega_J(\gamma_i)$ does not depend on where in $\gamma_i$ the walk along $\gamma_i$ starts, nor its direction. We may now give the analogue of Proposition \ref{prop1} for antisymmetrised matrix product ensembles.

\begin{figure}[H]
        \centering
\captionsetup{width=.9\linewidth}
        \includegraphics[width=0.8\textwidth]{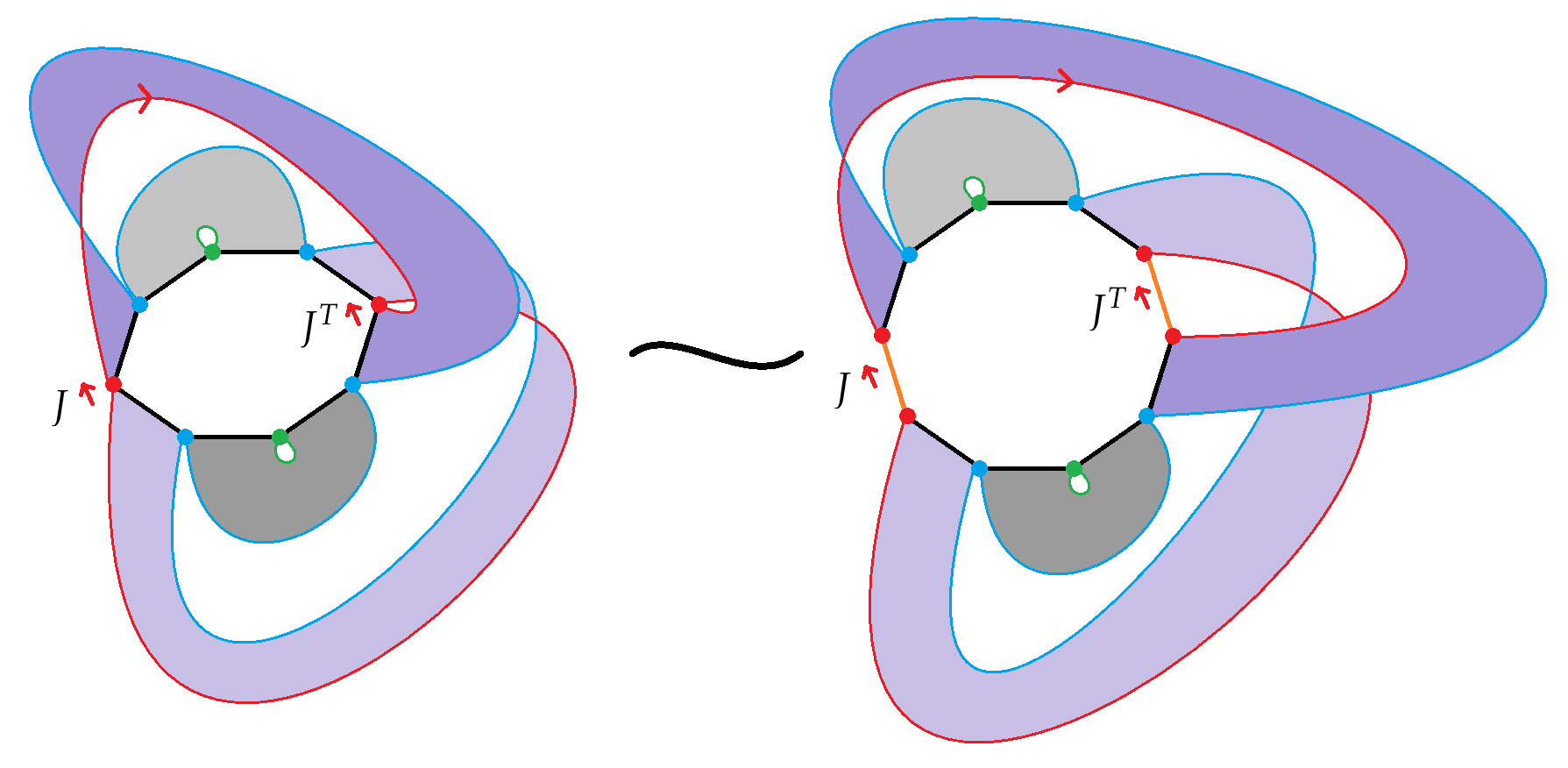}
        \caption{Illustrated on the left is a ribbon graph drawn from $\widetilde{\mathcal{P}}_2^{(2)}$ with colours $\phi_1,\phi_2,\phi_3$ being red, blue, and green, respectively. We orient the octagon clockwise. To calculate the $J$-weight of the single red boundary, we walk along said boundary starting at the top and moving to the right. The red vertex on the right is passed first, and it is passed through in an anti-clockwise direction, so it is marked as $J^T$. The second vertex is passed through in a clockwise direction, so it is marked as $J$. Thus, the $J$-weight of this boundary is $\Tr(J^TJ)$. On the right, we have expanded the red vertices into orange edges representing entries of $J$ when travelling clockwise around the polygon; the edges represent entries of $J^T$ in the reverse direction.} \label{fig2}
\end{figure}

\begin{proposition} \label{prop3}
Let $m,N_0,\ldots,N_m\in\mathbb{N}$ be as in Definition \ref{def2} and suppose further that for each $i\in[m]$, there exists a non-negative constant $\hat{\nu}_i:=N_i/N_0$. Retain the definitions of $V_0(\Gamma),\ldots,V_m(\Gamma)$, the boundaries $\gamma_i$, and $J$-weights from Lemma \ref{Lemma2} and Definition \ref{def6}. Then, for $n,k_1,\ldots,k_n\in\mathbb{N}$ such that each of $k_1,\ldots,k_n$ are even,
\begin{equation*}
c_{k_1,\ldots,k_n}^{(\mathcal{J}_m)}=(\hat{\nu}_1\cdots\hat{\nu}_{m-1}\sqrt{\hat{\nu}_m})^{-\overline{k}_n}N_0^{2-n}\sum_{h=0}^{m\overline{k}_n+1-n-m}\frac{1}{N_0^h}\sum_{p_1,\ldots,p_m=1}^{\overline{k}_n}\hat{\nu}_1^{p_1}\cdots\hat{\nu}_m^{p_m}\sum_{\Gamma\in\widetilde{\mathcal{T}}_{\mathbf{k},\mathbf{p},h}^{(m),c}}\prod_{i=1}^{V_0(\Gamma)}\tilde\omega_J(\gamma_i),
\end{equation*}
where $\tilde\omega_J(\gamma_i)=\frac{1}{N_0}\omega_J(\gamma_i)\in\{-1,0,1\}$ and $\widetilde{\mathcal{T}}_{\mathbf{k},\mathbf{p},h}^{(m),c}\subseteq\widetilde{\mathcal{P}}_{k_1,\ldots,k_n}^{(m),c}$ is the subset of ribbon graphs with $V_i(\Gamma)=p_i$ for all $i\in[m]$ and Euler genus $h$ (i.e., $h$ is related to the Euler characteristic by $h=2-\chi=m\overline{k}_n-\sum_{i=0}^mV_i(\Gamma)+2-n$; it is double the usual genus for orientable ribbon graphs, but takes odd values for non-orientable ribbon graphs; even $h$ can still correspond to non-orientable ribbon graphs, such as those homeomorphic to the Klein bottle).
\end{proposition}
\begin{proof}
Proceeding as in the proof of Lemma \ref{Lemma2} while recalling that arguments concerning mixed moments apply to mixed cumulants upon restricting to connected ribbon graphs, we have that
\begin{multline*}
c_{k_1,\ldots,k_n}^{(\mathcal{J}_m)}=(N_1\cdots N_{m-1}\sqrt{N_0N_m})^{-\overline{k}_n}\sum_{\Gamma\in\widetilde{\mathcal{P}}_{k_1,\ldots,k_n}^{(m),c}}N_1^{V_1(\Gamma)}\cdots N_m^{V_m(\Gamma)}
\\\times\left(\prod_{a=1}^n\prod_{b=1}^{k_a}\sum_{i_b^{(a;\pm1)}=1}^{N_0}\right)\prod_{s=1}^n\prod_{t=1}^{k_s}J_{i_t^{(s;1)}i_t^{(s;-1)}}\chi_{\gamma_1}\cdots\chi_{\gamma_{V_0(\Gamma)}},
\end{multline*}
where for each $i\in[V_0(\Gamma)]$, $\chi_{\gamma_i}$ is the product of indicator functions on indices of the form $i_b^{(a;\pm1)}$ represented by the boundary $\gamma_i$. Now, each entry $J_{i_t^{(s;1)}i_t^{(s;-1)}}$ belongs to exactly one of $\gamma_1,\ldots,\gamma_{V_0(\Gamma)}$ and traversing each boundary as in Definition \ref{def6} prescribes an order on the entries $J_{i_t^{(s;1)}i_t^{(s;-1)}}$. Writing out a product of these entries in the order just described for each boundary $\gamma_i$, $\chi_{\gamma_i}$ ensures that the second index of each term matches the first index of the next, so long as care is taken to sometimes replace $J_{i_t^{(s;1)}i_t^{(s;-1)}}$ with $(J^T)_{i_t^{(s;-1)}i_t^{(s;1)}}$. In this way, the second line of the above formula simplifies to a product of traces so that
\begin{align*}
c_{k_1,\ldots,k_n}^{(\mathcal{J}_m)}&=(N_1\cdots N_{m-1}\sqrt{N_0N_m})^{-\overline{k}_n}\sum_{\Gamma\in\widetilde{\mathcal{P}}_{k_1,\ldots,k_n}^{(m),c}}N_1^{V_1(\Gamma)}\cdots N_m^{V_m(\Gamma)}\prod_{i=1}^{V_0(\Gamma)}\omega_J(\gamma_i)
\\&=(N_1\cdots N_{m-1}\sqrt{N_0N_m})^{-\overline{k}_n}\sum_{\Gamma\in\widetilde{\mathcal{P}}_{k_1,\ldots,k_n}^{(m),c}}N_0^{V_0(\Gamma)}\cdots N_m^{V_m(\Gamma)}\prod_{i=1}^{V_0(\Gamma)}\tilde\omega_J(\gamma_i).
\end{align*}
The remainder of this proof proceeds exactly as the proof of Proposition \ref{prop1}.
\end{proof}

\subsection{Constellations for antisymmetrised matrix product ensembles} \label{s3.2}
In this subsection, we repeat the exercise of \S\ref{s2.2}, now needing to address twisted ribbons and taking into account the $J$-weights of Definition \ref{def6}. We proceed, as before, by prescribing an algebraic encoding of the ribbon graphs of the previous subsection:
\begin{enumerate}
\item Consider the oriented polygons from which the ribbon graphs of $\widetilde{\mathcal{P}}_{k_1,\ldots,k_n}^{(m),c}$ are built. In the notation of Lemma \ref{Lemma3}, we label the polygon edges representing matrix entries of the form $(X_u)_{i_t^{(s;-u)}i_t^{(s;-u-1)}}$ by $\overline{k}_{s-1}+t$ and those representing entries of the form $(X^T_u)_{i_t^{(s;u+1)}i_t^{(s;u)}}$ by $-(\overline{k}_{s-1}+t)$. Thus, traversing the boundary of the $s$-th polygon in the direction of its orientation, starting at the marked vertex, one sequentially encounters clusters of $m$ edges sharing the label $-(\overline{k}_{s-1}+1), \overline{k}_{s-1}+1, -(\overline{k}_{s-1}+2),\overline{k}_{s-1}+2,\ldots,-\overline{k}_s,\overline{k}_s$.
\item Let $\mathcal{S}_{\pm\overline{k}_n}$ denote the set of permutations on $-[\overline{k}_n]\cup[\overline{k}_n]$ and define $e,f\in\mathcal{S}_{\pm\overline{k}_n}$ by
\begin{align*}
e&=(-1,1)(-2,2)\cdots(-\overline{k}_n,\overline{k}_n),
\\f&=\prod_{s=1}^n(-1,\overline{k}_s)\prod_{i=1}^{k_s-1}(i,-i-1).
\end{align*}
Note that $ef=\gamma(-k_1,\ldots,-1)(-\overline{k}_2,\ldots,-k_1-1)\cdots(-\overline{k}_n,\ldots,-\overline{k}_{n-1}-1)\in\mathcal{S}_{\pm\overline{k}_n}$, where $\gamma$ is as in \S\ref{s2.2}.
\item For $i\in[m]$, define $\sigma_i\in\mathcal{S}_{\pm\overline{k}_n}$ to be a fixed-point free involution such that $\sigma_i(a)=b$ if there is a ribbon connecting a polygon edge of type $X_i$ or $X_i^T$ of label $a$ to one of type $X_i$ or $X_i^T$ of label $b$. Note that a ribbon is (un-)twisted if $a$ and $b$ (do not) have the same sign.
\end{enumerate}
As in \S\ref{s2.2}, we define $\widetilde{\mathcal{S}}_{k_1,\ldots,k_n}^{(m),c}$ to be the set of tuples $(\sigma_1,\ldots,\sigma_m)\in\mathcal{S}_{\pm\overline{k}_n}^m$ such that $\sigma_1,\ldots,\sigma_m$ are fixed-point free involutions and $\mean{e,f,\sigma_1,\ldots,\sigma_m}$ is transitive. We then have that
\begin{equation*}
\widetilde{\mathcal{P}}_{k_1,\ldots,k_n}^{(m),c}\simeq\widetilde{\mathcal{S}}_{k_1,\ldots,k_n}^{(m),c}.
\end{equation*}
Applying the same arguments as in \S\ref{s2.2} shows that for $\Gamma\in\widetilde{\mathcal{P}}_{k_1,\ldots,k_n}^{(m),c}$ denoting the ribbon graph corresponding to $(\sigma_1,\ldots,\sigma_m)\in\widetilde{\mathcal{S}}_{k_1,\ldots,k_n}^{(m),c}$ and $i\in[m-1]$, the cycles of $\sigma_{i+1}\sigma_i$ list every second vertex visited when touring the boundaries of colour $\phi_{i+1}$. On the other hand, the cycles of $\sigma_1e$ and $f\sigma_m$ list the vertices of the boundaries of colour $\phi_1$ and $\phi_{m+1}$ respectively, but each boundary is described by two cycles due to vertices of colour $\phi_1$ and $\phi_{m+1}$ each inheriting two labels from their incident polygon edges. Hence, we have $V_0(\Gamma)=\tfrac{1}{2}\#\sigma_1e$, $V_m(\Gamma)=\tfrac{1}{2}\#f\sigma_m$, and for $i\in[m-1]$, $V_i(\Gamma)=\tfrac{1}{2}\#\sigma_{i+1}\sigma_i$. Note the extra factor of half not seen in the cycle counts of Proposition \ref{prop2}. This stems from the fact that, unlike in the Hermitised case, we must now put entries of $X_i,X_i^T$, hence vertices of the same colour, on equal footing, instead of bipartitioning them. At a deeper level, the factor of half appears due to the fact that the non-orientable ribbon graphs of $\widetilde{\mathcal{P}}_{k_1,\ldots,k_n}^{(m),c}$ have ambiguous choices of local orientations (these ambiguities vanish when considering their double covers; see, e.g., \citep{RGM26}).

In the antisymmetrised setting, the novelty lies in computing the $J$-weights of the boundaries corresponding to the cycles of $\sigma_1e$. Without these weights, the present formalism produces formulae for the mixed cumulants of the real Wishart product ensembles of Definition \ref{def3}. We find that the antisymmetric structure of $J$ leads to an interesting algebraic formulation of Proposition \ref{prop3}.

\begin{proposition} \label{prop4}
Let $m,N_0,\ldots,N_m\in\mathbb{N}$ be as in Definition \ref{def2} and suppose that for each $i\in[m]$, there exists a non-negative constant $\hat{\nu}_i=N_i/N_0$. Define $\widetilde{\mathcal{S}}_{k_1,\ldots,k_n}^{(m),c,\pm}(h)\subseteq\widetilde{\mathcal{S}}_{k_1,\ldots,k_n}^{(m),c}$ such that for $(\sigma_1,\ldots,\sigma_m)\in\widetilde{\mathcal{S}}_{k_1,\ldots,k_n}^{(m),c,*}(h)$,
\begin{enumerate}
\item the Euler genus is $h$, equivalently $\# f\sigma_m+\#\sigma_1e+\sum_{i=1}^{m-1}\#\sigma_{i+1}\sigma_i=2m\overline{k}_n+4-2n-2h$;
\item the non-trivial cycles of $\varphi$ are all of even length, where $\varphi$ is the product of cycles of $\sigma_1 e$ satisfying the property that the elements of minimal absolute value are of positive sign;
\item the total number of negative elements in the non-trivial cycles of $\varphi$ is even (odd) when $*=+$ (when $*=-$).
\end{enumerate}
Then, for $n,k_1,\ldots,k_n\in\mathbb{N}$ such that each of $k_1,\ldots,k_n$ are even,
\begin{multline*}
c_{k_1,\ldots,k_n}^{(\mathcal{J}_m)}=(-1)^{\overline{k}_n/2}(\hat{\nu}_1\cdots\hat{\nu}_{m-1}\sqrt{\hat{\nu}_m})^{-\overline{k}_n}N_0^{2-n}
\\\times\sum_{h=0}^{m\overline{k}_n+1-n-m}\frac{1}{N_0^h}\Big(\sum_{\substack{(\sigma_0,\ldots,\sigma_m)\\ \in\widetilde{\mathcal{S}}_{k_1,\ldots,k_n}^{(m),c,+}(h)}}-\sum_{\substack{(\sigma_0,\ldots,\sigma_m)\\ \in\widetilde{\mathcal{S}}_{k_1,\ldots,k_n}^{(m),c,-}(h)}}\Big)\hat{\nu}_1^{\tfrac{1}{2}\#\sigma_2\sigma_1}\cdots\hat{\nu}_{m-1}^{\tfrac{1}{2}\#\sigma_m\sigma_{m-1}}\hat{\nu}_m^{\tfrac{1}{2}\# f\sigma_m}.
\end{multline*}
\end{proposition}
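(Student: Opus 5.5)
The plan is to start from Proposition \ref{prop3} and use the bijection $\widetilde{\mathcal{P}}_{k_1,\ldots,k_n}^{(m),c}\simeq\widetilde{\mathcal{S}}_{k_1,\ldots,k_n}^{(m),c}$ together with the cycle-count identities $V_0=\tfrac12\#\sigma_1e$, $V_m=\tfrac12\#f\sigma_m$, $V_i=\tfrac12\#\sigma_{i+1}\sigma_i$. The powers of $\hat\nu_i$ and the Euler genus condition then follow at once. Indeed, $h=m\overline{k}_n-\sum_iV_i+2-n$; multiplying by two gives condition (1). The only real work is to express the product $\prod_{i=1}^{V_0(\Gamma)}\tilde\omega_J(\gamma_i)$ in terms of $\sigma_1e$. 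We will show that this product vanishes unless condition (2) holds, and that otherwise it equals $(-1)^{\overline{k}_n/2}$ times $+1$ or $-1$ according to condition (3).

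For the first step, we choose one representative from each pair of cycles of $\sigma_1e$ that describe the same $\phi_1$-boundary. Each such pair consists of a cycle and its ``reverse'' image under $e$ (conjugation by $e$ sends $\sigma_1e$ to $e\sigma_1=(\sigma_1e)^{-1}$). We fix the representative by requiring that its element of minimal absolute value be positive; this gives $\varphi$. Then we identify the letters of $\omega_J(\gamma_i)$ with the entries of the corresponding cycle of $\varphi$. A $\phi_1$-vertex of polygon $s$, position $t$, carries the two labels $\pm(\overline{k}_{s-1}+t)$, one for each incident edge. Traversing the boundary from an $X_1^T$-edge to an $X_1$-edge (entry of label $-a$ then $a$) means passing the vertex in the polygon's direction, which contributes $J$. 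The reverse passage contributes $J^T=-J$. So, reading the cycle of $\varphi$, each vertex is recorded as $J$ or $-J$ according to the sign of the label through which it is entered. I will check this orientation convention carefully on the example of Figure \ref{fig2}; it is the step most likely to need a sign correction.

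Given this, a boundary through $\ell$ vertices with $r$ negatively signed entries has $\omega_J=(-1)^r\Tr J^\ell$. We have $\Tr J^\ell=0$ for odd $\ell$, and $\Tr J^\ell=(-1)^{\ell/2}N_0$ for even $\ell$, using $J^2=-I$. Hence $\tilde\omega_J(\gamma_i)=(-1)^{r_i+\ell_i/2}$, or $0$ if $\ell_i$ is odd. Here $\ell_i$ is half of, or equal to, the cycle length, depending on how labels are counted, so "even length" in (2) must be matched with even $\ell_i$; I will pin down this counting. Multiplying over all boundaries gives $\sum_i\ell_i=\overline{k}_n$, since every $\phi_1$-vertex lies on exactly one boundary. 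This produces the global factor $(-1)^{\overline{k}_n/2}$. The remaining sign is $(-1)^{\sum r_i}$, which is $+1$ or $-1$ according to the parity of negative elements in $\varphi$, i.e.\ according to $*=\pm$ in (3). Trivial cycles are excluded in (2)--(3) because they contribute no $J$ factors.

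The main obstacle is the bookkeeping in the second paragraph. One must confirm that the canonical choice of $\varphi$ is consistent: changing the representative should replace $r$ by $\ell-r$, which preserves the sign when $\ell$ is even, so the choice does not matter. One must also confirm that the correspondence between labels of $\sigma_1e$ and vertices correctly distinguishes the two traversal directions. Once these are in place, summing over the tuples in $\widetilde{\mathcal{S}}^{(m),c,\pm}_{k_1,\ldots,k_n}(h)$ and inserting the result into Proposition \ref{prop3} gives the stated formula.
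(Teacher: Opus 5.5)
Your plan is essentially the paper's own proof: pair each cycle $\tau$ of $\sigma_1e$ with $e\tau^{-1}e$, keep one representative per pair in $\varphi$ so that its cycles biject with the $\phi_1$-boundaries, write $\omega_J(\gamma_i)=(-1)^{r(\tau_i)}\Tr J^{|\tau_i|}=(-1)^{r(\tau_i)+|\tau_i|/2}N_0$, and extract $(-1)^{\overline{k}_n/2}$ from $\sum_i|\tau_i|=\overline{k}_n$. The count you defer resolves as $|\tau_i|=\ell_i$, since an element $x$ of a cycle records a single passage through the $\phi_1$-vertex labelled $\pm|x|$, and the orientation ambiguity is harmless for the reason you give; however, your remark on trivial cycles needs correcting, because a genuine fixed point of $\sigma_1e$ (a ribbon joining the two $X_1$-edges at one $\phi_1$-vertex) is a boundary through one $\phi_1$-vertex with weight $\Tr J=0$, so it must count as an odd cycle excluded by (2), and only the fixed points of $\varphi$ left over from discarded partner cycles contribute no $J$ factors.
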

\begin{proof}
Fix $(\sigma_1,\ldots,\sigma_m)\in\widetilde{\mathcal{S}}_{k_1,\ldots,k_n}^{(m),c}$ and let $\Gamma\in\widetilde{\mathcal{P}}_{k_1,\ldots,k_n}^{(m),c}$ be its corresponding ribbon graph. The Euler genus of Proposition \ref{prop3} relates to the formula given in the first condition since we have from above that $\sum_{i=0}^mV_i(\Gamma)=\tfrac{1}{2}\# f\sigma_m+\#\sigma_1e+\tfrac{1}{2}\sum_{i=1}^{m-1}\#\sigma_{i+1}\sigma_i$.

Now, recall that the boundaries of $\Gamma$ of colour $\phi_1$ are in two-to-one correspondence with the cycles of $\sigma_1e$. These cycles occur in pairs, meaning that if $\tau$ is a cycle of $\sigma_1e$, then so too is $e\tau^{-1}e$, which is the inverse of $\tau$ with all elements negated. We define $\varphi$ such that it contains one cycle of every such pair, hence the cycles of $\varphi$ are in bijection with the boundaries of colour $\phi_1$. Then, in the notation of Definition \ref{def6}, and letting $\tau_i$ denote the cycle of $\varphi$ encoding the boundary $\gamma_i$ of $\Gamma$, we observe that $\omega_J(\gamma_i)$ is the trace of a word in $\{J,J^T\}$ such that
\begin{enumerate}
\item $\omega_J(\gamma_i)\ne0$ only when the word is of even length, which is equivalent to the second condition of the present proposition;
\item the number of occurrences of $J^T$ in the word is of the same parity as $r(\tau_i)$, which we define to be the number of negative elements in $\tau_i$.
\end{enumerate}
Thus, we may discard terms of weight zero in Proposition \ref{prop3} by restricting our sum over $p_1,\ldots,p_m$ and $\widetilde{\mathcal{T}}_{\mathbf{k},\mathbf{p},h}^{(m),c}$ to a sum over $\widetilde{\mathcal{S}}_{k_1,\ldots,k_n}^{(m),c,+}(h)\cup\widetilde{\mathcal{S}}_{k_1,\ldots,k_n}^{(m),c,-}(h)$. Moreover, we see that
\begin{equation*}
\omega_J(\gamma_i)=(-1)^{r(\tau_i)}\Tr J^{|\tau_i|}=(-1)^{r(\tau_i)+|\tau_i|/2}N_0,
\end{equation*}
where $|\tau_i|$ is the length of $\tau_i$. Then, since $\sum_{i=1}^{V_0(\Gamma)}|\tau_i|=\overline{k}_n$, we have that
\begin{equation*}
\prod_{i=1}^{V_0(\Gamma)}\tilde\omega_J(\gamma_i)=(-1)^{\overline{k}_n/2+r(\varphi)},
\end{equation*}
where we write $r(\varphi)=\sum_{i=1}^{V_0(\Gamma)}r(\tau_i)$. This latter quantity is exactly that which is described in the third condition defining $\widetilde{\mathcal{S}}_{k_1,\ldots,k_n}^{(m),c,*}(h)$, so $(-1)^{r(\varphi)}=\pm1$ when $(\sigma_1,\ldots,\sigma_m)\in\widetilde{\mathcal{S}}_{k_1,\ldots,k_n}^{(m),c,\pm}(h)$. This concludes the proof.
\end{proof}

\begin{example}
Let $(\sigma_1,\sigma_2)\in\widetilde{\mathcal{S}}_2^{(2),c}$ correspond to the ribbon graph $\Gamma$ of Figure \ref{fig2}. Fixing convention by taking the top-most polygon edge to be of label $-1$ and type $X_2^T$, we have that $\sigma_1=(-1,-2)(1,2)$ and $\sigma_2=(-1,2)(1,-2)$; these encode the purple and grey ribbons, respectively. We compute $f\sigma_2=(-1)(1)(-2)(2)$, $\sigma_2\sigma_1=(-1,1)(-2,2)$, and $\sigma_1e=(-1,2)(1,-2)$. Hence, $V_0(\Gamma)=\tfrac{1}{2}\#\sigma_1e=1$, $V_1(\Gamma)=\tfrac{1}{2}\#\sigma_2\sigma_1=1$, and $V_2(\Gamma)=\#f\sigma_2=2$, which agrees with $\Gamma$ having one red, one blue, and two green boundaries. Comparing to the first condition of Proposition \ref{prop4} shows that the Euler genus is $h=1$, while the second condition is confirmed upon noting that $\varphi=(1,-2)$. According to the third condition, we have that $(\sigma_1,\sigma_2)\in\widetilde{\mathcal{S}}_2^{(2),c,-1}(1)$, which is in keeping with the $J$-weight computed in the caption of Figure \ref{fig2}.
\end{example}

In the antisymmetrised case, the connection to standard constellations is more tenuous than in the Hermitised case. Nevertheless, we may define for $(\sigma_1,\ldots,\sigma_m)\in\widetilde{\mathcal{S}}_{k_1,\ldots,k_n}^{(m),c}$ the sequence of permutations $\alpha_0=\sigma_1e$, $\alpha_i=\sigma_{i+1}\sigma_i$ for $i\in[m-1]$, and $\alpha_m=f\sigma_m$ --- note that $(\sigma_1,\ldots,\sigma_m)$ can be recovered from $(\alpha_0,\ldots,\alpha_m)$. Since $e,f,\sigma_1,\ldots,\sigma_m$ are involutions such that $\mean{e,f,\sigma_1,\ldots,\sigma_m}$ is transitive, we have that $ef\alpha_m\cdots\alpha_0=\mathrm{id}$ and $\mean{e,f,\alpha_m,\ldots,\alpha_0}$ is transitive. Hence, $(e,f,\alpha_m,\ldots,\alpha_0)$ is an $(m+3)$-constellation.

\setcounter{equation}{0}
\section{Loop equations for the antisymmetrised Laguerre ensemble} \label{s4}
We now develop the loop equation hierarchy for the $(N,N)$ antisymmetrised Laguerre ensemble, which corresponds to specialising to $m=1$ and $N_1=N_0=N$. As mentioned in \S\ref{s1.2}, this amounts to using integration by parts to produce recursions on the mixed moments $m^{(\mathcal{J}_1)}_{k_1,\ldots,k_n}$, their generating functions $U_n^{(\mathcal{J}_1)}(x_1,\ldots,x_n)$, the connected analogues $W_n^{(\mathcal{J}_1)}(x_1,\ldots,x_n)$, and the correlator expansion coefficients $W_n^{(\mathcal{J}_1),l}(x_1,\ldots,x_n)$, in that order. For the remainder of this section, we drop the subscript $(\mathcal{J}_1)$ for clarity.

The calculations forthcoming serve as a first step towards understanding the loop equation formalism for antisymmetrised matrix product ensembles and, moreover, for random matrix products in general. A point of interest concerning loop equations for matrix products is that their spectral curves, i.e., the loop equations characterising $W_1^0(x)$, are polynomials in $W_1^0(x)$ of order higher than two. This is in contrast with the second order spectral curves seen for classical $\beta$ ensembles, such as those solved by the Wigner semicircle and Mar\v{c}enko--Pastur laws \citep{EM09}, \citep{WF14}, \citep{WF15}, \citep{FRW17}. Spectral curves of order higher than two have garnered interest in the abstract setting \citep{EO09}, \citep{BE13}, \citep{BHLMR14}, \citep{Ora15}, but there have not been many concrete examples of such higher order spectral curves and, more broadly, loop equations (here, higher order means higher than usual) arising from random matrix theory (see, however, the studies on so-called multi-matrix models \citep{EO09}). We will see that the antisymmetrised Laguerre ensemble has a third order spectral curve and that the loop equations have a third order structure in general.

To ease notation, we write $B=\mathcal{J}_1=X^TJX$ and adopt the Einstein summation convention so that repeated matrix indices in a product of matrix entries are summed over $1,\ldots,N$. For example, the expression $A_{ab}B_{bc}$ is hiding an implicit sum over $b=1,\ldots,N$ and is given by $[AB]_{ac}$. In particular, we must take care to remember that $J_{aa}=\Tr J=0$ and $\partial_{X_{ab}}X_{ab}=\sum_{a,b=1}^N1=N^2$. Moreover, we define $I_n:=(k_2,\ldots,k_n)$ and use the shorthand $\Tr B^{I_n}=\Tr B^{k_2}\cdots\Tr B^{k_n}$. Finally, we place braces around partial derivatives to indicate that the differential operators do not act on any terms outside the braces. This is especially important inside averages, where the absence of braces means that differential operators act on the product of all terms on their right, including the p.d.f.~present in the integral definition of the average.

The intuition behind the first calculation of the loop equation formalism is to represent the factor $\Tr B^{k_1+1}$ in the definition of $m_{k_1+1,I_n}$ as a polygon --- or loop --- of $k_1+1$ sides, remove one factor of $B$ to cut open the loop, and then reintroduce that same term to join the loop back up, except that the reintroduced term is now extracted from the p.d.f. \eqref{eq1.3} of the real Ginibre ensemble by way of using an appropriate differential operator. This intuition is why loop equations are also referred to as cut-and-join equations. We give an example as a warm-up.

\begin{lemma} \label{Lemma4}
Let us write $\Tr B^{k_1+1}=(B^{k_1})_{ad}B_{da}$ (using the Einstein summation convention) and then erase the last term so that we are left with $(B^{k_1})_{ad}$. We have that
\begin{equation} \label{eq4.1}
\partial_{(X^T)_{ab}}(J^T)_{bc}\partial_{X_{cd}}\exp\left(-\tfrac{N}{2}\Tr(X^TX)\right)=N^2B_{da}\exp\left(-\tfrac{N}{2}\Tr(X^TX)\right)
\end{equation}
and, consequently,
\begin{equation} \label{eq4.2}
\mean{(B^{k_1})_{ad}\Tr B^{I_n}\partial_{(X^T)_{ab}}(J^T)_{bc}\partial_{X_{cd}}}=N^2m_{k_1+1,I_n}.
\end{equation}
\end{lemma}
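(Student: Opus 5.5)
The plan is a direct computation: differentiate the Gaussian weight twice, contract the indices with $J^T$, and use the two properties of $J$ singled out in \S\ref{s3.1}, namely $J^T=-J$ (so $\Tr J=0$ and $B$ is antisymmetric). Equation \eqref{eq4.2} then follows by integrating against the remaining factors. Throughout, $m=1$ and $N_0=N_1=N$, so the weight \eqref{eq1.3} is $\exp(-\tfrac{N}{2}\Tr X^TX)$ with $\Tr X^TX=X_{ij}X_{ij}$. We also use $\partial_{(X^T)_{ab}}=\partial_{X_{ba}}$, since $X$ is real with independent entries.

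\emph{First derivative.} Applying the inner derivative gives
\begin{equation*}
\partial_{X_{cd}}e^{-\frac{N}{2}\Tr X^TX}=-NX_{cd}\,e^{-\frac{N}{2}\Tr X^TX}.
\end{equation*}

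\emph{Second derivative.} Applying $\partial_{X_{ba}}$ and the product rule produces two terms:
\begin{equation*}
\bigl(-N\chi_{b=c}\chi_{a=d}+N^2X_{ba}X_{cd}\bigr)e^{-\frac{N}{2}\Tr X^TX}.
\end{equation*}

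\emph{Contraction with $(J^T)_{bc}$.} We sum over the repeated indices $b$ and $c$.
\begin{itemize}
\item The first term becomes $-N\chi_{a=d}(J^T)_{bb}=-N\chi_{a=d}\Tr J^T=0$, because $J$ is antisymmetric.
\item The second term becomes $N^2(X^T)_{ab}(J^T)_{bc}X_{cd}=N^2(X^TJ^TX)_{ad}$.
\item Since $J^T=-J$, this equals $-N^2B_{ad}$, and since $B^T=X^TJ^TX=-B$, it equals $N^2B_{da}$.
\end{itemize}
This establishes \eqref{eq4.1}.

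\emph{Proof of \eqref{eq4.2}.} By the brace convention, the differential operators in \eqref{eq4.2} stand at the far right. They therefore act only on the p.d.f. inside the integral defining the average, and not on $(B^{k_1})_{ad}\Tr B^{I_n}$. Substituting \eqref{eq4.1} under the integral, the integrand becomes $N^2(B^{k_1})_{ad}B_{da}\Tr B^{I_n}$ times the p.d.f. The contraction $(B^{k_1})_{ad}B_{da}=\Tr B^{k_1+1}$ then gives $N^2\mean{\Tr B^{k_1+1}\Tr B^{I_n}}=N^2m_{k_1+1,I_n}$ by \eqref{eq1.6}.

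The only delicate point is bookkeeping. One must check that the diagonal term really is killed by $\Tr J=0$ under the Einstein convention. One must also keep the order of indices straight so that the result is $B_{da}$ and not $B_{ad}$, which would flip the sign. Both are settled by the antisymmetry of $J$ and of $B$.
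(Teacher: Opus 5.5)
Your proposal is correct and follows the paper's proof essentially line for line: differentiate the Gaussian weight twice, kill the diagonal term via $(J^T)_{bb}=\Tr J^T=0$, identify the surviving term $N^2(X^TJ^TX)_{ad}$ with $N^2B_{da}$, and integrate against $(B^{k_1})_{ad}\Tr B^{I_n}$ to obtain \eqref{eq4.2}. The only cosmetic difference is in the index-ordering step: the paper gets $(X^TJ^TX)_{ad}=B_{da}$ directly from $(X^TJ^TX)^T=X^TJX$, whereas you apply $J^T=-J$ and then $B^T=-B$, two sign flips that cancel, so antisymmetry is not actually needed there.
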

\begin{proof}
First note by the Leibniz product rule that
\begin{multline*}
\partial_{(X^T)_{ab}}\partial_{X_{cd}}\exp\left(-\tfrac{N}{2}\Tr(X^TX)\right)=-N\partial_{(X^T)_{ab}}X_{cd}\exp\left(-\tfrac{N}{2}\Tr(X^TX)\right)
\\=\left(N^2(X^T)_{ab}X_{cd}-N\chi_{a=d,b=c}\right)\exp\left(-\tfrac{N}{2}\Tr(X^TX)\right).
\end{multline*}
Multiplying this result by $(J^T)_{bc}$ gives
\begin{equation*}
\left(N^2(X^TJ^TX)_{ad}-NJ_{bb}\chi_{a=d,b=c}\right)\exp\left(-\tfrac{N}{2}\Tr(X^TX)\right),
\end{equation*}
which reduces to the right-hand side of equation \eqref{eq4.1} upon recalling that $J_{bb}=\Tr J=0$ and observing that $(X^TJ^TX)^T=X^TJX$, so $(X^TJ^TX)_{ad}=B_{da}$. Pre-multiplying both sides of equation \eqref{eq4.1} by $(N/2\pi)^{N^2/2}(B^{k_1})_{ad}\Tr B^{I_n}$ and then integrating over $\mathbb{M}_{N\times N}(\mathbb{R})$ while keeping track of notation yields equation \eqref{eq4.2}.
\end{proof}

Here, we have related $m_{k_1+1,I_n}$ to an average that looks like $m_{k_1+1,I_n}$, except that a factor of $B_{da}$ has been replaced by $\partial_{(X^T)_{ab}}(J^T)_{bc}\partial_{X_{cd}}$. As this average is non-zero, one might be led to considering the closely related average $\mean{\partial_{(X^T)_{ab}}(J^T)_{bc}\partial_{X_{cd}}(B^{k_1})_{ad}\Tr B^{I_n}}$, which vanishes due to the fundamental theorem of calculus. However, this results in averages of the form $\mean{\Tr(B^pX^TX)\Tr B^q}$, which are not immediately expressible in terms of mixed moments and one must develop a series of complicated identities that eventually combine to produce an identity purely on mixed moments. It turns out that a lot of guesswork can be avoided by beginning at the average
\begin{equation} \label{eq4.3}
\mean{\left(\partial_{(X^T)_{ab}}+NX_{ba}\right)(J^T)_{bc}\left(\partial_{X_{cd}}+N(X^T)_{dc}\right)(B^{k_1})_{ad}\Tr B^{I_n}}.
\end{equation}
In this way, we are cutting and joining the loop represented by $\Tr B^{k_1+1}$ by removing and reintroducing factors of $X^T$ and $X$ belonging to a single factor of $B$, rather than removing and reintroducing said factor of $B$ itself.

\subsection{Loop equations on the \texorpdfstring{$U_n^{(\mathcal{J}_1)}$}{Un(J1)}} \label{s4.1}
Before unpacking the average \eqref{eq4.3}, let us first give the companion to Lemma \ref{Lemma4}.
\begin{lemma} \label{Lemma5}
The partial derivatives of $(B^{k_1})_{ef}$ with respect to $(X^T)_{ab}$ and $X_{cd}$ (taking $e,f$ to be arbitrary and possibly equal to each other or one of $a,b,c,d$) are given by
\begin{align}
\partial_{(X^T)_{ab}}(B^{k_1})_{ef}&=\sum_{p_1+p_2=k_1-1}\left((B^{p_1})_{ea}(JXB^{p_2})_{bf}+(B^{p_1}X^TJ)_{eb}(B^{p_2})_{af}\right), \label{eq4.4}
\\ \partial_{X_{cd}}(B^{k_1})_{ef}&=\sum_{p_1+p_2=k_1-1}\left((B^{p_1})_{ed}(JXB^{p_2})_{cf}+(B^{p_1}X^TJ)_{ec}(B^{p_2})_{df}\right), \label{eq4.5}
\end{align}
where the sums over $p_1+p_2=k_1-1$ are over the range $p_1=0,1,\ldots,k_1-1$ with $0\leq p_2\leq k_1-1$ set to $k_1-p_1-1$; these sums are empty for $k_1=0$. For later convenience, let us also mention that
\begin{multline} \label{eq4.6}
\partial_{(X^T)_{ab}}(B^{k_1}X^T)_{ef}=\sum_{p_1+p_2=k_1-1}\left((B^{p_1})_{ea}(JXB^{p_2}X^T)_{bf}+(B^{p_1}X^TJ)_{eb}(B^{p_2}X^T)_{af}\right)
\\+\chi_{f=b}(B^{k_1})_{ea}.
\end{multline}
\end{lemma}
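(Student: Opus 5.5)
The approach is a direct computation via the Leibniz rule applied to the product $B^{k_1}=B\cdot B\cdots B$, where $B=X^TJX$. The matrices $X$ here are real, so $(X^T)_{ab}=X_{ba}$. I will therefore read $\partial_{(X^T)_{ab}}$ as $\partial_{X_{ba}}$, and it is enough to compute a single elementary derivative. The first step is to write $B_{gh}=X_{ug}J_{uv}X_{vh}$ and differentiate with respect to $X_{cd}$:
\begin{equation*}
\partial_{X_{cd}}B_{gh}=\chi_{g=d}J_{cv}X_{vh}+X_{ug}J_{uc}\chi_{h=d}=\chi_{g=d}(JX)_{ch}+(X^TJ)_{gc}\chi_{h=d}.
\end{equation*}
Replacing $(c,d)$ by $(b,a)$ gives the corresponding formula for $\partial_{(X^T)_{ab}}B_{gh}=\partial_{X_{ba}}B_{gh}$, namely $\chi_{g=a}(JX)_{bh}+(X^TJ)_{gb}\chi_{h=a}$.

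Next I apply the product rule to $(B^{k_1})_{ef}=B_{eg_1}B_{g_1g_2}\cdots B_{g_{k_1-1}f}$. The derivative falls on the $(p_1+1)$-th factor, with $p_1$ factors of $B$ to its left and $p_2=k_1-1-p_1$ to its right. Inserting the elementary derivative and contracting indices produces two terms for each such position:
\begin{equation*}
(B^{p_1})_{ed}(JXB^{p_2})_{cf}\quad\text{and}\quad(B^{p_1}X^TJ)_{ec}(B^{p_2})_{df}.
\end{equation*}
Summing over $p_1$ yields \eqref{eq4.5}. The substitution $(c,d)\to(b,a)$ then gives \eqref{eq4.4}. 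When $k_1=0$ we have $B^0=I$, which is constant, so the sum is empty, consistent with the stated convention.

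For \eqref{eq4.6}, I apply the product rule to $(B^{k_1}X^T)_{ef}=(B^{k_1})_{eg}(X^T)_{gf}$. When the derivative acts on $B^{k_1}$, \eqref{eq4.4} applies with the index $f$ contracted against $X^T$, and this turns $(JXB^{p_2})_{bg}$ into $(JXB^{p_2}X^T)_{bf}$ and $(B^{p_2})_{ag}$ into $(B^{p_2}X^T)_{af}$. When the derivative acts on $X^T$, we have $\partial_{(X^T)_{ab}}(X^T)_{gf}=\chi_{g=a}\chi_{f=b}$, which contributes $\chi_{f=b}(B^{k_1})_{ea}$.

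The main point requiring care is the convention in Lemma \ref{Lemma4}: $\partial_{(X^T)_{ab}}$ and $\partial_{X_{ba}}$ denote the same derivative, not independent variables. I will state this explicitly, and beyond that the remaining work is index bookkeeping, including the case where $e,f$ coincide with $a,b,c,d$. That case is harmless because every identity above holds entrywise for arbitrary fixed indices, before any summation convention is applied.
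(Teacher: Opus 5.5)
Your proposal is correct and follows the paper's argument. You compute the derivative of a single entry $B_{gh}$ by the Leibniz rule, sum over the position of the differentiated factor in $B^{k_1}$, pass between \eqref{eq4.4} and \eqref{eq4.5} via the identification $X_{cd}=(X^T)_{ab}$ at $(c,d)=(b,a)$, and obtain \eqref{eq4.6} by one more product rule on $(B^{k_1})_{eg}(X^T)_{gf}$; the only difference is that the paper differentiates with respect to $(X^T)_{ab}$ first and deduces \eqref{eq4.5}, whereas you start from $X_{cd}$ and deduce \eqref{eq4.4}.
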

\begin{proof}
First note that by the Leibniz product rule,
\begin{multline*}
\partial_{(X^T)_{ab}}B_{gh}=\left\{\partial_{(X^T)_{ab}}(X^T)_{gl}\right\}(JX)_{lh}+(X^TJ)_{gl}\left\{\partial_{(X^T)_{ab}}X_{lh}\right\}
\\=\chi_{g=a,l=b}(JX)_{lh}+(X^TJ)_{gl}\chi_{l=b,h=a}=\chi_{g=a}(JX)_{bh}+\chi_{h=a}(X^TJ)_{gb}.
\end{multline*}
Thus, the left-hand side of equation \eqref{eq4.5} decomposes as
\begin{multline*}
\partial_{(X^T)_{ab}}(B^{k_1})_{ef}=\sum_{p_1=0}^{k_1-1}(B^{p_1})_{eg}\left\{\partial_{(X^T)_{ab}}B_{gh}\right\}(B^{k_1-p_1-1})_{hf}
\\=\sum_{p_1=0}^{k_1-1}\left((B^{p_1})_{ea}(JX)_{bh}(B^{k_1-p_1-1})_{hf}+(B^{p_1})_{eg}(X^TJ)_{gb}(B^{k_1-p_1-1})_{af}\right),
\end{multline*}
which simplifies to the right-hand side of equation \eqref{eq4.4} upon summing over the repeated indices $g,h$. To obtain equation \eqref{eq4.5} from equation \eqref{eq4.4}, simply observe that setting $(c,d)=(b,a)$ shows that $X_{cd}=(X^T)_{ab}$. For equation \eqref{eq4.6}, note that
\begin{align*}
\partial_{(X^T)_{ab}}(B^{k_1}X^T)_{ef}&=\left\{\partial_{(X^T)_{ab}}(B^{k_1})_{eg}\right\}(X^T)_{gf}+(B^{k_1})_{eg}\left\{\partial_{(X^T)_{ab}}(X^T)_{gf}\right\}
\\&=\left\{\partial_{(X^T)_{ab}}(B^{k_1})_{eg}\right\}(X^T)_{gf}+(B^{k_1})_{eg}\chi_{g=a,f=b}
\\&=\left\{\partial_{(X^T)_{ab}}(B^{k_1})_{eg}\right\}(X^T)_{gf}+(B^{k_1})_{ea}\chi_{f=b};
\end{align*}
substituting equation \eqref{eq4.4} with $f\mapsto g$ into the braces then completes the proof.
\end{proof}

With these identities in hand, we can now apply integration by parts to the average \eqref{eq4.3} in two different ways to obtain a precursor to the loop equation on the $U_n$.
\begin{proposition} \label{prop5}
The average \eqref{eq4.3} simplifies to both the left- and right-hand sides of the following equation:
\begin{equation} \label{eq4.7}
N^2m_{k_1+1,I_n}=\mean{\left\{\partial_{(X^T)_{ab}}(J^T)_{bc}\partial_{X_{cd}}(B^{k_1})_{ad}\Tr B^{I_n}\right\}}.
\end{equation}
\end{proposition}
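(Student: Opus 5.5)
The plan is to evaluate the integral behind the average \eqref{eq4.3} in two ways, using the fact that the operators $\partial_{X_{cd}}+N(X^T)_{dc}$ and $\partial_{(X^T)_{ab}}+NX_{ba}$ are conjugates of plain partial derivatives by the real Ginibre weight. Write $p(X)=(N/2\pi)^{N^2/2}\exp(-\tfrac{N}{2}\Tr X^TX)$ and $F=(B^{k_1})_{ad}\Tr B^{I_n}$. As in the remark after Lemma \ref{Lemma4}, the differential operators inside \eqref{eq4.3} act on everything to their right, including $p$. So \eqref{eq4.3} is the integral over $\mathbb{M}_{N\times N}(\mathbb{R})$ of
\[
\left(\partial_{(X^T)_{ab}}+NX_{ba}\right)(J^T)_{bc}\left(\partial_{X_{cd}}+N(X^T)_{dc}\right)\left(Fp\right).
\]
Two facts drive everything. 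First, $\partial_{X_{cd}}p=-NX_{cd}\,p$ and $X_{cd}=(X^T)_{dc}$, and similarly for $(X^T)_{ab}=X_{ba}$; hence $(\partial_{X_{cd}}+N(X^T)_{dc})(Gp)=p\,\partial_{X_{cd}}G$ for any polynomial $G$. Second, integrals of total derivatives of polynomial-times-Gaussian functions vanish.

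For the right-hand side of \eqref{eq4.7}, I will apply the conjugation identity twice, first to the inner operator and then to the outer one. The constant $(J^T)_{bc}$ commutes with both. This turns the integrand into $p\,\{\partial_{(X^T)_{ab}}(J^T)_{bc}\partial_{X_{cd}}F\}$, whose integral is exactly the braced average on the right of \eqref{eq4.7}.

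For the left-hand side, I will instead integrate by parts. The outer $\partial_{(X^T)_{ab}}$ acts on a total expression, so its term integrates to zero and leaves
\[
N\!\int X_{ba}(J^T)_{bc}\left(\partial_{X_{cd}}+NX_{cd}\right)(Fp).
\]
The $N^2$ part of this is $N^2\int (X^TJ^TX)_{ad}Fp$. Since $(X^TJ^TX)_{ad}=B_{da}$, as noted in the proof of Lemma \ref{Lemma4}, it equals $N^2\langle (B^{k_1})_{ad}B_{da}\Tr B^{I_n}\rangle=N^2m_{k_1+1,I_n}$.

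The remaining piece is $N\int X_{ba}(J^T)_{bc}\partial_{X_{cd}}(Fp)$. Integrating by parts once more moves the derivative onto $X_{ba}$. Since $\partial_{X_{cd}}X_{ba}=\chi_{c=b}\chi_{d=a}$, this leaves $-N\int (J^T)_{bb}(B^{k_1})_{aa}\Tr B^{I_n}\,p$, which vanishes because $\Tr J=0$. This is the same cancellation used in Lemma \ref{Lemma4}.

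Equivalently, one could move both derivatives in \eqref{eq4.3} onto $p$ and invoke \eqref{eq4.1} directly, but the route above avoids expanding the second derivative of $p$. The main point needing care is index bookkeeping under the Einstein convention: tracking which factor each derivative hits, and confirming that the only contraction producing a trace of $J$ is the one that vanishes. Justifying the vanishing boundary terms is routine, by Gaussian decay.
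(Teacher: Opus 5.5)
Your proposal is correct and is essentially the paper's proof. The right-hand side comes from applying the conjugation identity $(\partial_{X_{cd}}+N(X^T)_{dc})(Gp)=p\,\partial_{X_{cd}}G$ twice. The left-hand side comes from the same expansion of \eqref{eq4.3}: you merge the paper's two vanishing total-derivative terms into one, kill the cross term by integrating by parts and using $\Tr J^T=0$, and identify the $N^2$ term via $(X^TJ^TX)_{ad}=B_{da}$. Your aside about moving both derivatives onto $p$ is also valid; it amounts to invoking Lemma \ref{Lemma4} directly.
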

\begin{proof}
For the left-hand side, expanding the argument of the average \eqref{eq4.3} shows that
\begin{align}
&\mean{\left(\partial_{(X^T)_{ab}}+NX_{ba}\right)(J^T)_{bc}\left(\partial_{X_{cd}}+N(X^T)_{dc}\right)(B^{k_1})_{ad}\Tr B^{I_n}} \nonumber
\\&=\mean{\partial_{(X^T)_{ab}}(J^T)_{bc}\partial_{X_{cd}}(B^{k_1})_{ad}\Tr B^{I_n}}+N\mean{\partial_{(X^T)_{ab}}(J^T)_{bc}(X^T)_{dc}(B^{k_1})_{ad}\Tr B^{I_n}} \nonumber
\\&\quad+N\mean{X_{ba}(J^T)_{bc}\partial_{X_{cd}}(B^{k_1})_{ad}\Tr B^{I_n}}+N^2\mean{X_{ba}(J^T)_{bc}(X^T)_{dc}(B^{k_1})_{ad}\Tr B^{I_n}}.  \label{eq4.8}
\end{align}
The first two terms on the right-hand side of this equation are manifestly zero by the fundamental theorem of calculus. Moreover, contracting indices in the fourth term on the right-hand side of equation \eqref{eq4.8} while rewriting $(J^T)_{bc}$ as $J_{cb}$ shows that this term is equal to the left-hand side of equation \eqref{eq4.7}. Finally, to see that the third term on the right-hand side of equation \eqref{eq4.8} vanishes, note that for $f(X)$ a function of the entries of $X$, integration by parts shows that
\begin{align*}
\mean{X_{ba}(J^T)_{bc}\partial_{X_{cd}}f(X)}&=\mean{\partial_{X_{cd}}X_{ba}(J^T)_{bc}f(X)}-\mean{\left\{\partial_{X_{cd}}X_{ba}\right\}(J^T)_{bc}f(X)}
\\&=\mean{\partial_{X_{cd}}X_{ba}(J^T)_{bc}f(X)}-\mean{\chi_{a=d}(J^T)_{bb} f(X)},
\end{align*}
which reduces to zero by application of the fundamental theorem of calculus to the first term and by making the substitution $(J^T)_{bb}=\Tr J^T=0$ in the second term.

Now, for the right-hand side of equation \eqref{eq4.7}, let $f(X)$ once again be a function of the entries of $X$ and observe that
\begin{multline} \label{eq4.9}
\left(\partial_{(X^T)_{ab}}+NX_{ba}\right)f(X)\exp\left(-\tfrac{N}{2}\Tr(X^TX)\right)
\\=\left\{\partial_{(X^T)_{ab}}f(X)\right\}\exp\left(-\tfrac{N}{2}\Tr(X^TX)\right)+f(X)\left\{\partial_{(X^T)_{ab}}\exp\left(-\tfrac{N}{2}\Tr(X^TX)\right)\right\}
\\+NX_{ba}f(X)\exp\left(-\tfrac{N}{2}\Tr(X^TX)\right)
\\=\left\{\partial_{(X^T)_{ab}}f(X)\right\}\exp\left(-\tfrac{N}{2}\Tr(X^TX)\right)
\end{multline}
since $\partial_{(X^T)_{ab}}\exp\left(-\tfrac{N}{2}\Tr(X^TX)\right)$ is equal to $-NX_{ba}\exp\left(-\tfrac{N}{2}\Tr(X^TX)\right)$. We similarly have
\begin{equation*}
\left(\partial_{X_{cd}}+N(X^T)_{dc}\right)f(X)\exp\left(-\tfrac{N}{2}\Tr(X^TX)\right)=\left\{\partial_{X_{cd}}f(X)\right\}\exp\left(-\tfrac{N}{2}\Tr(X^TX)\right).
\end{equation*}
Using this latter identity shows that the average \eqref{eq4.3} simplifies as
\begin{multline*}
\mean{\left(\partial_{(X^T)_{ab}}+NX_{ba}\right)(J^T)_{bc}\left(\partial_{X_{cd}}+N(X^T)_{dc}\right)(B^{k_1})_{ad}\Tr B^{I_n}}
\\=\mean{\left(\partial_{(X^T)_{ab}}+NX_{ba}\right)(J^T)_{bc}\left\{\partial_{X_{cd}}(B^{k_1})_{ad}\Tr B^{I_n}\right\}},
\end{multline*}
which can be seen to be precisely the right-hand side of equation \eqref{eq4.7} upon setting
\begin{equation*}
f(X)=(J^T)_{bc}\left\{\partial_{X_{cd}}(B^{k_1})_{ad}\Tr B^{I_n}\right\}
\end{equation*}
into the identity \eqref{eq4.9}.
\end{proof}

Our claim is that using the Leibniz product rule to expand the derivative in the right-hand side of equation \eqref{eq4.7} and then properly simplifying each of the averages in the resulting sum will produce our sought loop equation. To demonstrate this claim, let us now write the right-hand side of equation \eqref{eq4.7} as
\begin{equation} \label{eq4.10}
\mean{\left\{\partial_{(X^T)_{ab}}(J^T)_{bc}\partial_{X_{cd}}(B^{k_1})_{ad}\Tr B^{I_n}\right\}}=\mathcal{A}+\sum_{i=2}^n\left(\mathcal{A}_i+\overline{\mathcal{A}}_i+\mathcal{B}_i\right)+\sum_{\substack{2\leq i,j\leq n,\\i\neq j}}\mathcal{C}_{ij},
\end{equation}
where
\begin{align}
\mathcal{A}&=\mean{\left\{\partial_{(X^T)_{ab}}(J^T)_{bc}\partial_{X_{cd}}(B^{k_1})_{ad}\right\}\Tr B^{I_n}}, \label{eq4.11}
\\ \mathcal{A}_i&=\mean{\left\{\partial_{(X^T)_{ab}}(B^{k_1})_{ad}\right\}(J^T)_{bc}\left\{\partial_{X_{cd}}\Tr B^{k_i}\right\}\Tr B^{I_n\setminus\{k_i\}}},
\\ \overline{\mathcal{A}}_i&=\mean{\left\{\partial_{(X^T)_{ab}}\Tr B^{k_i}\right\}(J^T)_{bc}\left\{\partial_{X_{cd}}(B^{k_1})_{ad}\right\}\Tr B^{I_n\setminus\{k_i\}}},
\\ \mathcal{B}_i&=\mean{\left\{\partial_{(X^T)_{ab}}(J^T)_{bc}\partial_{X_{cd}}\Tr B^{k_i}\right\}(B^{k_1})_{ad}\Tr B^{I_n\setminus\{k_i\}}},
\\ \mathcal{C}_{ij}&=\mean{\left\{\partial_{(X^T)_{ab}}\Tr B^{k_i}\right\}(J^T)_{bc}\left\{\partial_{X_{cd}}\Tr B^{k_j}\right\}(B^{k_1})_{ad}\Tr B^{I_n\setminus\{k_i,k_j\}}}. \label{eq4.15}
\end{align}
We show in the following lemma that each of these terms can be expressed in terms of the mixed moments, so combining equations \eqref{eq4.7} and \eqref{eq4.10} will result in a loop equation on said moments, as claimed.
\begin{lemma} \label{Lemma6}
For $k\in\mathbb{N}$, let $[k]_{\mathrm{mod}\, 2}$ equal zero when $k$ is even and one when $k$ is odd. Then, the quantities \eqref{eq4.11}--\eqref{eq4.15} above are given by
\begin{align}
\mathcal{A}&=\sum_{p_1+p_2=k_1-1}m_{p_1,p_2,I_n}-\sum_{p_1+p_2+p_3=k_1-1}m_{p_1,p_2,p_3,I_n}+\frac{1-k_1^2}{2}m_{k_1-1,I_n}, \label{eq4.16}
\\ \mathcal{A}_i=\overline{\mathcal{A}}_i&=2k_i[k_i+1]_{\mathrm{mod}\,2}\left(m_{k_1+k_i-1,I_n\setminus\{k_i\}}-\sum_{p_1+p_2=k_1-1}m_{k_i+p_1,p_2,I_n\setminus\{k_i\}}\right), \label{eq4.17}
\\ \mathcal{B}_i&=-2k_i[k_i+1]_{\mathrm{mod}\,2}\left(m_{k_1+k_i-1,I_n\setminus\{k_i\}}+\sum_{p_1+p_2=k_i-1}m_{k_1+p_1,p_2,I_n\setminus\{k_i\}}\right), \label{eq4.18}
\\ \mathcal{C}_{ij}=\mathcal{C}_{ji}&=-4k_ik_j[k_i+1]_{\mathrm{mod}\,2}[k_j+1]_{\mathrm{mod}\,2}\,m_{k_1+k_i+k_j-1,I_n\setminus\{k_i,k_j\}}. \label{eq4.19}
\end{align}
\end{lemma}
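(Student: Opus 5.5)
The plan is to compute each of \eqref{eq4.11}--\eqref{eq4.15} directly from the derivative formulae of Lemma \ref{Lemma5}, contract the resulting index strings into traces, and simplify with $J^T=-J$, $J^2=-I_N$ and $\Tr J=0$. Two auxiliary facts do most of the work, and I would record them first:
\begin{itemize}
\item[(i)] $XJ^TX^T$-type contractions collapse back to $B$; in particular $X^TJ^TX=-B$ and $(JX B^p X^T J)$ reassembles into powers of $B$ after using $J^2=-I$.
\item[(ii)] $\Tr B^{k}=0$ for $k$ odd, and more generally $\Tr(B^pX^TJ)$ and similar odd-length words vanish by antisymmetry.
\end{itemize}
Fact (ii) is what produces the factors $[k_i+1]_{\mathrm{mod}\,2}$. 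For the trace derivatives, \eqref{eq4.4} with $e=f$ summed gives
\begin{equation*}
\partial_{(X^T)_{ab}}\Tr B^{k_i}=2k_i(JXB^{k_i-1})_{ba},
\end{equation*}
using cyclicity and $(B^{p}X^TJ)^T=-JXB^p\cdot(-1)$ up to sign. Similarly, $\partial_{X_{cd}}\Tr B^{k_i}=2k_i(B^{k_i-1}X^TJ)_{dc}$. If $k_i$ is odd these terms cancel against the vanishing trace, which explains the parity factor.

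For $\mathcal{C}_{ij}$, insert both trace derivatives and contract with $(J^T)_{bc}$ and $(B^{k_1})_{ad}$. The chain
\begin{equation*}
(JXB^{k_i-1})_{ba}(J^T)_{bc}(B^{k_j-1}X^TJ)_{dc}(B^{k_1})_{ad}
\end{equation*}
becomes a single trace $\Tr(B^{k_i-1}X^TJ^TJ^TJX B^{k_j-1}\cdots)$ which reduces via $J^TJ^T=-I$ to $-\Tr B^{k_1+k_i+k_j-1}$. This gives \eqref{eq4.19}; the symmetry $\mathcal{C}_{ij}=\mathcal{C}_{ji}$ is then evident.

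For $\mathcal{A}_i$ and $\overline{\mathcal{A}}_i$, combine \eqref{eq4.4} (respectively \eqref{eq4.5}) applied to $(B^{k_1})_{ad}$ with the trace derivative. Each of the two summands in \eqref{eq4.4} contracts to one of two forms:
\begin{itemize}
\item[(a)] a single trace $\Tr B^{k_1+k_i-1}$, arising when the index pattern closes into one loop;
\item[(b)] a product $\Tr B^{k_i+p_1}\Tr B^{p_2}$.
\end{itemize}
The signs come from the number of $J^T$'s converted using $J^T=-J$.

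For $\mathcal{B}_i$, I need the second derivative of $\Tr B^{k_i}$. I would take $\partial_{(X^T)_{ab}}$ of $2k_i(B^{k_i-1}X^TJ)_{dc}$ using \eqref{eq4.6} (with $J$ appended). The resulting $\chi_{f=b}$ term gives the single-trace piece, and the $p_1+p_2=k_i-2$ sum gives split traces. Reindexing, these combine into the sum over $p_1+p_2=k_i-1$ in \eqref{eq4.18}.

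The main obstacle is $\mathcal{A}$, which requires $\partial_{(X^T)_{ab}}$ applied to \eqref{eq4.5}. The plan is to apply \eqref{eq4.4} and \eqref{eq4.6} to each factor of each summand of \eqref{eq4.5}, which produces four families of terms. Contracting these with $(J^T)_{bc}$ and $a=a$, $d=d$ yields:
\begin{itemize}
\item[(1)] double sums that split into two traces, giving $m_{p_1,p_2,I_n}$;
\item[(2)] triple sums $p_1+p_2+p_3=k_1-1$ that split into three traces with a minus sign from $J^TJ^T$;
\item[(3)] terms where the loop stays connected, giving $\Tr B^{k_1-1}$ with weights depending on $p_1$;
\item[(4)] the $\chi$ term from \eqref{eq4.6}, producing $\Tr(J^T)$, which vanishes, or $N$-dependent traces.
\end{itemize}
The delicate step is bookkeeping family (3): several of these vanish by parity, since $\Tr B^{p}=0$ for odd $p$, while the survivors must sum to $\frac{1-k_1^2}{2}m_{k_1-1,I_n}$. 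I would verify this coefficient by summing the surviving $p_1$-dependent counts with the parity restriction (noting $k_1$ is necessarily odd for $m_{k_1+1,\ldots}$ to be nonzero). I would also check it against small cases, such as $k_1=1$ giving zero, and $k_1=3$ computed directly.
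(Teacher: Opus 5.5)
This is essentially the paper's proof: differentiate with Lemma~\ref{Lemma5}, contract indices, and simplify with $J^T=-J$, $J^2=-I$, $B^T=-B$. The paper only streamlines $\mathcal{A}$ by contracting $(J^T)_{bc}$ into \eqref{eq4.5} before differentiating (giving \eqref{eq4.21}), so that the coefficient $\frac{1-k_1^2}{2}$ comes solely from $\partial_{(X^T)_{ab}}$ hitting $\Tr B^{p_2}$, whereas in your differentiate-first order the connected $\Tr B^{k_1-1}$ terms from the first summand of \eqref{eq4.5} cancel in pairs rather than vanishing by parity. One slip to fix in $\mathcal{B}_i$: the $\chi_{f=b}$ term of \eqref{eq4.6} carries a factor $N$ and is just the $p_2=0$ term completing $\sum_{p_1+p_2=k_i-1}m_{k_1+p_1,p_2,I_n\setminus\{k_i\}}$, while the standalone $m_{k_1+k_i-1,I_n\setminus\{k_i\}}$ comes from the second half of the sum in \eqref{eq4.6}, whose summands (with $(e,f)=(d,b)$) contract over $b$ to $(-1)^{p_2}(B^{k_i-1})_{da}$, and $\sum_{p_2=0}^{k_i-2}(-1)^{p_2}=1$ for even $k_i$.
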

\begin{proof}
We present only the computations of $\mathcal{A}$ and $\mathcal{C}_{ij}$ since their derivations contain all of the ideas needed to prove equations \eqref{eq4.17} and \eqref{eq4.18}. We first give the proof of equation \eqref{eq4.19}, as it is simpler than that of equation \eqref{eq4.16}. Thus, we begin by using equation \eqref{eq4.4} with $(k_1,f)\mapsto(k_i,e)$ to see that
\begin{multline} \label{eq4.20}
\partial_{(X^T)_{ab}}\Tr B^{k_i}=\partial_{(X^T)_{ab}}(B^{k_i})_{ee}=\sum_{p_1+p_2=k_i-1}\left((JXB^{k_i-1})_{ba}+(B^{k_i-1}X^TJ)_{ab}\right)
\\=(JXB^{k_i-1})_{ba}\sum_{p_1+p_2=k_i-1}\left(1+(-1)^{k_i}\right)=2k_i[k_i+1]_{\mathrm{mod}\,2}(JXB^{k_i-1})_{ba}.
\end{multline}
Here, we have used the fact that $(B^{k_i-1}X^TJ)^T=J^TX(B^T)^{k_i-1}=(-1)^{k_i}JXB^{k_i-1}$ and that $(1+(-1)^{k_i})$ equals two if $k_i$ is even and vanishes, otherwise. Repeating this argument with equation \eqref{eq4.5} likewise shows that
\begin{equation*}
\partial_{X_{cd}}\Tr B^{k_j}=2k_j[k_j+1]_{\mathrm{mod}\,2}(B^{k_j-1}X^TJ)_{dc}.
\end{equation*}
Multiplying these two results together with $(J^T)_{bc}$ then reveals the identity
\begin{equation*}
\left\{\partial_{(X^T)_{ab}}\Tr B^{k_i}\right\}(J^T)_{bc}\left\{\partial_{X_{cd}}\Tr B^{k_j}\right\}=-4k_ik_j[k_i+1]_{\mathrm{mod}\,2}[k_j+1]_{\mathrm{mod}\,2}(B^{k_i+k_j-1})_{da}.
\end{equation*}
Further multiplying this by $(B^{k_1})_{ad}\Tr B^{I_n\setminus\{k_i,k_j\}}$, summing over the repeated indices $a,d$, and then taking the average yields the right-hand side of equation \eqref{eq4.19}, as required.

Similar to above, we begin the proof of equation \eqref{eq4.16} by noting that equation \eqref{eq4.5} with $(e,f)=(a,d)$ tells us that
\begin{multline} \label{eq4.21}
(J^T)_{bc}\partial_{X_{cd}}(B^{k_1})_{ad}=\sum_{p_1+p_2=k_1-1}\left((-1)^{p_2}(B^{k_1-1}X^T)_{ab}-(B^{p_1}X^T)_{ab}\Tr B^{p_2}\right)
\\=[k_1]_{\mathrm{mod}\,2}(B^{k_1-1}X^T)_{ab}-\sum_{p_1+p_2=k_1-1}(B^{p_1}X^T)_{ab}\Tr B^{p_2},
\end{multline}
where we have used the fact that $\sum_{p_1+p_2=k_1-1}(-1)^{p_2}=\sum_{p_2=0}^{k_1-1}(-1)^{p_2}$ equals one if $k_1$ is odd and zero, otherwise. Now, taking $(k_1,e,f)\mapsto(k_1-1,a,b)$ in equation \eqref{eq4.6} shows that
\begin{align}
\partial_{(X^T)_{ab}}(B^{k_1-1}X^T)_{ab}&=\sum_{p_1+p_2=k_1-2}\left(\Tr B^{p_1}\Tr B^{p_2+1}+(-1)^{p_2}\Tr B^{k_1-1}\right)+N\Tr B^{k_1-1} \nonumber
\\&=\sum_{p_1+p_2=k_1-1}\Tr B^{p_1}\Tr B^{p_2}+[k_1+1]_{\mathrm{mod}\,2}\Tr B^{k_1-1} \nonumber
\\&=\sum_{p_1+p_2=k_1-1}\Tr B^{p_1}\Tr B^{p_2}; \label{eq4.22}
\end{align}
the second line follows from making the replacement $p_2\mapsto p_2-1$ in the sum and noting that the new $(p_1,p_2)=(k_1-1,0)$ term this introduces is equivalent to the term $N\Tr B^{k_1-1}$ that already appears at the end of the first line, while the final line follows from the fact that $[k_1+1]_{\mathrm{mod}\,2}$ vanishes whenever $k_1$ is odd, but $\Tr B^{k_1-1}$ vanishes whenever $k_1$ is even. Moving on to the second term in equation \eqref{eq4.21}, using equation \eqref{eq4.20} with the replacement $k_i\mapsto p_2$ and equation \eqref{eq4.22} with $k_1\mapsto p_1$ shows that
\begin{multline} \label{eq4.23}
\partial_{(X^T)_{ab}}(B^{p_1}X^T)_{ab}\Tr B^{p_2}=\left\{\partial_{(X^T)_{ab}}(B^{p_1}X^T)_{ab}\right\}\Tr B^{p_2}+(B^{p_1}X^T)_{ab}\left\{\partial_{(X^T)_{ab}}\Tr B^{p_2}\right\}
\\=\sum_{q_1+q_2=p_1}\Tr B^{q_1}\Tr B^{q_2}\Tr B^{p_2}+[p_1]_{\mathrm{mod}\,2}\Tr B^{p_1}\Tr B^{p_2}+2p_2[p_2+1]_{\mathrm{mod}\,2}\Tr B^{p_1+p_2}
\\=\sum_{q_1+q_2=p_1}\Tr B^{q_1}\Tr B^{q_2}\Tr B^{p_2}+2p_2[p_2+1]_{\mathrm{mod}\,2}\Tr B^{p_1+p_2},
\end{multline}
where the last line follows from the fact that $[p_1]_{\mathrm{mod}\,2}$ vanishes if $p_1$ is even, but $\Tr B^{p_1}$ vanishes if $p_1$ is odd. Applying the operator $\partial_{(X^T)_{ab}}$ to both sides of equation \eqref{eq4.21} and then substituting in equations \eqref{eq4.22} and \eqref{eq4.23} shows that
\begin{multline} \label{eq4.24}
\partial_{(X^T)_{ab}}(J^T)_{bc}\partial_{X_{cd}}(B^{k_1})_{ad}=[k_1]_{\mathrm{mod}\,2}\sum_{p_1+p_2=k_1-1}\Tr B^{p_1}\Tr B^{p_2}
\\-\sum_{p_1+p_2=k_1-1}\left(2p_2[p_2+1]_{\mathrm{mod}\,2}\Tr B^{p_1+p_2}+\sum_{q_1+q_2=p_1}\Tr B^{q_1}\Tr B^{q_2}\Tr B^{p_2}\right).
\end{multline}
The factor of $[k_1]_{\mathrm{mod}\,2}$ in the first term can be removed since it is superfluous: for $\Tr B^{p_1},\Tr B^{p_2}$ to be non-zero, each of $p_1,p_2$ must be even and, consequently, the sum over $p_1+p_2=k_1-1$ is non-zero only if $k_1$ is odd. Likewise, the second term is non-zero for $k_1$ odd and it simplifies as
\begin{equation*}
-\sum_{p_1+p_2=k_1-1}2p_2[p_2+1]_{\mathrm{mod}\,2}\Tr B^{p_1+p_2}=-2\Tr B^{k_1-1}(2+4+\cdots+k_1-1)=\frac{1-k_1^2}{2}\Tr B^{k_1-1},
\end{equation*}
while replacing $p_1$ in the outer sum of the third term by $q_1+q_2$ and then renaming the summation indices shows that
\begin{multline*}
-\sum_{p_1+p_2=k_1-1}\sum_{q_1+q_2=p_1}\Tr B^{q_1}\Tr B^{q_2}\Tr B^{p_2}=-\sum_{q_1+q_2+p_2=k_1-1}\Tr B^{q_1}\Tr B^{q_2}\Tr B^{p_2}
\\=-\sum_{p_1+p_2+p_3=k_1-1}\Tr B^{p_1}\Tr B^{p_2}\Tr B^{p_3}.
\end{multline*}
Combining these observations together, we thus see that equation \eqref{eq4.24} reduces to
\begin{multline*}
\partial_{(X^T)_{ab}}(J^T)_{bc}\partial_{X_{cd}}(B^{k_1})_{ad}=\sum_{p_1+p_2=k_1-1}\Tr B^{p_1}\Tr B^{p_2}+\frac{1-k_1^2}{2}\Tr B^{k_1-1}
\\-\sum_{p_1+p_2+p_3=k_1-1}\Tr B^{p_1}\Tr B^{p_2}\Tr B^{p_3}.
\end{multline*}
Finally, multiplying the right-hand side of this equation by $\Tr B^{I_n}$ and taking the average produces the right-hand side of equation \eqref{eq4.16}, as required.
\end{proof}

It is now a simple matter of substituting the expressions \eqref{eq4.16}--\eqref{eq4.19} into the right-hand side of equation \eqref{eq4.10} while replacing the left-hand side by $N^2m_{k_1+1,I_n}$, as prescribed by Propositon \ref{prop5}, to obtain the desired loop equation on the mixed moments $m_{k_1,\ldots,k_n}$. This loop equation is presented as equation \eqref{eqA.1} in Appendix \ref{appendixC.1}.

At this point, it is possible to proceed by substituting the moment-cumulants relation \eqref{eq1.7} into the loop equation on the moments to obtain an analogous equation on the cumulants $c_{k_1,\ldots,k_n}$. Simplifying this equation on the $c_{k_1,\ldots,k_n}$ in an appropriate manner (using a similar argument to that given in the proof of Proposition \ref{prop7} in \S\ref{s4.2} upcoming), multiplying the result by $x_1^{-k_1-1}\cdots x_n^{-k_n-1}$, and summing over $k_1,\ldots,k_n\geq0$ would then yield the corresponding loop equation on the connected correlators $W_n(x_1,\ldots,x_n)$. We do not go down this path, instead opting to first derive the loop equation on the unconnected correlators $U_n(x_1,\ldots,x_n)$ and then transforming it, through the identity \citep{Smi95}
\begin{equation} \label{eq4.25}
W_n(x_1,\ldots,x_n)=U_n(x_1,J_n)-\sum_{\emptyset\ne J\subseteq J_n}W_{n-\#J}(x_1,J_n\setminus J)U_{\#J}(J),
\end{equation}
into the loop equation on the $W_n(x_1,\ldots,x_n)$ --- here and henceforth, $J_n:=(x_2,\ldots,x_n)$ and $\#S$ denotes the size of the set $S$. If necessary, the loop equation on the mixed cumulants $c_{k_1,\ldots,k_n}$ can be extracted from the loop equation on the $W_n(x_1,\ldots,x_n)$ by equating coefficients of $x_1^{-k_1-1}\cdots x_n^{-k_n-1}$, possibly by taking suitable residues.

\begin{proposition} \label{prop6}
Set $U_0:=1$, and define $U_{n'}:=0$ for $n'<0$. Furthermore, for $i=2,\ldots,n$, define the auxiliary function
\begin{multline*}
A_i(x_1,J_n):=x_i^2\left(2U_n(x_1,x_1,J_n\setminus\{x_i\})-U_n(x_i,J_n)\right)-x_1x_iU_n(x_1,J_n)
\\+\frac{1}{x_1}\left(x_1x_iU_{n-1}(J_n)-x_i^2U_{n-1}(x_1,J_n\setminus\{x_i\})\right).
\end{multline*}
Then, for $n\geq1$, the unconnected correlators satisfy the loop equation
\begin{multline} \label{eq4.26}
0=x_1U_{n+2}(x_1,x_1,x_1,J_n)-U_{n+1}(x_1,x_1,J_n)+N^2x_1U_n(x_1,J_n)-N^3U_{n-1}(J_n)
\\+\frac{1}{2x_1}\left(x_1^2\frac{\partial^2}{\partial x_1^2}+x_1\frac{\partial}{\partial x_1}-1\right)U_n(x_1,J_n)+2\sum_{i=2}^n\frac{\partial}{\partial x_i}\left\{\frac{A_i(x_1,J_n)}{x_1^2-x_i^2}\right\}
\\+8\sum_{2\leq i<j\leq n}\frac{1}{x_1}\frac{\partial^2}{\partial x_i\partial x_j}\Bigg\{\frac{x_1x_i^2x_jU_{n-2}(J_n\setminus\{x_i\})}{(x_1^2-x_j^2)(x_i^2-x_j^2)}-\frac{x_1x_ix_j^2U_{n-2}(J_n\setminus\{x_j\})}{(x_1^2-x_i^2)(x_i^2-x_j^2)}
\\+\frac{x_i^2x_j^2U_{n-2}(x_1,J_n\setminus\{x_i,x_j\})}{(x_1^2-x_i^2)(x_1^2-x_j^2)}\Bigg\}.
\end{multline}
\end{proposition}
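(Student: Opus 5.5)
The plan is to assemble the moment loop equation and then take generating functions. First, substitute Lemma \ref{Lemma6} into equation \eqref{eq4.10} and use Proposition \ref{prop5} to replace the left-hand side by $N^2m_{k_1+1,I_n}$. This gives the moment-level loop equation \eqref{eqA.1}, valid for all $k_1,\ldots,k_n\geq0$. Here $m$ with a zero index is read as including a factor $\Tr B^0=N$; this convention is what will generate the $N^3U_{n-1}$ term and the $U_{n-1}$ pieces of $A_i$.

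Next, multiply by $x_1^{-k_1-2}\prod_{i\geq2}x_i^{-k_i-1}$ (or $x_1^{-k_1-1}$, whichever normalisation matches the factor $x_1$ in front of $U_{n+2}$ and $N^2x_1U_n$) and sum over all $k_i$. I will treat each term separately:
\begin{itemize}
\item The convolution sums $\sum_{p_1+p_2=k_1-1}$ and $\sum_{p_1+p_2+p_3=k_1-1}$ become evaluations on the diagonal, $U_{n+1}(x_1,x_1,J_n)$ and $U_{n+2}(x_1,x_1,x_1,J_n)$.
\item The boundary terms where some $p$ is zero, together with the offset between $k_1+1$ and $k_1$, account for the mismatch between the $k\geq1$ range in \eqref{eq1.8} and the $p\geq0$ range. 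They are absorbed using $U_0=1$ and $m_0=N$, and produce $N^2x_1U_n$ and $-N^3U_{n-1}$.
\item The weight $(1-k_1^2)/2$ is realised by the Euler-type operator $\tfrac{1}{2x_1}(x_1^2\partial_{x_1}^2+x_1\partial_{x_1}-1)$. This works because $(x\partial_x)^2$ acting on $x^{-k}$ gives $k^2x^{-k}$, so I only need to check the index shift.
\end{itemize}

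For the $\mathcal{A}_i,\overline{\mathcal{A}}_i,\mathcal{B}_i$ terms, note that the factor $k_i[k_i+1]_{\mathrm{mod}\,2}$ restricts to even $k_i$ weighted by $k_i$. The weight $k_i$ comes from applying $-\partial_{x_i}$ to $x_i^{-k_i}$. The parity restriction is handled by even-part symmetrisation, which turns geometric sums like $\sum_{k_i\,\mathrm{even}}x_1^{-k_i}x_i^{-k_i}$ into rational kernels with denominator $x_1^2-x_i^2$. Summing a moment $m_{k_1+k_i-1,\ldots}$ over $k_1$ and even $k_i$ then yields the divided-difference combination $\big(x_i^2U(x_i,\ldots)-x_1^2U(x_1,\ldots)\big)/(x_1^2-x_i^2)$-type expressions. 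I will compute $2(\mathcal{A}_i+\overline{\mathcal{A}}_i)+\ldots$ (note $\mathcal{A}_i=\overline{\mathcal{A}}_i$) together with $\mathcal{B}_i$, and collect everything under one $\partial_{x_i}\{A_i/(x_1^2-x_i^2)\}$. The $m_{k_1+k_i-1}$ contributions from $\mathcal{A}_i$ and $\mathcal{B}_i$ cancel. The surviving convolution terms give $2U_n(x_1,x_1,\ldots)$ and $U_n(x_i,J_n)$. The $p=0$ boundary terms give the $U_{n-1}$ parts of $A_i$.

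For $\mathcal{C}_{ij}$, I double-sum over even $k_i,k_j$ with weight $k_ik_j$, which becomes $\partial_{x_i}\partial_{x_j}$. A single moment indexed by $k_1+k_i+k_j-1$ then yields a partial-fraction expansion in $x_1^2,x_i^2,x_j^2$. This produces the three rational terms in the last bracket: the one with $U_{n-2}(x_1,\ldots)$, plus two terms where the merged index is read off at $x_i$ or $x_j$. Those last two carry $U_{n-2}(J_n\setminus\{x_i\})$ and $U_{n-2}(J_n\setminus\{x_j\})$, with the removed variable having been summed out.

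The ordered sum over $i\neq j$ becomes $2\sum_{i<j}$, which combined with $-4$ gives the coefficient $8$ up to sign. The main obstacle is bookkeeping: the index shifts, the $k=0$ boundary terms, and the parity kernels must be matched exactly. I would verify the final formula against the small cases $n=1$ and $n=2$ using explicitly computed low moments (for instance $m_2$ and $m_{2,2}$ from Proposition \ref{prop3}) to fix signs and powers of $x_1$.
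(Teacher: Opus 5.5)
Your overall route is the paper's: assemble the moment equation \eqref{eqA.1} from Lemma~\ref{Lemma6}, \eqref{eq4.10} and Proposition~\ref{prop5}, multiply by $x_1^{-k_1-1}\cdots x_n^{-k_n-1}$, and convert term by term as in Appendix~\ref{appendixC.1}. However, your bookkeeping for the $A_i$ term contains a step that fails.

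The $m_{k_1+k_i-1,I_n\setminus\{k_i\}}$ contributions do not cancel. By \eqref{eq4.17}--\eqref{eq4.18}, $\mathcal{A}_i+\overline{\mathcal{A}}_i$ contributes $+4k_i[k_i+1]_{\mathrm{mod}\,2}\,m_{k_1+k_i-1,I_n\setminus\{k_i\}}$ to the right-hand side of \eqref{eq4.10}, and $\mathcal{B}_i$ contributes $-2k_i[k_i+1]_{\mathrm{mod}\,2}\,m_{k_1+k_i-1,I_n\setminus\{k_i\}}$. So a net $-2k_i[k_i+1]_{\mathrm{mod}\,2}\,m_{k_1+k_i-1,I_n\setminus\{k_i\}}$ survives in \eqref{eqA.1}. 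That surviving term is exactly where the $U_{n-1}$ part of $A_i$ comes from. By \eqref{eqA.6} its generating function is
\[
-\frac{2}{x_1}\frac{\partial}{\partial x_i}\Big\{\frac{x_i^2U_{n-1}(x_1,J_n\setminus\{x_i\})-x_1x_iU_{n-1}(J_n)}{x_1^2-x_i^2}\Big\},
\]
which is precisely the contribution of $\frac{1}{x_1}\big(x_1x_iU_{n-1}(J_n)-x_i^2U_{n-1}(x_1,J_n\setminus\{x_i\})\big)$ to $2\partial_{x_i}\{A_i/(x_1^2-x_i^2)\}$. The $U_n$ part of $A_i$ is then $4\times$\eqref{eqA.8} plus $2\times$\eqref{eqA.7}. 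The replacement source you propose, ``$p=0$ boundary terms'' coming from $\Tr B^0=N$, cannot supply these pieces: they carry no power of $N$, whereas anything produced by $m_{0,\ldots}=Nm_{\ldots}$ would.

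Relatedly, there is no range mismatch to absorb. Equation \eqref{eq4.26} is only true when $U_n$ is defined with sums over $k_i\geq0$, which is what the appendix computations use. Already at $n=1$ the constant $-N^3$ is cancelled only because $U_1(x_1)=N/x_1+\cdots$, so with the $k\geq1$ range of \eqref{eq1.8} the statement fails. With the $k_i\geq0$ convention, the convolution identities \eqref{eqA.2}, \eqref{eqA.3}, \eqref{eqA.7}, \eqref{eqA.8} hold exactly with no $p=0$ corrections. The only boundary term in the whole computation is the missing $m_{0,I_n}=Nm_{I_n}$ when re-indexing $m_{k_1+1,I_n}$ in \eqref{eqA.4}, which gives $-N^3U_{n-1}(J_n)$. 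The correct normalisation is $x_1^{-k_1-1}$.

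The remaining parts of your plan are correct: the diagonal evaluations for the $\mathcal{A}$ convolutions, the Euler operator for $(k_1^2-1)/2$, the coefficient $8$ from $\mathcal{C}_{ij}=\mathcal{C}_{ji}$, and the three-term partial fraction in $x_1^2,x_i^2,x_j^2$ for \eqref{eqA.9}.
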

\begin{proof}
As mentioned above, inserting the equalities \eqref{eq4.16}--\eqref{eq4.19} into equation \eqref{eq4.10} and combining the result with equation \eqref{eq4.7} gives the loop equation \eqref{eqA.1} on the mixed moments $m_{k_1,\ldots,k_n}$. Multiplying both sides of this loop equation by $x_1^{-k_1-1}\cdots x_n^{-k_n-1}$ and summing over $k_1,\ldots,k_n\geq0$ then produces the loop equation \eqref{eq4.26} above. We detail this transformation term by term in Appendix \ref{appendixC.1}.
\end{proof}

\subsection{Loop equations on the \texorpdfstring{$W_n^{(\mathcal{J}_1)}$}{Wn(J1)} and \texorpdfstring{$W_n^{(\mathcal{J}_1),l}$}{Wn(J1),l}} \label{s4.2}
Our main goal in this subsection is to transform the loop equation \eqref{eq4.26} on the $U_n(x_1,\ldots,x_n)$ into a loop equation on the $W_n(x_1,\ldots,x_n)$ through the use of the identity \eqref{eq4.25}. Actually, we use higher order analogues of this identity, which we present below.
\begin{lemma} \label{Lemma7}
Recall that $J_n=(x_2,\ldots,x_n)$ and let $\mu\vdash(x_1,\ldots,x_n)$ indicate that $\mu$ is a partition of the ordered set $(x_1,\ldots,x_n)$, i.e., $\mu=\{\mu_t\}_{t=1}^m$ for some $1\leq m\leq n$ such that the disjoint union $\sqcup_{t=1}^m\mu_t=\{x_1,\ldots,x_n\}$. Writing $\#\mu$ for the size of $\mu$ ($m$ in the preceding sentence), we have that
\begin{align}
U_n(x_1,J_n)&=\sum_{K_1\sqcup K_2=J_n}U_{\#K_1}(K_1)W_{\#K_2+1}(x_1,K_2), \label{eq4.27}
\\ U_{n+1}(x_1,x_1,J_n)&=\sum_{\mu\vdash(x_1,x_1)}\sum_{\sqcup_{t=1}^{\#\mu+1}K_t=J_n}U_{\#K_1}(K_1)\prod_{\mu_t\in\mu}W_{\#K_{t+1}+\#\mu_t}(\mu_t,K_{t+1}), \label{eq4.28}
\\ U_{n+2}(x_1,x_1,x_1,J_n)&=\sum_{\mu\vdash(x_1,x_1,x_1)}\sum_{\sqcup_{t=1}^{\#\mu+1}K_t=J_n}U_{\#K_1}(K_1)\prod_{\mu_t\in\mu}W_{\#K_{t+1}+\#\mu_t}(\mu_t,K_{t+1}). \label{eq4.29}
\end{align}
\end{lemma}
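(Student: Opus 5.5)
The plan is to derive all three identities from the moment-cumulant relation \eqref{eq1.7}, lifted to generating functions. Multiplying \eqref{eq1.7} by $x_1^{-k_1-1}\cdots x_n^{-k_n-1}$ and summing over $k_1,\ldots,k_n$, the product $\prod_{\kappa_i\in K}c_{\kappa_i}$ factorises across the blocks. This gives
\begin{equation*}
U_n(x_1,\ldots,x_n)=\sum_{\mu\vdash(x_1,\ldots,x_n)}\prod_{\mu_t\in\mu}W_{\#\mu_t}(\mu_t),
\end{equation*}
where partitions are taken of the labelled positions, not of the values. This labelled-position convention is what makes the statements meaningful when repeated arguments such as $(x_1,x_1)$ appear.

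For \eqref{eq4.27}, I isolate the block $\mu_1$ containing $x_1$ and write $\mu_1=\{x_1\}\sqcup K_2$ with $K_2\subseteq J_n$. The remaining blocks range over all partitions of $K_1=J_n\setminus K_2$, and summing over them reproduces $U_{\#K_1}(K_1)$ by the displayed identity again, with $U_0=1$ covering $K_1=\emptyset$. This is just \eqref{eq4.25} rearranged.

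For \eqref{eq4.28} and \eqref{eq4.29}, I apply the same decomposition to the ordered tuple $(y_1,y_2,J_n)$ or $(y_1,y_2,y_3,J_n)$ with distinct formal positions $y_i$, and only afterwards set $y_i=x_1$. Each partition of the full tuple is encoded uniquely by three pieces of data:
\begin{itemize}
\item[(i)] the partition $\mu$ of $\{y_1,\ldots,y_r\}$ induced by restriction;
\item[(ii)] for each block $\mu_t$, the set $K_{t+1}\subseteq J_n$ of elements of $J_n$ sharing the block of the full partition that contains $\mu_t$;
\item[(iii)] a partition of the leftover $K_1$, which contains no $y_i$.
\end{itemize}
These blocks are pairwise disjoint and cover $J_n$. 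Summing over (iii) yields $U_{\#K_1}(K_1)$, and each block containing $\mu_t$ contributes $W_{\#\mu_t+\#K_{t+1}}(\mu_t,K_{t+1})$. Specialising $y_i\to x_1$ then gives the stated formulas, with $\mu$ ranging over partitions of the ordered set $(x_1,x_1)$ or $(x_1,x_1,x_1)$, counted by position, so there are 2 and 5 partitions respectively.

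The main point requiring care is justifying the specialisation $y_i\to x_1$. In the definitions, $U_{n+1}(x_1,x_1,J_n)$ means the generating function of $m_{k,k',I_n}$ evaluated at coinciding variables, and this is legitimate because both sides are formal series in the $y_i$ that agree identically. Two further checks are needed: the bijection in (i)--(iii) must be correct (each full partition decomposes uniquely), and the range of summation in $k$ (from 0 versus 1) must be consistent between $U$ and $W$. I expect neither to be a real obstacle.
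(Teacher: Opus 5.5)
Your argument is correct, but it is organised differently from the paper's proof.

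The paper takes \eqref{eq4.27} as a rewriting of \eqref{eq4.25}. It then obtains \eqref{eq4.28} in three moves. First, it applies \eqref{eq4.27} to $(x_1,x_1,J_n)$, with $(x_1,J_n)$ in the role of $J_n$. Second, it splits the sum according to whether the second copy of $x_1$ lands in $K_1$ or in $K_2$. Third, it expands the resulting $U_{\#K_1+1}(x_1,K_1)$ once more by \eqref{eq4.27}. It gets \eqref{eq4.29} by iterating this.

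You instead lift \eqref{eq1.7} to the full set-partition expansion $U_n=\sum_{\mu}\prod_{t}W_{\#\mu_t}(\mu_t)$. You then read off all three identities from a single bijection: restricted partition $\mu$, attached sets $K_{t+1}$, and a leftover partition of $K_1$.

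The paper's route needs only the one-block identity \eqref{eq4.25} as input. However, its bookkeeping grows with each repeated argument, and the regrouping into a sum over $\mu$ is left implicit.

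Your route handles any number of coincident arguments uniformly. In particular, it directly gives the four-fold version over $\mu\vdash(x_1,x_1,x_1,x_1)$ that Section~\ref{s5} invokes as a ``canonical extension''. It also makes the positional counting of the $\mu$ transparent (two and five partitions). Finally, you explicitly justify the diagonal specialisation $y_i\to x_1$ as a homomorphism of formal series in the $y_i^{-1}$, which the paper uses tacitly.
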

\begin{proof}
Equation \eqref{eq4.27} is a simple rewriting of equation \eqref{eq4.25}. This equation tells us, upon increasing $n$ by one and replacing $J_n$ with $(x_1,J_n)$, that
\begin{equation*}
U_{n+1}(x_1,x_1,J_n)=\sum_{K_1\sqcup K_2=(x_1,J_n)}U_{\#K_1}(K_1)W_{\#K_2+1}(x_1,K_2).
\end{equation*}
Splitting this sum based on whether or not $K_1$ contains $x_1$ gives
\begin{equation*}
U_{n+1}(x_1,x_1,J_n)=\sum_{K_1\sqcup K_2=J_n}\left[U_{\#K_1+1}(x_1,K_1)W_{\#K_2+1}(x_1,K_2)+U_{\#K_1}(K_1)W_{\#K_2+2}(x_1,x_2,K_2)\right].
\end{equation*}
Substituting in an appropriate rewriting of equation \eqref{eq4.27} for $U_{\#K_1+1}(x_1,K_1)$ then simplifies this expression to equation \eqref{eq4.28}. Equation \eqref{eq4.29} follows likewise through a repetition of this exercise.
\end{proof}
Observe that since we are dealing with partitions of ordered sets, the partition $\mu=\{\{x_1\},\{x_1,x_1\}\}$ is counted three times: the $x_1$ in $\mu_1=\{x_1\}$ could have been the first, second, or third entry in $(x_1,x_1,x_1)$. Although the $\mu_t$ must be non-empty, the $K_t$ are may be empty.

With Lemma \ref{Lemma7} in hand, we can now start computing loop equations for the connected correlators $W_n(x_1,\ldots,x_n)$. Thus, setting $n=1$ in equation \eqref{eq4.26} and then replacing $U_0$, $U_1(x_1)$, $U_2(x_1,x_1)$, and $U_3(x_1,x_1,x_1)$ by $1$, $W_1(x_1)$, the right-hand side of equation \eqref{eq4.28} with $n=1$, and the right-hand side of equation \eqref{eq4.29} with $n=1$, respectively, yields the \textit{$n=1$ loop equation on the connected correlators},
\begin{multline} \label{eq4.30}
0=x_1\left[W_3(x_1,x_1,x_1)+3W_2(x_1,x_1)W_1(x_1)+W_1(x_1)^3\right]-W_2(x_1,x_1)-W_1(x_1)^2
\\+N^2x_1W_1(x_1)-N^3+\frac{1}{2x_1}\left(x_1^2\frac{\mathrm{d}^2}{\mathrm{d}x_1^2}+x_1\frac{\mathrm{d}}{\mathrm{d}x_1}-1\right)W_1(x_1).
\end{multline}

It can immediately be seen that this loop equation has too many unknowns ($W_1,W_2$, and $W_3$) to be solvable. However, substituting the large $N$ expansion \eqref{eq1.10} (which is valid due to Proposition \ref{prop3}) into equation \eqref{eq4.30} and equating terms of like order in $N$ produces, for each $l\geq0$, the $(1,l)$ loop equation used to compute $W_1^l(x_1)$,
\begin{multline} \label{eq4.31}
0=x_1\Big[W_3^{l-4}(x_1,x_1,x_1)+3\sum_{l_1+l_2=l-2}W_2^{l_1}(x_1,x_1)W_1^{l_2}(x_1)+\sum_{l_1+l_2+l_3=l}W_1^{l_1}(x_1)W_1^{l_2}(x_1)W_1^{l_3}(x_1)\Big]
\\-W_2^{l-3}(x_1,x_1)-\sum_{l_1+l_2=l-1}W_1^{l_1}(x_1)W_1^{l_2}(x_1)+x_1W_1^l(x_1)-\chi_{l=0}
\\+\frac{1}{2x_1}\left(x_1^2\frac{\mathrm{d}^2}{\mathrm{d}x_1^2}+x_1\frac{\mathrm{d}}{\mathrm{d}x_1}-1\right)W_1^{l-2}(x_1);
\end{multline}
we set $W_n^{l'}:=0$ for all $n\geq1$ and $l'<0$. In particular, setting $l=0$ in the above gives the spectral curve
\begin{equation} \label{eq4.32}
0=x_1\left(W_1^0(x_1)\right)^3+x_1W_1^0(x_1)-1.
\end{equation}
Being a cubic polynomial in $W_1^0(x_1)$, this equation can be solved to express $W_1^0(x_1)$ as a cube root (although equation \eqref{eq4.32} has three solutions, only one is consistent with the fact that $W_1^0(x)\sim1/x=m_0/x$ as $x\to\infty$). We do not make $W_1^0(x_1)$ explicit here, but refer to \citep{FL15}, \citep{DF20} and references therein for techniques on obtaining such an expression --- in \S\ref{s4.3}, we instead find a rational parametrisation for this spectral curve.

Setting $l=1$ in equation \eqref{eq4.31} enables the computation of $W_1^1(x_1)$ with knowledge of $W_1^0(x_1)$, but at $l=2$, this equation becomes unsolvable due to the introduction of both $W_1^2(x_1)$ and $W_2^0(x_1,x_2)$. Thus, beyond $l=1$, one must consider loop equations for $n>1$. Setting $n=2$ in equation \eqref{eq4.26} and using Lemma \ref{Lemma7} with $n=2$ shows that
\begin{multline} \label{eq4.33}
0=x_1\sum_{\mu\vdash(x_1,x_1,x_1)}\sum_{\sqcup_{t=1}^{\#\mu+1}K_t=\{x_2\}}U_{\#K_1}(K_1)\prod_{\mu_t\in\mu}W_{\#K_{t+1}+\#\mu_t}(\mu_t,K_{t+1})
\\-\sum_{\mu\vdash(x_1,x_1)}\sum_{\sqcup_{t=1}^{\#\mu+1}K_t=\{x_2\}}U_{\#K_1}(K_1)\prod_{\mu_t\in\mu}W_{\#K_{t+1}+\#\mu_t}(\mu_t,K_{t+1})-N^3U_1(x_2)
\\+\left[N^2x_1+\frac{1}{2x_1}\left(x_1^2\frac{\partial^2}{\partial x_1^2}+x_1\frac{\partial}{\partial x_1}-1\right)\right]\left(W_2(x_1,x_2)+W_1(x_1)U_1(x_2)\right)
\\+\frac{2}{x_1}\frac{\partial}{\partial x_2}\frac{1}{x_1^2-x_2^2}\Big[x_1x_2^2\left(2W_2(x_1,x_1)+2W_1(x_1)^2-W_2(x_2,x_2)-W_1(x_2)^2\right)
\\\hspace{6em}-x_1^2x_2\left(W_2(x_1,x_2)+W_1(x_1)W_1(x_2)\right)
\\+x_1x_2W_1(x_2)-x_2^2W_1(x_1)\Big].
\end{multline}
This is not yet our sought loop equation, as we can simplify it one step further. Indeed, subtracting the product of the right-hand side of equation \eqref{eq4.30} by $U_1(x_2)$, which is zero, from equation \eqref{eq4.33} reveals the \textit{$n=2$ loop equation on the connected correlators},
\begin{multline} \label{eq4.34}
0=x_1\Big[W_4(x_1,x_1,x_1,x_2)+3W_3(x_1,x_1,x_2)W_1(x_1)
\\+3W_2(x_1,x_1)W_2(x_1,x_2)+3W_2(x_1,x_2)W_1(x_1)^2\Big]-W_3(x_1,x_1,x_2)
\\-2W_2(x_1,x_2)W_1(x_1)+N^2x_1W_2(x_1,x_2)+\frac{1}{2x_1}\left(x_1^2\frac{\partial^2}{\partial x_1^2}+x_1\frac{\partial}{\partial x_1}-1\right)W_2(x_1,x_2)
\\+\frac{2}{x_1}\frac{\partial}{\partial x_2}\frac{1}{x_1^2-x_2^2}\Big[x_1x_2^2\left(2W_2(x_1,x_1)+2W_1(x_1)^2-W_2(x_2,x_2)-W_1(x_2)^2\right)
\\-x_1^2x_2\left(W_2(x_1,x_2)+W_1(x_1)W_1(x_2)\right)+x_1x_2W_1(x_2)-x_2^2W_1(x_1)\Big].
\end{multline}

As with equation \eqref{eq4.30}, equation \eqref{eq4.34} is unsolvable, but substituting in the large $N$ expansion \eqref{eq1.10} results in a set of solvable loop equations on the correlator expansion coefficients. Thus, we have for each $l\geq0$, the $(2,l)$ loop equation used to compute the expansion coefficient $W_2^l(x_1,x_2)$,
\begin{align}
0&=x_1\Big[W_4^{l-4}(x_1,x_1,x_2,x_2)+3\sum_{l_1+l_2=l-2}W_3^{l_1}(x_1,x_1,x_2)W_1^{l_2}(x_1) \nonumber
\\&\hspace{6em}+3\sum_{l_1+l_2=l-2}W_2^{l_1}(x_1,x_1)W_2^{l_2}(x_1,x_2)+3\sum_{l_1+l_2+l_3=l}W_2^{l_1}(x_1,x_2)W_1^{l_2}(x_1)W_1^{l_3}(x_1)\Big] \nonumber
\\&\quad-W_3^{l-3}(x_1,x_1,x_2)-2\sum_{l_1+l_2=l-1}W_2^{l_1}(x_1,x_2)W_1^{l_2}(x_1)+x_1W_2^l(x_1,x_2) \nonumber
\\&\quad+\frac{1}{2x_1}\left(x_1^2\frac{\partial^2}{\partial x_1^2}+x_1\frac{\partial}{\partial x_1}-1\right)W_2^{l-2}(x_1,x_2) \nonumber
\\&\quad+\frac{2}{x_1}\frac{\partial}{\partial x_2}\frac{1}{x_1^2-x_2^2}\Big[2x_1x_2^2W_2^{l-2}(x_1,x_1)-x_1x_2^2W_2^{l-2}(x_2,x_2)-x_1^2x_2W_2^{l-2}(x_1,x_2) \nonumber
\\&\hspace{10em}+x_1x_2^2\sum_{l_1+l_2=l}\left(2W_1^{l_1}(x_1)W_1^{l_2}(x_1)-W_1^{l_1}(x_2)W_1^{l_2}(x_2)\right)+x_1x_2W_1^{l-1}(x_2) \nonumber
\\&\hspace{12em}-x_1^2x_2\sum_{l_1+l_2=l}W_1^{l_1}(x_1)W_1^{l_2}(x_2)-x_2^2W_1^{l-1}(x_1)\Big]. \label{eq4.35}
\end{align}
Setting $l=0$ then specifies $W_2^0(x_1,x_2)$ as a rational function of $W_1^0$ and its derivative:
\begin{multline} \label{eq4.36}
W_2^0(x_1,x_2)=\frac{2}{x_1}\frac{1}{3(W_1^0(x_1))^2+1}\frac{\partial}{\partial x_2}\frac{x_2}{x_2^2-x_1^2}\Big[2x_2(W_1^0(x_1))^2-x_2(W_1^0(x_2))^2
\\-x_1W_1^0(x_1)W_1^0(x_2)\Big].
\end{multline}
We solve this equation using the aforementioned rational parametrisation in \S\ref{s4.3}.

Let us now highlight that in obtaining equation \eqref{eq4.34} from equation \eqref{eq4.33}, we have simply subtracted off terms from the first three lines of equation \eqref{eq4.33} that contain a factor of $U_1(x_2)$ while using equation \eqref{eq4.30} to argue that this does not change the left-hand side. This idea extends to the general $n\geq3$ case:
\begin{proposition} \label{prop7}
Let us retain the notation used in Lemma \ref{Lemma7}. For $n\geq3$, the connected correlators $W_n(x_1,\ldots,x_n)$ of the $(N,N)$ antisymmetrised Laguerre ensemble satisfy the loop equation
\begin{align}
0&=\left(x_1\sum_{\mu\vdash(x_1,x_1,x_1)}-\sum_{\mu\vdash(x_1,x_1)}\right)\sum_{\sqcup_{t=1}^{\#\mu}K_t=J_n}\prod_{\mu_t\in\mu}W_{\#K_t+\#\mu_t}(\mu_t,K_t) \nonumber
\\&\quad+N^2x_1W_n(x_1,J_n)+\frac{1}{2x_1}\left(x_1^2\frac{\partial^2}{\partial x_1^2}+x_1\frac{\partial}{\partial x_1}-1\right)W_n(x_1,J_n) \nonumber
\\&\quad+\frac{2}{x_1}\sum_{i=2}^n\frac{\partial}{\partial x_i}\frac{1}{x_1^2-x_i^2}\Bigg[\bigg(2x_1x_i^2\sum_{\mu\vdash(x_1,x_1)}-x_1x_i^2\sum_{\mu\vdash(x_i,x_i)}-x_1^2x_i\sum_{\mu\vdash(x_1,x_i)}\bigg) \nonumber
\\&\hspace{16em}\times\sum_{\sqcup_{t=1}^{\#\mu}K_t=J_n\setminus\{x_i\}}\prod_{\mu_t\in\mu}W_{\#K_t+\#\mu_t}(\mu_t,K_t) \nonumber
\\&\hspace{20em}+x_1x_iW_{n-1}(J_n)-x_i^2W_{n-1}(x_1,J_n\setminus\{x_i\})\Bigg] \nonumber
\\&\quad+8\sum_{2\leq i<j\leq n}\frac{1}{x_1}\frac{\partial^2}{\partial x_i\partial x_j}\Bigg\{\frac{x_1x_i^2x_jW_{n-2}(J_n\setminus\{x_i\})}{(x_1^2-x_j^2)(x_i^2-x_j^2)}-\frac{x_1x_ix_j^2W_{n-2}(J_n\setminus\{x_j\})}{(x_1^2-x_i^2)(x_i^2-x_j^2)} \nonumber
\\&\hspace{20em}+\frac{x_i^2x_j^2W_{n-2}(x_1,J_n\setminus\{x_i,x_j\})}{(x_1^2-x_i^2)(x_1^2-x_j^2)}\Bigg\}. \label{eq4.37}
\end{align}
\end{proposition}
\begin{proof}
We give a modification of the inductive argument detailed in \citep[App.~A]{FRW17} (see also \citep{DF20}). To proceed, let us define $E_1(x_1)$ to be the right-hand side of the $n=1$ loop equation \eqref{eq4.30}, $E_2(x_1,x_2)$ to be the right-hand side of the $n=2$ loop equation \eqref{eq4.34}, and for $n\geq3$, define $E_n(x_1,\ldots,x_n)$ to be the right-hand side of the alleged loop equation \eqref{eq4.37} above. Our induction hypothesis is that $E_m(x_1,\ldots,x_m)=0$ for all $1\leq m\leq n$ --- this has already been shown through equations \eqref{eq4.30}, \eqref{eq4.34} to be true for the base cases $m=1,2$. In a similar manner to the derivation of equation \eqref{eq4.33}, use Lemma \ref{Lemma7} to write the $U_n$ loop equation \eqref{eq4.26} in terms of the $W_n$. Then, check term by term that the right-hand side of this equation agrees with the sum
\begin{equation*}
\sum_{K_1\sqcup K_2=J_n}U_{\#K_1}(K_1)E_{\#K_2+1}(x_1,K_2)=E_n(x_1,J_n)+\sum_{\substack{K_1\sqcup K_2=J_n,\\K_1\neq\emptyset}}U_{\#K_1}(K_1)E_{\#K_2+1}(x_1,K_2).
\end{equation*}
Since the left-hand side of the equation obtained from using Lemma \ref{Lemma7} to rewrite the $U_n$ loop equation \eqref{eq4.26} is zero, the above sum must also vanish. Hence, we have
\begin{equation*}
0=E_n(x_1,J_n)+\sum_{\substack{K_1\sqcup K_2=J_n,\\K_1\neq\emptyset}}U_{\#K_1}(K_1)E_{\#K_2+1}(x_1,K_2)=E_n(x_1,\ldots,x_n),
\end{equation*}
where the second equality follows from our induction hypothesis. As $E_n(x_1,\ldots,x_n)$ is the right-hand side of equation \eqref{eq4.37}, the proposition is proved.
\end{proof}

Like with the loop equations given earlier in this subsection, to extract information from the above loop equation, one must substitute the large $N$ expansion \eqref{eq1.10} into said equation and collect terms of equal order in $N$ to obtain the $(n,l)$ loop equations on the correlator expansion coefficients $W_n^l$ --- it is the $(n,l)$ loop equations that can be solved. We do not display the $W_n^l$ loop equations here since it is straightforward to extract them from equation \eqref{eq4.27}: one need only replace products of the form $W_{n_1}(x_1,\ldots,x_{n_1})\cdots W_{n_k}(x_1,\ldots,x_{n_k})$ with sums of the form
\begin{equation*}
\sum_{l_1+\cdots+l_k=l-r}W_{n_1}^{l_1}(x_1,\ldots,x_{n_1})\cdots W_{n_k}^{l_k}(x_1,\ldots,x_{n_k}),
\end{equation*}
where $r$ depends on what power of $N$ the original product is multiplied by; we refer to the derivations of equations \eqref{eq4.31} and \eqref{eq4.35} for examples.

We make some brief comparisons to loop equations seen in the existing literature. First, for $n\ge3$, the loop equation \eqref{eq4.37} is higher order than the equivalent loop equations characterising the classical $\beta$ ensembles. By this, we mean that the first line of equation \eqref{eq4.37} involves partitions of $(x_1,x_1,x_1)$ and the term $W_{n+2}(x_1,x_1,x_1,J_n)$, whereas these do not appear in the equivalent equations for the classical $\beta$ ensembles. Similarly for the sums over $\mu\vdash(x_1,x_1),(x_i,x_i)$ seen in the third line. Moving past comparisons to loop equations for classical $\beta$ ensembles, let us recall that even though the above observations show that the loop equation \eqref{eq4.37} is structurally quite different from most of the loop equations studied in the literature, there still exist some works studying such higher order loop equations. Namely, multi-matrix models and chains of matrices have been shown to be characterised by such higher order loop equations \citep{EO09}, \citep{Ora15}, while loop equations involving large sums of the form $\sum_{\mu\vdash(x_1,\ldots,x_1)}$ have been studied in the abstract setting for their own sake \citep{BE13}, \citep{BHLMR14}. We also remind the reader that loop equations for the $(N,N,N)$ complex Wishart product ensemble were derived in \citep{DF20}, where they were shown to be of the same order as equation \eqref{eq4.37}. Finally, we highlight the fact that the antisymmetric nature of the matrix $B=\mathcal{J}_1$ shows itself in the loop equation \eqref{eq4.37} in a way that has not been seen in previously studied loop equations: There are denominators of the form $x_i^2-x_j^2$ throughout this equation which are unchanged when mapping $x_i\mapsto -x_i$ or $x_j\mapsto-x_j$, while previously studied loop equations have instead exhibited denominators of the form $x_i-x_j$. It can be surmised from the computations of Appendix \ref{appendixC.1} that these differences of squares arise from the fact that the mixed moments $m_{k_1,\ldots,k_n}$ vanish whenever any of the $k_i$ are odd.

\subsection{Discussion on \texorpdfstring{$W_1^{(\mathcal{J}_1),0}$}{W1(J1),0} and \texorpdfstring{$W_2^{(\mathcal{J}_1),0}$}{W2(J1),0}} \label{s4.3}
Let us now focus our discussion on the $(1,0)$ and $(2,0)$ loop equations \eqref{eq4.32}, \eqref{eq4.36} so that we may comment on the large $N$ limiting behaviour of the $(N,N)$ antisymmetrised Laguerre ensemble. We commence our discussion by noting that $W_1^0(x_1)$ has the large $x_1$ expansion
\begin{equation*}
W_1^0(x_1)=W_1^{(\mathcal{J}_1),0}(x_1)=\sum_{k=0}^{\infty}\frac{m_k^{(\mathcal{J}_1),0}}{x_1^{k+1}},
\end{equation*}
where we define $m_k^{(\mathcal{J}_1),0}:=\lim_{N\to\infty} m_k^{(\mathcal{J}_1)}$. Letting $m_k^{(\mathrm{i}\mathcal{J}_1),0}$ denote the analogous limiting moments of the matrix $\mathrm{i}\mathcal{J}_1$, we have that
\begin{equation} \label{eq4.38}
m_k^{(\mathcal{J}_1),0}=\lim_{N\to\infty}\mean{\Tr\mathcal{J}_1^k}=(-\mathrm{i})^k\lim_{N\to\infty}\mean{\Tr (\mathrm{i}\mathcal{J}_1)^k}=(-\mathrm{i})^km_k^{(\mathrm{i}\mathcal{J}_1),0}.
\end{equation}

It is known from \citep[Cor.~1.2]{FILZ19} that for $\mathcal{J}_m$ drawn from the $(N_0,N_1,\ldots,N_m)$ antisymmetrised matrix product ensemble (recall from Definition \ref{def2} that each of $N_0,N_1,\ldots,N_m$ are even integers), the positive eigenvalues $\lambda_1,\ldots,\lambda_{N_0/2}$ of $\mathrm{i}\mathcal{J}_m$ are such that $\lambda_j'=\lambda_j^2/4^m$ ($j\in[N_0/2]$) are statistically equivalent to the eigenvalues of the $(\tfrac{N_0}{2},\tfrac{N_1}{2},\tfrac{N_1-1}{2},\ldots,\tfrac{N_m}{2},\tfrac{N_m-1}{2})$ complex Wishart product ensemble, with the caveat that the latter be generalised using induced Ginibre matrices \citep{FBKSZ12}, \citep{AIK13}, \citep{IK14} to allow for half-integer dimensions $\tfrac{N_1-1}{2},\ldots,\tfrac{N_m-1}{2}$. Hence, by a result of \citep{PZ11} (see also \citep{FL15} for our specific $m=1$ case) relating the limiting eigenvalue density of the complex Wishart product ensemble to the Fuss--Catalan distribution \citep[App.~A]{BPB99}, \citep{PZ11}, \citep{Neu14}, \citep{Ips15}, we have that
\begin{equation} \label{eq4.39}
m_{2k}^{(\mathcal{J}_m),0}=(-1)^km_k^{(\mathrm{FC}_{2m+1})},
\end{equation}
where $m_{2k}^{(\mathcal{J}_m),0}$ is defined in analogy with equation \eqref{eq4.38} and
\begin{equation} \label{eq4.40}
m_k^{(\mathrm{FC}_d)}=\frac{1}{(d-1)k+1}\binom{dk}{k}
\end{equation}
is the order $d$ Fuss--Catalan number. In the present $m=1$ setting, one can check using computer algebra that the solution of equation \eqref{eq4.32} has large $x_1$ expansion
\begin{equation} \label{eq4.41}
W_1^0(x_1)=\frac{1}{x_1}-\frac{1}{x_1^3}+\frac{3}{x_1^5}-\frac{12}{x_1^7}+\frac{55}{x_1^9}-\frac{273}{x_1^{11}}+{\rm O}(x_1^{-13});
\end{equation}
observe that for integer $k$, the coefficient of $x_1^{-2k-1}$ is precisely $(-1)^km_k^{(\mathrm{FC}_3)}$, while there are no terms of the form $x_1^{-2k}$.

Rewriting the above equivalence of moments and Fuss--Catalan numbers in terms of generating functions shows that
\begin{equation} \label{eq4.42}
W_1^{(\mathcal{J}_1),0}(x_1)=-x_1W_1^{(\mathrm{FC}_3)}(-x_1^2),
\end{equation}
where $W_1^{(\mathrm{FC}_3)}(x):=\sum_{k=0}^\infty m_k^{(\mathrm{FC}_3)}/x^{k+1}$. In combination with equation \eqref{eq4.32}, this suggests that upon writing $u=-x_1^2$,
\begin{equation} \label{eq4.43}
0=u^2\left(W_1^{(\mathrm{FC}_3)}(u)\right)^3-uW_1^{(\mathrm{FC}_3)}(u)+1.
\end{equation}
This is known to be true \citep{FL15}, and thus serves as confirmation that the loop equation analysis presented in \S\ref{s4.2} has produced the correct spectral curve for the $(N,N)$ antisymmetrised Laguerre ensemble.

We now move on to the rational parametrisation of our cubic spectral curve \eqref{eq4.32}. Though one might expect this to be more complicated than for classical $\beta$ ensembles, where there is a standard procedure involving an application of the Joukowsky transform to the quadratic spectral curve, we are fortuitous in that our spectral curve is a linear polynomial in $x_1$. Thus, rearranging equation \eqref{eq4.32} to give
\begin{equation*}
x_1=\frac{1}{\left(W_1^0(x_1)\right)^3+W_1^0(x_1)}
\end{equation*}
manifests a natural parametrisation: Simply define
\begin{equation} \label{eq4.44}
x(z)=\frac{1}{z^3+z},\quad y(z)=z.
\end{equation}
Using the Newton polygon method (see \citep{BP00} and references therein) on the complex algebraic curve $0=xy^3+xy-1$ shows that our spectral curve is the genus zero Riemann sphere with viable coordinates $x(z),y(z)$ as given above. Hence, $W_1^0(x(z_1))$ is a single-valued function of $z_1\in\mathbb{CP}^1$ whose value agrees with $W_1^0(x_1)$ when $z_1\in x^{-1}(x_1)$ is chosen to be close to $z_1=0$ (this ensures that we choose the solution of equation \eqref{eq4.32} obeying the asymptotic requirement that $W_1^0(x_1)\sim1/x_1$ as $x_1\to\infty$).

Now, taking $z_1,z_2\in\mathbb{CP}^1$ and substituting $x_1=x(z_1)$ and $x_2=x(z_2)$ \eqref{eq4.44} into equation \eqref{eq4.36} shows that
\begin{align}
W_2^0(x(z_1),x(z_2))&=\frac{1}{x'(z_1)x'(z_2)}\frac{1}{(z_1-z_2)^2}-\frac{1}{(x(z_1)-x(z_2))^2} \nonumber
\\&\quad-\frac{1}{x'(z_1)x'(z_2)}\frac{1}{(z_1+z_2)^2}+\frac{1}{(x(z_1)+x(z_2))^2}. \label{eq4.45}
\end{align}
The first line of the right-hand side is the universal form of $W_2^0(x(z_1),x(z_2))$ obtained in the case of the one-cut 1-Hermitian matrix model \citep{AJM90}, \citep{ACKM93} and classical $\beta$ ensembles \citep{WF14}, \citep{FRW17}; the first term is referred to as the Bergman kernel or fundamental differential of the second kind (see, e.g., \citep{EO09}, \citep{EKR18}). The second line of equation \eqref{eq4.45} is a novel term which ensures that $W_2^0(x(z_1),x(z_2))$ is antisymmetric in $z_1,z_2$. Thus, we have a double pole at $z_1=-z_2$ in addition to the standard double pole at $z_1=z_2$ (cf.~the discussion on antisymmetric denominators at the end of \S\ref{s4.2}). 

It is a simple matter to determine $W_1^1(x_1)$ as a function of $z_1$: Set $l=1$ in equation \eqref{eq4.31} with $x_1$ replaced by $x(z_1)$ and $W_1^0(x_1)$ replaced by $y(z_1)=z_1$ \eqref{eq4.44} to see that
\begin{equation} \label{eq4.46}
W_1^1(x(z_1))=\frac{z_1^2}{x(z_1)(3z_1^2+1)}=\frac{z_1^5+z_1^3}{3z_1^2+1}.
\end{equation}
Though we do not do so here, one may proceed in this manner to compute all of the $W_n^l$ since the loop equations are closed and form a triangular recursive system.

From Proposition \ref{prop3}, we know that the mixed cumulants $c_{k_1,\ldots,k_n}^{(\mathcal{J}_1)}$ have a genus expansion of the form
\begin{equation} \label{eq4.47}
c_{k_1,\ldots,k_n}^{(\mathcal{J}_1)}=N^{2-n}\sum_{l=0}^{\overline{k}_n-n}\frac{c_{k_1,\ldots,k_n}^{(\mathcal{J}_1),l}}{N^l}.
\end{equation}
Combining this with the large $N$ expansion \eqref{eq1.10} and large $x_1,\ldots,x_n$ expansion \eqref{eq1.9} of the connected correlators $W_n(x_1,\ldots,x_n)$ then shows that
\begin{equation*}
W_n^l(x_1,\ldots,x_n)=\sum_{k_1,\ldots,k_n=0}^{\infty}\frac{c_{k_1,\ldots,k_n}^{(\mathcal{J}_1),l}}{x_1^{k_1+1}\cdots x_n^{k_n+1}}.
\end{equation*}
Hence, noting that these interpretations as generating functions are valid only in the large $x_1,\ldots,x_n$ regime, we have that the coefficients of the cumulant genus expansions can be extracted from the correlator expansion coefficients using the residue formula
\begin{equation} \label{eq4.48}
c_{k_1,\ldots,k_n}^{(\mathcal{J}_1),l}=(-1)^n\underset{x_1,\ldots,x_n=\infty}{\mathrm{Res}}x_1^{k_1}\cdots x_n^{k_n}W_n^l(x_1,\ldots,x_n)\,\mathrm{d}x_1\cdots\mathrm{d}x_n.
\end{equation}
As our formalism produces expressions for the correlator expansion coefficients in terms of the variables $z_1,\ldots,z_n\in\mathbb{CP}^1$ through the map \eqref{eq4.44}, this residue formula is only useful to us upon reformulating it in terms of the $z_1,\ldots,z_n$ variables, as well. Thus, observe that $x_i=x(z_i)=\infty$ corresponds to $z_i=0,\pm\mathrm{i}$. Of these three solutions, we are interested in $z_i=0$, as it coincides with the requirement that $y(z_i)=z_i\to0$ as $x(z_i)\to\infty$, which follows from the fact that the branch of $W_1^0(x_1)$ that behaves as a moment generating function is the one that exhibits the asymptotic $W_1^0(x_1)\sim1/x_1$ in the $x_1\to\infty$ limit. Noting also that setting $x_i=x(z_i)$ means that $\mathrm{d}x_i=x'(z_i)\mathrm{d}z_i$, equation \eqref{eq4.48} therefore implies that
\begin{equation} \label{eq4.49}
c_{k_1,\ldots,k_n}^{(\mathcal{J}_1),l}=(-1)^n\underset{z_1,\ldots,z_n=0}{\mathrm{Res}}W_n^l(x(z_1),\ldots,x(z_n))\prod_{i=1}^nx(z_i)^{k_i}x'(z_i)\,\mathrm{d}z_i.
\end{equation}
For example, inserting $W_1^0(x(z_1))=z_1$ into the above shows that $c_4^{(\mathcal{J}_1),0}=3$, which is exactly the coefficient of $1/x_1^5$ in the large $x_1$ expansion \eqref{eq4.41}. Likewise, substituting equations \eqref{eq4.45} and \eqref{eq4.46} into the formula \eqref{eq4.49} shows that $c_2^{(\mathcal{J}_1),1}=1$ and $c_{2,2}^{(\mathcal{J}_1),0}=12$ --- see Appendix \ref{appendixD.1} for complementary data. Finally, we remark that applying residue calculus of this form to the loop equations themselves is precisely the path to reducing said loop equations to simpler topological recursion formulae (simpler in that the irrelevance of terms with vanishing residue is made clear). We expect that doing so would lead to topological recursion formulae similar to those of \citep{BE13}, \citep{BHLMR14}, but we leave this endeavour to future work.

\setcounter{equation}{0}
\section{Loop equations for the Hermitised Laguerre ensemble} \label{s5}
We now repeat the exercise of Section \ref{s4} for the $(N,N)$ Hermitised Laguerre ensemble. Our discussion will be relatively succinct, providing technical details only when they differ to those shown earlier. We now take $X,H$ to be as in Definition \ref{def1} and redefine $B=\mathcal{H}_1=X^\dagger HX$. Moreover, we continue to use the Einstein summation convince and retain the notations $J_n=(x_2,\ldots,x_n)$, $I_n=(k_2,\ldots,k_n)$, and $\Tr B^{I_n}=\Tr B^{k_2}\cdots\Tr B^{k_n}$. Within this section, we do not display the implicit superscript $(\mathcal{H}_1)$ for the quantities $m_{k_1,\ldots,k_n}$, $c_{k_1,\ldots,k_n}$, and their respective generating functions $U_n(x_1,\ldots,x_n)$, $W_n(x_1,\ldots,x_n)$. By Proposition \ref{prop1}, $W_n(x_1,\ldots,x_n)$ admits the genus expansion \eqref{eq1.10} with $W_n^l(x_1,\ldots,x_n)=0$ when $l$ is odd.

Emulating the key idea of Section \ref{s4}, we begin by considering the average
\begin{equation} \label{eq5.1}
\mean{\left(\partial_{(X^\dagger)_{ab}}+NX_{ba}\right)\left(\partial_{(H^\dagger)_{bc}}+NH_{cb}\right)\left(\partial_{X_{cd}}+N(X^\dagger)_{dc}\right)(B^{k_1})_{ad}\Tr B^{I_n}}.
\end{equation}
This choice of starting point is informed by the following computations:
\begin{align}
\partial_{(X^\dagger)_{ab}}\exp\left(-N\Tr(X^\dagger X)\right)&=-N\left\{\partial_{(X^\dagger)_{ab}}(X^\dagger)_{ef}X_{fe}\right\}\exp\left(-N\Tr(X^\dagger X)\right) \nonumber
\\&=-N\Big[\left\{\partial_{(X^\dagger)_{ab}}(X^\dagger)_{ef}\right\}X_{fe}+(X^\dagger)_{ef}\left\{\partial_{(X^\dagger)_{ab}}X_{fe}\right\}\Big] \nonumber
\\&\qquad\times\exp\left(-N\Tr(X^\dagger X)\right) \nonumber
\\&=-N\left[\chi_{e=a,f=b}X_{fe}+0\right]\exp\left(-N\Tr(X^\dagger X)\right) \nonumber
\\&=-NX_{ba}\exp\left(-N\Tr(X^\dagger X)\right), \label{eq5.2}
\\\partial_{(H^\dagger)_{bc}}\exp\left(-\tfrac{N}{2}\Tr(H^2)\right)&=-\frac{N}{2}\Big[\left\{\partial_{(H^\dagger)_{bc}}H_{ef}\right\}H_{fe}+H_{ef}\left\{\partial_{(H^\dagger)_{bc}}H_{fe}\right\}\Big] \nonumber
\\&\qquad\times\exp\left(-\tfrac{N}{2}\Tr(H^2)\right) \nonumber
\\&=-NH_{cb}\exp\left(-\tfrac{N}{2}\Tr(H^2)\right), \label{eq5.3}
\\ \partial_{X_{cd}}\exp\left(-N\Tr(X^\dagger X)\right)&=-N(X^\dagger)_{dc}\exp\left(-N\Tr(X^\dagger X)\right). \label{eq5.4}
\end{align}
The first two equalities in the above follow from the chain and Leibniz product rules and the third equality is due to the fact that, as a complex variable, $X_{fe}$ is independent of $(X^\dagger)_{ab}=\overline{X_{ba}}$ for any choice of $e,f$. Similar reasoning yields equations \eqref{eq5.3} and \eqref{eq5.4} upon recalling that $H$ being Hermitian implies that $H^\dagger=H$. 

As in Section \ref{s4}, we proceed by computing the average \eqref{eq5.1} in two different ways, thereby deriving loop equations on the mixed moments $m_{k_1,\ldots,k_n}$ and, consequently, the unconnected correlators $U_n(x_1,\ldots,x_n)$.

\subsection{Loop equations on the \texorpdfstring{$U_n^{(\mathcal{H}_1)}$}{Un(H1)}} \label{s5.1}
Since $(X^\dagger)_{ab}=\overline{X_{ba}}$ and $X_{cd}$ are independent of each other for all choices of $a,b,c,d$, the analogue of Lemma \ref{Lemma5} relevant to the present setting is comparatively simple:
\begin{lemma} \label{Lemma8}
Having set $B=X^\dagger HX$ as above, the partial derivatives of $(B^{k_1})_{ef}$ with respect to $(X^\dagger)_{ab}$, $ H_{bc}$, and $X_{cd}$ are respectively given by
\begin{align*}
\partial_{(X^\dagger)_{ab}}(B^{k_1})_{ef}&=\sum_{p_1+p_2=k_1-1}(B^{p_1})_{ea}( HXB^{p_2})_{bf},
\\ \partial_{ H_{bc}}(B^{k_1})_{ef}&=\sum_{p_1+p_2=k_1-1}(B^{p_1}X^\dagger)_{eb}(XB^{p_2})_{cf},
\\ \partial_{X_{cd}}(B^{k_1})_{ef}&=\sum_{p_1+p_2=k_1-1}(B^{p_1}X^\dagger H)_{ec}(B^{p_2})_{df}.
\end{align*}
Similar to Lemma \ref{Lemma5}, these sums are empty when $k_1=0$ and are otherwise taken over the range $p_1=0,1,\ldots,k_1-1$. The indices $e,f$ are arbitrary and the above equations hold true upon setting $e,f$ equal to each other or to one of $a,b,c,d$.
\end{lemma}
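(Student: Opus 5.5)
The plan mirrors the proof of Lemma \ref{Lemma5}, but is simpler. In the complex setting the variables $(X^\dagger)_{ab}=\overline{X_{ba}}$, $X_{cd}$ and $H_{bc}$ are treated as independent, so no transposition identities or extra terms appear.

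First compute the derivative of a single factor. Write $B_{gh}=(X^\dagger)_{gl}H_{lm}X_{mh}$ under the Einstein convention. By the Leibniz rule and independence of the three families of variables,
\begin{equation*}
\partial_{(X^\dagger)_{ab}}B_{gh}=\chi_{g=a}(HX)_{bh},\qquad
\partial_{H_{bc}}B_{gh}=(X^\dagger)_{gb}X_{ch},\qquad
\partial_{X_{cd}}B_{gh}=\chi_{h=d}(X^\dagger H)_{gc}.
\end{equation*}
Each derivative hits exactly one of the three factors. For $\partial_{H_{bc}}$, the entries $H_{lm}$ are differentiated as independent coordinates, consistent with the convention $\partial_{(H^\dagger)_{bc}}=\partial_{H_{cb}}$ used in \eqref{eq5.3}. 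Under that convention $\partial_{H_{bc}}H_{lm}=\chi_{l=b,m=c}$.

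Next expand the power. Writing $(B^{k_1})_{ef}=B_{eg_1}B_{g_1g_2}\cdots B_{g_{k_1-1}f}$, the Leibniz rule gives
\begin{equation*}
\partial(B^{k_1})_{ef}=\sum_{p_1=0}^{k_1-1}(B^{p_1})_{eg}\{\partial B_{gh}\}(B^{k_1-p_1-1})_{hf},
\end{equation*}
which is the sum over $p_1+p_2=k_1-1$, empty when $k_1=0$. Substituting the single-factor derivatives and contracting the repeated indices $g,h$ yields the three formulas directly. For example, $(B^{p_1})_{eg}\chi_{g=a}(HX)_{bh}(B^{p_2})_{hf}=(B^{p_1})_{ea}(HXB^{p_2})_{bf}$, and the other two cases absorb $X^\dagger$, $X$ or $X^\dagger H$ into the adjacent powers in the same way.

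Finally, the claim that $e,f$ may coincide with each other or with $a,b,c,d$ needs a remark rather than a computation. The derivation never used distinctness of indices, since the indicator functions handle coincidences automatically. This is unlike \eqref{eq4.6}, where an extra $\chi$-term arose from an explicit $X^T$ factor.

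There is no real obstacle here. The only point needing care is justifying that $(X^\dagger)_{ab}$ and $X_{cd}$ are independent variables, in the Wirtinger sense, so that no cross terms of the form seen in Lemma \ref{Lemma5} appear. This was already observed in the text around \eqref{eq5.2}.
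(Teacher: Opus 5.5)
Your proposal is correct and follows the argument the paper intends: differentiate a single factor $B_{gh}=(X^\dagger)_{gl}H_{lm}X_{mh}$ by the Leibniz rule, then sum over the position of the differentiated factor in the power, as in the proof of Lemma \ref{Lemma5} but without transpose cross-terms. One small slip in an aside: since $H^\dagger=H$, the convention behind \eqref{eq5.3} is $\partial_{(H^\dagger)_{bc}}=\partial_{H_{bc}}$, not $\partial_{H_{cb}}$ (this is what yields $-NH_{cb}$ there); this does not affect your derivation, which correctly uses $\partial_{H_{bc}}H_{lm}=\chi_{l=b,m=c}$.
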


Moving on, the analogue of Proposition \ref{prop5} corresponding to the $(N,N)$ Hermitised Laguerre ensemble is as follows:
\begin{proposition} \label{prop8}
The average \eqref{eq5.1} simplifies to both the left- and right-hand sides of the equation
\begin{multline} \label{eq5.5}
\mean{\left\{\partial_{(X^\dagger)_{ab}}\partial_{( H^\dagger)_{bc}}\partial_{X_{cd}}(B^{k_1})_{ad}\Tr B^{I_n}\right\}}
\\=N^3m_{k_1+1,I_n}-k_1\sum_{p_1+p_2=k_1-1}m_{p_1,p_2,I_n}-2\sum_{2\leq i<j\leq n}k_ik_jm_{k_1,k_i+k_j-1,I_n\setminus\{k_i,k_j\}}
\\-\sum_{i=2}^nk_i\left[2k_1m_{k_1+k_i-1,I_n\setminus\{k_i\}}+\sum_{p_1+p_2=k_i-1}m_{k_1,p_1,p_2,I_n\setminus\{k_i\}}\right].
\end{multline}
\end{proposition}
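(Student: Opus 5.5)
The plan mirrors the proof of Proposition \ref{prop5}, computing the average \eqref{eq5.1} in two ways. I read the "left-hand side" of \eqref{eq5.5} as the braced average and the "right-hand side" as the moment expression.

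\textbf{First evaluation: the braced average.} Write $\Phi=(B^{k_1})_{ad}\Tr B^{I_n}$ and let $P$ be the product of the Gaussian weights for $X$ and $H$. By \eqref{eq5.4}, $(\partial_{X_{cd}}+N(X^\dagger)_{dc})(\Phi P)=\{\partial_{X_{cd}}\Phi\}P$. Likewise, \eqref{eq5.3} removes the $H$-operator and \eqref{eq5.2} removes the $X^\dagger$-operator, each acting on the product already formed. Applying the three identities successively from right to left turns the integrand of \eqref{eq5.1} into $\{\partial_{(X^\dagger)_{ab}}\partial_{(H^\dagger)_{bc}}\partial_{X_{cd}}\Phi\}P$. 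This gives the left-hand side of \eqref{eq5.5}, exactly as in the second half of the proof of Proposition \ref{prop5}.

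\textbf{Second evaluation: expanding the product.} I will instead expand the product of the three operators in \eqref{eq5.1} into eight terms. Any term whose leftmost factor is an unbraced derivative vanishes by the fundamental theorem of calculus. The one remaining term of the form $NX_{ba}\,\partial_{(H^\dagger)_{bc}}(\cdots)$ also vanishes, since $\partial_{H}$ commutes past $X_{ba}$ and integrates to zero. This leaves three kinds of surviving terms.
\begin{itemize}
\item The pure multiplication term $N^3\mean{X_{ba}H_{cb}(X^\dagger)_{dc}(B^{k_1})_{ad}\Tr B^{I_n}}$. Contracting indices gives $N^3 m_{k_1+1,I_n}$.
\item Terms such as $N^2\mean{X_{ba}H_{cb}\,\partial_{X_{cd}}\Phi}$. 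Integrating $\partial_{X_{cd}}$ by parts moves it onto $X_{ba}H_{cb}$, where it gives zero, and onto $P$. This converts the term back into a multiplication term, so it partially cancels against, or duplicates, the previous one.
\item Terms with two derivatives, and finally the remaining mixed terms.
\end{itemize}

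A cleaner route is to move the multiplicative factors to the right using $[\partial_{X_{cd}},X_{ba}]$ and $[\partial_H,H]$. Each commutator produces a Kronecker delta, and these deltas yield contracted traces. The net effect is that \eqref{eq5.1} equals $N^3m_{k_1+1,I_n}$ minus correction terms. Each correction is a derivative $\partial_{(X^\dagger)}$, $\partial_H$ or $\partial_X$ acting on $\Phi$ after some factors have been contracted.

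Each such correction is then evaluated with Lemma \ref{Lemma8} and the Leibniz rule. By Lemma \ref{Lemma8}, $\partial_{X_{cd}}\Tr B^{k_i}=k_i(B^{k_i-1}X^\dagger H)_{dc}$, with analogous formulas for the other two derivatives. I will track the contributions as follows.
\begin{itemize}
\item Derivatives hitting $(B^{k_1})_{ad}$ split a trace, giving the $k_1\sum_{p_1+p_2=k_1-1}m_{p_1,p_2,I_n}$ term.
\item A derivative hitting a single $\Tr B^{k_i}$ merges it with the $B^{k_1}$ loop. This gives the $2k_1k_i\,m_{k_1+k_i-1}$ term, with the factor $2$ arising from two distinct placements.
\item The same derivative can also split $\Tr B^{k_i}$ itself, giving $m_{k_1,p_1,p_2}$.
\item Derivatives hitting two different traces $\Tr B^{k_i}$ and $\Tr B^{k_j}$ merge them, giving $2k_ik_j\,m_{k_1,k_i+k_j-1}$.
\end{itemize}

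\textbf{Main obstacle.} The delicate step is the bookkeeping in the second evaluation. I must determine exactly which mixed terms survive, and fix the multiplicities, in particular the factors $k_1$ and $2$. I will settle these by organising the computation as commutators, $[\partial,\text{mult}]$ terms, and then checking the coefficients against the case $k_1=1$, $n=1$ by direct Wick calculation.
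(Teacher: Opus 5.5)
\textbf{Gap in the second evaluation.} Your first evaluation is correct and is the paper's: peeling off the operators from right to left with \eqref{eq5.4}, \eqref{eq5.3}, \eqref{eq5.2} gives the braced average. Your reduction of the expanded product to the terms $N^2X_{ba}H_{cb}(\partial_{X_{cd}}+N(X^\dagger)_{dc})$ is also right.

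The gap comes in the next step. Integrating $\partial_{X_{cd}}$ by parts does \emph{not} give zero on $X_{ba}H_{cb}$. Since $\partial_{X_{cd}}X_{ba}=\chi_{c=b,d=a}$, the delta contracts $H_{cb}$ to $\Tr H$ and $(B^{k_1})_{ad}$ to $\Tr B^{k_1}$. Hence the average \eqref{eq5.1} equals exactly $N^3m_{k_1+1,I_n}-N^2\mean{\Tr H\,\Tr B^{k_1}\Tr B^{I_n}}$. This is the only commutator that matters; $[\partial_H,H]$ never arises in \eqref{eq5.1}.

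The correction term is neither ``a derivative acting on $\Phi$'' nor a mixed moment of $B$. Applying Lemma \ref{Lemma8} directly to the equivalent form $N^2\mean{X_{ba}H_{cb}\{\partial_{X_{cd}}\Phi\}}$ does not help either: it produces traces such as $\Tr(XB^{p}X^\dagger H^2)$, which are not moments. So the bookkeeping you flag as the main obstacle cannot close as set up. The coefficients in your bullet list are asserted from the target formula rather than derived, and a Wick check at $k_1=1$, $n=1$ is only a consistency test.

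\textbf{Missing idea.} You need two auxiliary integrations by parts that absorb the factor $N^2$. First, apply the fundamental theorem of calculus to $\partial_{(H^\dagger)_{aa}}$ acting on $\Tr B^{k_1}\Tr B^{I_n}$ times the weight; the derivative of the weight supplies $-N\Tr H$. This gives
\[
N\mean{\Tr H\,\Tr B^{k_1}\Tr B^{I_n}}=k_1\mean{\Tr(B^{k_1-1}X^\dagger X)\Tr B^{I_n}}+\sum_{i=2}^nk_i\mean{\Tr(B^{k_i-1}X^\dagger X)\Tr B^{k_1}\Tr B^{I_n\setminus\{k_i\}}},
\]
which is still not expressed in moments. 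Second, do the same with $\partial_{(X^\dagger)_{ab}}$ acting on $(B^{k_i-1}X^\dagger)_{ab}\Tr B^{\{k_1\}\cup I_n\setminus\{k_i\}}$, using Lemma \ref{Lemma8} plus the Kronecker term from the trailing $X^\dagger$. For each $i\in[n]$ this gives
\[
N\mean{\Tr(B^{k_i-1}X^\dagger X)\Tr B^{\{k_1\}\cup I_n\setminus\{k_i\}}}=\sum_{p_1+p_2=k_i-1}m_{p_1,p_2,\{k_1\}\cup I_n\setminus\{k_i\}}+\sum_{j\neq i}k_jm_{k_i+k_j-1,\{k_1\}\cup I_n\setminus\{k_i,k_j\}}.
\]
Substituting the second identity into $N$ times the first yields exactly the corrections in \eqref{eq5.5}:
\begin{itemize}
\item the factor $k_1$ in front of $\sum m_{p_1,p_2,I_n}$ comes from cyclicity in $\partial_H\Tr B^{k_1}$;
\item the coefficient $2k_1k_i$ collects one contribution from the identity for $\Tr(B^{k_1-1}X^\dagger X)$ and one from that for $\Tr(B^{k_i-1}X^\dagger X)$;
\item $2k_ik_j$ comes from summing over ordered pairs $i\neq j$.
\end{itemize}
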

\begin{proof}
Expanding the first two brackets in the argument of the average \eqref{eq5.1} and using the independence of $ H$ from $X^\dagger,X$ shows that
\begin{multline*}
\mean{\left(\partial_{(X^\dagger)_{ab}}+NX_{ba}\right)\left(\partial_{( H^\dagger)_{bc}}+N H_{cb}\right)\left(\partial_{X_{cd}}+N(X^\dagger)_{dc}\right)(B^{k_1})_{ad}\Tr B^{I_n}}
\\=\mean{\partial_{(X^\dagger)_{ab}}\left(\partial_{( H^\dagger)_{bc}}+N H_{cb}\right)\left(\partial_{X_{cd}}+N(X^\dagger)_{dc}\right)(B^{k_1})_{ad}\Tr B^{I_n}}
\\+N\mean{\partial_{( H^\dagger)_{bc}}X_{ba}\left(\partial_{X_{cd}}+N(X^\dagger)_{dc}\right)(B^{k_1})_{ad}\Tr B^{I_n}}
\\+N^2\mean{X_{ba} H_{cb}\left(\partial_{X_{cd}}+N(X^\dagger)_{dc}\right)(B^{k_1})_{ad}\Tr B^{I_n}}.
\end{multline*}
The first two averages on the right-hand side vanish by the fundamental theorem of calculus, while the third average can be simplified using integration by parts:
\begin{align}
N^2&\mean{X_{ba} H_{cb}\left(\partial_{X_{cd}}+N(X^\dagger)_{dc}\right)(B^{k_1})_{ad}\Tr B^{I_n}} \nonumber
\\&=N^3\mean{\Tr B^{k_1+1}\Tr B^{I_n}}+N^2\mean{X_{ba} H_{cb}\partial_{X_{cd}}(B^{k_1})_{ad}\Tr B^{I_n}} \nonumber
\\&=N^3\mean{\Tr B^{k_1+1}\Tr B^{I_n}}+N^2\mean{\partial_{X_{cd}}X_{ba} H_{cb}(B^{k_1})_{ad}\Tr B^{I_n}} \nonumber
\\&\quad-N^2\mean{\left\{\partial_{X_{cd}}X_{ba}\right\} H_{cb}(B^{k_1})_{ad}\Tr B^{I_n}} \nonumber
\\&=N^3\mean{\Tr B^{k_1+1}\Tr B^{I_n}}-N^2\mean{\chi_{c=b,d=a} H_{cb}(B^{k_1})_{ad}\Tr B^{I_n}} \nonumber
\\&=N^3m_{k_1+1,I_n}-N^2\mean{\Tr  H\Tr B^{k_1}\Tr B^{I_n}}; \label{eq5.6}
\end{align}
the average $\mean{\partial_{X_{cd}}X_{ba} H_{cb}(B^{k_1})_{ad}\Tr B^{I_n}}$ in the third line vanishes due to the fundamental theorem of calculus, yet again. Now, the average $\mean{\Tr  H\Tr B^{k_1}\Tr B^{I_n}}$ poses a challenge not seen in the proof of Proposition \ref{prop5}, wherein the analogous average $\mean{\Tr J\Tr B^{k_1}\Tr B^{I_n}}$ was zero due to the antisymmetry of $J$. Here, we must derive auxiliary identities to deal with this average. The fundamental theorem of calculus (left-hand side) and the Leibniz product rule combined with Lemma \ref{Lemma8} (right-hand side) shows that
\begin{multline} \label{eq5.7}
0=\mean{\partial_{( H^\dagger)_{aa}}\Tr B^{k_1}\Tr B^{I_n}}
\\=k_1\mean{\Tr(B^{k_1-1}X^\dagger X)\Tr B^{I_n}}+\sum_{i=2}^nk_i\mean{\Tr(B^{k_i-1}X^\dagger X)\Tr B^{k_1}\Tr B^{I_n\setminus\{k_i\}}}
\\-N\mean{\Tr  H\Tr B^{k_1}\Tr B^{I_n}}.
\end{multline}
The averages involving traces of the form $\Tr(B^{k_j-1}X^\dagger X)$ can be expressed in terms of the mixed moments by similarly showing that for $i=1,\ldots,n$,
\begin{multline*}
0=\mean{\partial_{(X^\dagger)_{ab}}(B^{k_i-1}X^\dagger)_{ab}\Tr B^{\{k_1\}\cup I_n\setminus\{k_i\}}}
\\=-N\mean{\Tr(B^{k_i-1}X^\dagger X)\Tr B^{\{k_1\}\cup I_n\setminus\{k_i\}}}+\sum_{p_1+p_2=k_i-1}m_{p_1,p_2,\{k_1\}\cup I_n\setminus\{k_i\}}
\\+\sum_{\substack{1\leq j\leq n,\\j\neq i}}k_jm_{k_i+k_j-1,\{k_1\}\cup I_n\setminus\{k_i,k_j\}}.
\end{multline*}
Setting $i=1,\ldots,n$ into this equation and substituting the resulting set of identities into equation \eqref{eq5.7} multiplied by $N$ shows that
\begin{align*}
-N^2&\mean{\Tr  H\Tr B^{k_1}\Tr B^{I_n}}
\\&=-Nk_1\mean{\Tr(B^{k_1-1}X^\dagger X)\Tr B^{I_n}}-N\sum_{i=2}^nk_i\mean{\Tr(B^{k_i-1}X^\dagger X)\Tr B^{k_1}\Tr B^{I_n\setminus\{k_i\}}}
\\&=-k_1\sum_{p_1+p_2=k_1-1}m_{p_1,p_2,I_n}-k_1\sum_{j=2}^nk_jm_{k_1+k_j-1,I_n\setminus\{k_j\}}-\sum_{i=2}^nk_i\sum_{p_1+p_2=k_i-1}m_{k_1,p_1,p_2,I_n\setminus\{k_i\}}
\\&\qquad-\sum_{i=2}^nk_i\left(k_1m_{k_1+k_i-1,I_n\setminus\{k_i\}}+\sum_{\substack{2\leq j\leq n,\\j\neq i}}k_jm_{k_1,k_i+k_j-1,I_n\setminus\{k_i,k_j\}}\right).
\end{align*}
Substituting this result into equation \eqref{eq5.6} then gives the right-hand side of equation \eqref{eq5.5}, as required.

To show that the average \eqref{eq5.1} simplifies to the left-hand side of equation \eqref{eq5.5}, one need only repeat the relevant arguments given in the proof of Proposition \ref{prop5}, with the identities \eqref{eq5.2}--\eqref{eq5.4} in mind.
\end{proof}

The next step in obtaining loop equations on the mixed moments is to expand the right-hand side of equation \eqref{eq5.5} using the Leibniz product rule. Letting
\begin{equation*}
f_i(B)=\begin{cases}(B^{k_1})_{ad},&i=1,\\ \Tr B^{k_i},&i=2,\ldots,n\end{cases}
\end{equation*}
and writing $\mathcal{A}_{i,j,h}$ for the average obtained by replacing $f_i(B)$ by $\{\partial_{(X^\dagger)_{ab}}f_i(B)\}$, then $f_j(B)$ by $\{\partial_{( H^\dagger)_{bc}}f_j(B)\}$, and finally $f_h(B)$ by $\{\partial_{X_{cd}}f_h(B)\}$ in $\langle f_1(B)\cdots f_n(B)\rangle$ so that, e.g.,
\begin{equation*}
\mathcal{A}_{3,3,1}=\mean{\left\{\partial_{(X^\dagger)_{ab}}\partial_{( H^\dagger)_{bc}}\Tr B^{k_3}\right\}\left\{\partial_{X_{cd}}(B^{k_1})_{ad}\right\}\Tr B^{I_n\setminus\{k_3\}}},
\end{equation*}
we have that the left-hand side of equation \eqref{eq5.5} is given by
\begin{multline} \label{eq5.8}
\mean{\left\{\partial_{(X^\dagger)_{ab}}\partial_{( H^\dagger)_{bc}}\partial_{X_{cd}}(B^{k_1})_{ad}\Tr B^{I_n}\right\}}=\sum_{i,j,h=1}^n\mathcal{A}_{i,j,h}
\\\hspace{-2em}=\mathcal{A}_{1,1,1}+\sum_{i=2}^n\left(\mathcal{A}_{1,1,i}+\mathcal{A}_{1,i,1}+\mathcal{A}_{i,1,1}+\mathcal{A}_{1,i,i}+\mathcal{A}_{i,1,i}+\mathcal{A}_{i,i,1}+\mathcal{A}_{i,i,i}\right)
\\+\sum_{\substack{2\leq i,j\leq n,\\i\neq j}}\left(\mathcal{A}_{1,i,j}+\mathcal{A}_{i,1,j}+\mathcal{A}_{i,j,1}+\mathcal{A}_{i,i,j}+\mathcal{A}_{i,j,i}+\mathcal{A}_{j,i,i}\right)+\sum_{\substack{2\leq i,j,h\leq n,\\i\neq j\neq h\neq i}}\mathcal{A}_{i,j,h}.
\end{multline}
Every term on the right-hand side of this equation can be expressed in terms of the mixed moments using similar methods to those detailed in the proof of Lemma~\ref{Lemma6}, with Lemma~\ref{Lemma8} supplanting the role of Lemma \ref{Lemma5}. Hence, substituting this equation into equation \eqref{eq5.5} produces the aforementioned loop equation on the mixed moments --- this loop equation and the required expressions for the $\mathcal{A}_{i',j',h'}$ are displayed in Appendix \ref{appendixC.2}. Multiplying both sides of said loop equation by $x_1^{-k_1-1}\cdots x_n^{-k_n-1}$ and summing over $k_1,\ldots,k_n\geq0$ then gives the following loop equation on the unconnected correlators --- this transformation is carried out term by term in Appendix \ref{appendixC.2}, as well.

\begin{proposition} \label{prop9}
Recall that $J_n=(x_2,\ldots,x_n)$, set $U_0:=1$, and define $U_{n'}:=0$ for $n'<0$. Furthermore, define the auxiliary functions (not to be confused with that given in Proposition \ref{prop6})
\begin{multline} \label{eq5.9}
A_i(x_1,J_n)=\frac{3x_1x_iU_{n+1}(x_1,x_1,x_1,J_n\setminus\{x_i\})-x_1x_iU_{n+1}(x_1,x_1,J_n)}{x_1-x_i}
\\-\frac{x_1x_iU_{n+1}(x_1,x_i,J_n)+x_i^2U_{n+1}(x_i,x_i,J_n)}{x_1-x_i}
\\+\left(\frac{3x_1}{2}\frac{\partial^2}{\partial x_1^2}+\frac{\partial}{\partial x_1}+x_i\frac{\partial^2}{\partial x_1\partial x_i}+\frac{x_i}{2x_1}\frac{\partial^2}{\partial x_i^2}x_i\right)
\\ \times\left\{\frac{x_iU_{n-1}(x_1,J_n\setminus\{x_i\})-x_1U_{n-1}(J_n)}{x_1-x_i}\right\},
\end{multline}
\begin{multline} \label{eq5.10}
A_{i,j}(x_1,J_n)=\frac{2x_iU_{n-1}(x_1,J_n\setminus\{x_i\})-2x_jU_{n-1}(x_1,J_n\setminus\{x_j\})}{x_i-x_j}
\\\hspace{-6em}+x_ix_j\Bigg\{\frac{6U_{n-1}(x_1,x_1,J_n\setminus\{x_i,x_j\})+U_{n-1}(J_n)}{(x_1-x_i)(x_1-x_j)}
\\\hspace{4em}-\frac{3U_{n-1}(x_1,J_n\setminus\{x_j\})+3U_{n-1}(x_1,,J_n\setminus\{x_i\})}{(x_1-x_i)(x_1-x_j)}
\\\hspace{4em}-\frac{3U_{n-1}(x_i,J_n\setminus\{x_j\})-2U_{n-1}(x_1,J_n\setminus\{x_i\})}{(x_1-x_i)(x_i-x_j)}
\\+\frac{3U_{n-1}(x_j,J_n\setminus\{x_i\})-2U_{n-1}(x_1,J_n\setminus\{x_j\})}{(x_1-x_j)(x_i-x_j)}\Bigg\},
\end{multline}
\begin{multline} \label{eq5.11}
A_{i,j,h}(x_1,J_n)=\frac{x_ix_jx_hU_{n-3}(x_1,J_n\setminus\{x_i,x_j,x_h\})}{(x_1-x_i)(x_1-x_j)(x_1-x_h)}-\frac{x_1x_jx_hU_{n-3}(J_n\setminus\{x_j,x_h\})}{(x_1-x_i)(x_i-x_j)(x_i-x_h)}
\\+\frac{x_1x_ix_hU_{n-3}(J_n\setminus\{x_i,x_h\})}{(x_1-x_j)(x_i-x_j)(x_j-x_h)}-\frac{x_1x_ix_jU_{n-3}(J_n\setminus\{x_i,x_j\})}{(x_1-x_h)(x_i-x_h)(x_j-x_h)}.
\end{multline}
Then, for $n\geq1$, the unconnected correlators satisfy the loop equation
\begin{multline} \label{eq5.12}
0=x_1^2U_{n+3}(x_1,x_1,x_1,x_1,J_n)-N^3x_1U_n(x_1,J_n)+N^4U_{n-1}(J_n)
\\\hspace{2em}+\lim_{\xi\to x_1}\left(x_1^2\frac{\partial^2}{\partial x_1^2}+2x_1\frac{\partial}{\partial x_1}\right)\left\{U_{n+1}(\xi,x_1,J_n)+\frac{1}{2}U_{n+1}(x_1,x_1,J_n)\right\}+\sum_{i=2}^n\frac{\partial}{\partial x_i}A_i(x_1,J_n)
\\+\sum_{2\leq i<j\leq n}\frac{\partial^2}{\partial x_i\partial x_j}A_{i,j}(x_1,J_n)+\frac{6}{x_1}\sum_{2\leq i<j<h\leq n}\frac{\partial^3}{\partial x_i\partial x_j\partial x_h}A_{i,j,h}(x_1,J_n).
\end{multline}
\end{proposition}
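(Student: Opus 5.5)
The approach mirrors Proposition \ref{prop6}: obtain a loop equation on the mixed moments and then pass to generating functions. First, combine Proposition \ref{prop8} with the Leibniz expansion \eqref{eq5.8} to express every $\mathcal{A}_{i,j,h}$ through mixed moments. The triple derivative $\partial_{(X^\dagger)_{ab}}\partial_{(H^\dagger)_{bc}}\partial_{X_{cd}}$ always contracts the cut loop $(B^{k_1})_{ad}$ back into traces. The three derivative formulas of Lemma \ref{Lemma8} are used as follows.
\begin{itemize}
\item $\mathcal{A}_{1,1,1}$ yields sums $\sum_{p_1+\cdots+p_4=k_1-1}m_{p_1,\ldots,p_4,I_n}$, together with lower-order terms in which two derivatives hit the same factor of $B$; the latter produce factors such as $k_1$ and $p$-dependent weights.
\item Terms with one index in $\{2,\ldots,n\}$ join $\Tr B^{k_i}$ into the loop, giving $m_{k_1+k_i-1+\cdots}$-type moments with weights $k_i$ and the number of splittings.
\item The terms $\mathcal{A}_{i,j,h}$ with distinct $i,j,h$ give $k_ik_jk_h\,m_{k_1+k_i+k_j+k_h-1,\ldots}$.
\end{itemize}
These are recorded as a moment loop equation in the style of \eqref{eqA.1}. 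Equating with the right-hand side of \eqref{eq5.5} then gives an identity purely on the $m_{k_1,\ldots,k_n}$.

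Next, multiply by $x_1^{-k_1-1}\cdots x_n^{-k_n-1}$ and sum over $k_1,\ldots,k_n\ge0$, translating the terms by type.
\begin{itemize}
\item Convolutions over compositions $p_1+\cdots+p_r=k_1-1$ become diagonal correlators. For example, the quadruple convolution becomes $x_1^2U_{n+3}(x_1,x_1,x_1,x_1,J_n)$ after accounting for the shift $k_1+1$ and the power of $x_1$.
\item The moment $N^3m_{k_1+1,I_n}$ becomes $N^3(x_1U_n-\text{boundary})$. The boundary term $m_{0,I_n}=N\,m_{I_n}$ produces the $N^4U_{n-1}(J_n)$ term.
\item Factors of $k_1$ and $k_1^2$ become the operators $x_1\partial_{x_1}$ and $x_1^2\partial_{x_1}^2$. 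This explains the $\lim_{\xi\to x_1}$ construction, which distinguishes the derivative acting on one slot of a diagonal from the derivative of the diagonal itself.
\item Weights $k_i$ are realised as $-\partial_{x_i}x_i$ acting on the relevant generating function.
\item Merged moments $m_{k_1+k_i-1,\ldots}$ are resummed using the identity
\[
\sum_{k_1,k_i}\frac{f_{k_1+k_i}}{x_1^{k_1+1}x_i^{k_i+1}}=\frac{x_1F(x_1)-x_iF(x_i)}{x_1x_i(x_i-x_1)}
\]
up to boundary corrections. This produces the divided differences $(x_1-x_i)^{-1}$, $(x_1-x_i)^{-1}(x_1-x_j)^{-1}$, and the triple partial-fraction structure of \eqref{eq5.11}.
\end{itemize}
Collecting the terms indexed by one, two and three auxiliary indices respectively gives $A_i$, $A_{i,j}$ and $A_{i,j,h}$.

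The main obstacle is bookkeeping rather than conceptual. Three things need care: the boundary contributions when some $k$'s or $p$'s vanish (giving $\Tr B^0=N$, which is the source of the $U_{n-1}$ and $U_{n-3}$ pieces inside $A_i$ and $A_{i,j,h}$); the mixed second-derivative operator in \eqref{eq5.9}; and the multiplicities $3$, $6$, $2$ arising from symmetrisation over orderings of $(i,j,h)$ in \eqref{eq5.8}. To control this, I would treat each $\mathcal{A}$-term separately, verify its generating function against small cases ($n=1,2$, small $k$) computed from the moment equation, and only then assemble \eqref{eq5.12}. The term-by-term details would be placed in Appendix \ref{appendixC.2}, as was done for Proposition \ref{prop6}.
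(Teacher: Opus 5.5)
Your overall route is the paper's. You use Proposition~\ref{prop8} with the Leibniz decomposition \eqref{eq5.8}, evaluate each $\mathcal{A}_{i',j',h'}$ via Lemma~\ref{Lemma8} (the paper's \eqref{eqB.1}--\eqref{eqB.10}), form the moment equation \eqref{eqB.12}, and transform term by term as in \eqref{eqB.13}--\eqref{eqB.27}. The problem is that the one explicit tool you give for the merged moments is false, and it is exactly what produces $A_i$, $A_{i,j}$, $A_{i,j,h}$. With $F(x)=\sum_{s\ge0}f_sx^{-s-1}$, summing the geometric series in $k_1$ at fixed $s=k_1+k_i$ gives
\[
\sum_{k_1,k_i\ge0}\frac{f_{k_1+k_i}}{x_1^{k_1+1}x_i^{k_i+1}}=\frac{F(x_1)-F(x_i)}{x_i-x_1},
\]
not $\frac{x_1F(x_1)-x_iF(x_i)}{x_1x_i(x_i-x_1)}$. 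Already for $f_s=\chi_{s=1}$ the left side is $(x_1+x_i)/(x_1^2x_i^2)$, while your right side is $1/(x_1^2x_i^2)$. The discrepancy involves every $f_s$, so it is not a boundary effect. For the shifted moments that actually occur, $g_{k_1+k_i-1}$ with $G(x)=\sum_{t\ge0}g_tx^{-t-1}$, the correct result is $\frac{x_iG(x_1)-x_1G(x_i)}{x_1x_i(x_i-x_1)}$. The pairing is crossed: $x_i$ goes with $G(x_1)$ and $x_1$ with $G(x_i)$. After the weights $-\partial_{x_i}x_i$ this gives the combination $x_iU_{n-1}(x_1,J_n\setminus\{x_i\})-x_1U_{n-1}(J_n)$ in \eqref{eq5.9} and \eqref{eqB.16}. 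Your diagonal pairing would give wrong auxiliary functions.

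For the same reason, your account of the $U_{n-1}$ and $U_{n-3}$ pieces is mistaken: they are not $\Tr B^0=N$ boundary terms. With the summation over $k\ge0$ used in Appendix~\ref{appendixC.2}, $m_{0,\ldots}=Nm_{\ldots}$ is absorbed automatically into the correlators. The only genuine boundary term is $m_{0,I_n}$ from the shift in $N^3m_{k_1+1,I_n}$, which gives $N^4U_{n-1}(J_n)$; note its explicit factor of $N$, which the pieces inside $A_i$ and $A_{i,j,h}$ lack. Those pieces arise because absorbing $\Tr B^{k_i}$ (or $\Tr B^{k_i},\Tr B^{k_j},\Tr B^{k_h}$) into the cut loop leaves a moment with fewer traces. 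The divided difference then evaluates its generating function at every node: $G(x_1)$ yields $U_{n-1}(x_1,J_n\setminus\{x_i\})$, while $G(x_i)$ yields $U_{n-1}(J_n)$, with the merged trace carried by $x_i$. In the four-node case
\[
\sum_{k_1,k_i,k_j,k_h\ge0}\frac{g_{k_1+k_i+k_j+k_h}}{x_1^{k_1+1}x_i^{k_i+1}x_j^{k_j+1}x_h^{k_h+1}}=\sum_{r\in\{1,i,j,h\}}\frac{G(x_r)}{\prod_{q\ne r}(x_q-x_r)},
\]
and the shift together with the weights $k_ik_jk_h$ turns the four nodes into the four terms of \eqref{eq5.11}.

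A smaller point: the $\lim_{\xi\to x_1}$ term comes from the slot-dependent weight in $2\sum_{p_1+p_2+p_3=k_1-1}p_1m_{p_1+p_2,p_3,I_n}$ of \eqref{eqB.1}, which resums to $q(q+1)$ on a single slot. Overall factors of $k_1$ only differentiate the whole diagonal. With these corrections your plan reproduces \eqref{eq5.12}.
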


\subsection{Loop equations on the \texorpdfstring{$W_n^{(\mathcal{H}_1)}$}{Wn(H1)} and \texorpdfstring{$W_n^{(\mathcal{H}_1),l}$}{Wn(H1),l}} \label{s5.2}
Setting $n=1$ in equation \eqref{eq5.12} above and using the canonical extension of Lemma \ref{Lemma7} specifying $U_{n+3}(x_1,x_1,x_1,x_1,J_n)$ as a suitable sum over $\mu\vdash(x_1,x_1,x_1,x_1)$ produces the $n=1$ \textit{loop equation on the connected correlators} $W_n(x_1,\ldots,x_n)$,
\begin{multline} \label{eq5.13}
0=x_1^2\sum_{\mu\vdash(x_1,x_1,x_1,x_1)}\prod_{\mu_t\in\mu}W_{\#\mu_t}(\mu_t)-N^3x_1W_1(x_1)+N^4
\\+\lim_{\xi\to x_1}\left(x_1^2\frac{\partial^2}{\partial x_1^2}+2x_1\frac{\partial}{\partial x_1}\right)\left\{W_2(\xi,x_1)+\frac{1}{2}W_2(x_1,x_1)\right.
\\+\left.W_1(\xi)W_1(x_1)+\frac{1}{2}\left(W_1(x_1)\right)^2\right\}.
\end{multline}
Inserting the $1/N$ expansion \eqref{eq1.10} into this equation and collecting terms of order $N^{4-l}$ ($l\geq0$) yields the $(1,l)$ loop equation
\begin{multline} \label{eq5.14}
0=x_1^2\sum_{\mu\vdash(x_1,x_1,x_1,x_1)}\sum_{\substack{l_1+\cdots+l_{\#\mu}\\=l+2\#\mu-8}}\prod_{\mu_t\in\mu}W_{\#\mu_t}^{l_t}(\mu_t)-x_1W_1^l(x_1)+\chi_{l=0}
\\\hspace{-8em}+\lim_{\xi\to x_1}\left(x_1^2\frac{\partial^2}{\partial x_1^2}+2x_1\frac{\partial}{\partial x_1}\right)\Bigg\{W_2^{l-4}(\xi,x_1)+\frac{1}{2}W_2^{l-4}(x_1,x_1)
\\+\sum_{l_1+l_2=l-2}W_1^{l_1}(x_1)\left(W_1^{l_2}(\xi)+\frac{1}{2}W_1^{l_2}(x_1)\right)\Bigg\}.
\end{multline}
(Recall our convention of setting $W_n^{l'}:=0$ for all $n\geq1$ and $l'<0$.) Setting $l=0$ in the above gives the spectral curve
\begin{equation} \label{eq5.15}
0=x_1^2\left(W_1^0(x_1)\right)^4-x_1W_1^0(x_1)+1,
\end{equation}
while setting $l=1$ shows that
\begin{equation} \label{eq5.16}
0=\left(4x_1\left(W_1^0(x_1)\right)^3-1\right)W_1^1(x_1).
\end{equation}
Equation \eqref{eq5.15} has been given in \citep{FIL18}, where it was derived through techniques from free probability theory and also through comparison with the Fuss--Catalan distribution --- we have in analogy with equation \eqref{eq4.42} that $W_1^{(\mathcal{H}_m),0}(x_1)=x_1W_1^{(\mathrm{FC}_{2m+2})}(x_1^2)$ \citep{FIL18} and that equation \eqref{eq4.43} is known \citep{FL15} to have the generalisation
\begin{equation} \label{eq5.17}
0=u^{m-1}\left(W_1^{(\mathrm{FC}_m)}(u)\right)^m-uW_1^{(\mathrm{FC}_m)}(u)+1,\quad m\in\mathbb{N}.
\end{equation}
Since $4x_1\left(W_1^0(x_1)\right)^3=1$ is in contradiction with equation \eqref{eq5.15}, equation \eqref{eq5.16} tells us that $W_1^1(x_1)$ must be identically zero --- this is perfectly in line with our earlier observation that $W_n^l=0$ whenever $l$ is odd.

\begin{definition} \label{def7}
For notational convenience, define $\overline{A}_i(x_1,J_n)$ to be $A_i(x_1,J_n)$, as specified by equation \eqref{eq5.9}, but with $U_{n-1}(x_1,J_n\setminus\{x_i\})$ replaced by $W_{n-1}(x_1,J_n\setminus\{x_i\})$, $U_{n-1}(J_n)$ replaced by $W_{n-1}(J_n)$, and each instance of $U_{n+1}(K)$ ($K$ an ordered set of variables drawn from $\{x_1,\ldots,x_n\}$) replaced by
\begin{equation*}
\sum_{\mu\vdash K'}\sum_{\sqcup_{t=1}^{\#\mu}K_t=J_n\setminus\{x_i\}}\prod_{\mu_t\in\mu}W_{\#K_t+\#\mu_t}(\mu_t,K_t),
\end{equation*}
where $K'$ is the ordered set obtained by removing $J_n\setminus\{x_i\}$ from $K$. Likewise, define $\overline{A}_{i,j}(x_1,J_n)$ to be $A_{i,j}(x_1,J_n)$ \eqref{eq5.10}, but with each instance of $U_{n-1}(K)$ replaced by
\begin{equation*}
\sum_{\mu\vdash K''}\sum_{\sqcup_{t=1}^{\#\mu}K_t=J_n\setminus\{x_i,x_j\}}\prod_{\mu_t\in\mu}W_{\#K_t+\#\mu_t}(\mu_t,K_t),
\end{equation*}
where $K''$ is now the result of removing $J_n\setminus\{x_i,x_j\}$ from $K$. Finally, let $\overline{A}_{i,j,h}(x_1,J_n)$ be the function $A_{i,j,h}(x_1,J_n)$ \eqref{eq5.11} with each instance of $U_{n-3}(K)$ replaced by $W_{n-3}(K)$.
\end{definition}

Setting $n=2$ in equation \eqref{eq5.12} and subtracting equation \eqref{eq5.14} multiplied by $U_1(x_2)$ from the result yields the $n=2$ \textit{loop equation on the connected correlators},
\begin{multline} \label{eq5.18}
0=x_1^2\sum_{\mu\vdash(x_1,x_1,x_1,x_1)}\sum_{\sqcup_{t=1}^{\#\mu}K_t=\{x_2\}}\prod_{\mu_t\in\mu}W_{\#K_t+\#\mu_t}(\mu_t,K_t)-N^3x_1W_2(x_1,x_2)+\frac{\partial}{\partial x_2}\overline{A}_2(x_1,x_2)
\\\hspace{-6em}+\lim_{\xi\to x_1}\left(x_1^2\frac{\partial^2}{\partial x_1^2}+2x_1\frac{\partial}{\partial x_1}\right)\Bigg\{\sum_{\mu\vdash(\xi,x_1)}\sum_{\sqcup_{t=1}^{\#\mu}K_t=\{x_2\}}\prod_{\mu_t\in\mu}W_{\#K_t+\#\mu_t}(\mu_t,K_t)
\\+\frac{1}{2}\sum_{\mu\vdash(x_1,x_1)}\sum_{\sqcup_{t=1}^{\#\mu}K_t=\{x_2\}}\prod_{\mu_t\in\mu}W_{\#K_t+\#\mu_t}(\mu_t,K_t)\Bigg\};
\end{multline}
the auxiliary function $\overline{A}_2(x_1,x_2)$ is as in Definition \ref{def7} above. Substituting the large $N$ expansion \eqref{eq1.10} into this equation, multiplying the result by $N^{-3}$, and taking the limit $N\to\infty$ produces the $(n,l)=(2,0)$ loop equation
\begin{multline} \label{eq5.19}
0=4x_1^2\left(W_1^0(x_1)\right)^3W_2^0(x_1,x_2)-x_1W_2^0(x_1,x_2)-\frac{\partial}{\partial x_2}\frac{x_2^2\left(W_1^0(x_2)\right)^3}{x_1-x_2}
\\+\frac{\partial}{\partial x_2}\frac{x_1x_2\left[3\left(W_1^0(x_1)\right)^3-\left(W_1^0(x_1)\right)^2W_1^0(x_2)-W_1^0(x_1)\left(W_1^0(x_2)\right)^2\right]}{x_1-x_2}.
\end{multline}

As seen in \S\ref{s4.2}, the above method for deriving equation \eqref{eq5.18} extends to the general $n\geq3$ case. Thus, let us now give the analogue of Proposition \ref{prop7} for the $(N,N)$ Hermitised Laguerre ensemble --- we omit the proof, as it is exactly the same as that given for Proposition \ref{prop7}, with equation \eqref{eq5.12} now assuming the role of equation \eqref{eq4.26}.
\begin{proposition} \label{prop10}
Let $\overline{A}_i(x_1,J_n)$, $\overline{A}_{i,j}(x_1,J_n)$, and $\overline{A}_{i,j,h}(x_1,J_n)$ be as introduced in Definition \ref{def7} above. For $n\geq3$, the connected correlators of the $(N,N)$ Hermitised Laguerre ensemble satisfy the loop equation
\begin{multline} \label{eq5.20}
0=x_1^2\sum_{\mu\vdash(x_1,x_1,x_1,x_1)}\sum_{\sqcup_{t=1}^{\#\mu}K_t=J_n}\prod_{\mu_t\in\mu}W_{\#K_t+\#\mu_t}(\mu_t,K_t)-N^3x_1W_n(x_1,J_n)+\sum_{i=2}^n\frac{\partial}{\partial x_2}\overline{A}_i(x_1,J_n)
\\+\sum_{2\leq i<j\leq n}\frac{\partial^2}{\partial x_i\partial x_j}\overline{A}_{i,j}(x_1,J_n)+\frac{6}{x_1}\sum_{2\leq i<j<h\leq n}\frac{\partial^3}{\partial x_i\partial x_j\partial x_h}\overline{A}_{i,j,h}(x_1,J_n)
\\\hspace{-6em}+\lim_{\xi\to x_1}\left(x_1^2\frac{\partial^2}{\partial x_1^2}+2x_1\frac{\partial}{\partial x_1}\right)\Bigg\{\sum_{\mu\vdash(\xi,x_1)}\sum_{\sqcup_{t=1}^{\#\mu}K_t=J_n}\prod_{\mu_t\in\mu}W_{\#K_t+\#\mu_t}(\mu_t,K_t)
\\+\frac{1}{2}\sum_{\mu\vdash(x_1,x_1)}\sum_{\sqcup_{t=1}^{\#\mu}K_t=J_n}\prod_{\mu_t\in\mu}W_{\#K_t+\#\mu_t}(\mu_t,K_t)\Bigg\}.
\end{multline}
\end{proposition}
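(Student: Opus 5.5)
We will prove Proposition~\ref{prop10} by the same induction on $n$ used for Proposition~\ref{prop7}, following \citep[App.~A]{FRW17}. Define $E_1(x_1)$ to be the right-hand side of the $n=1$ connected loop equation \eqref{eq5.13}, $E_2(x_1,x_2)$ to be the right-hand side of \eqref{eq5.18}, and, for $n\geq3$, $E_n(x_1,J_n)$ to be the right-hand side of \eqref{eq5.20}. The induction hypothesis is that $E_m=0$ for all $1\leq m<n$; the base cases $m=1,2$ are exactly \eqref{eq5.13} and \eqref{eq5.18}.

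For the inductive step, take the unconnected loop equation \eqref{eq5.12} of Proposition~\ref{prop9} at level $n$ and rewrite every $U$ in terms of $W$'s using Lemma~\ref{Lemma7} and its canonical extension to $(x_1,x_1,x_1,x_1)$. This gives
\begin{equation*}
U_{n+3}(x_1,x_1,x_1,x_1,J_n)=\sum_{\mu\vdash(x_1,x_1,x_1,x_1)}\sum_{\sqcup_{t=1}^{\#\mu+1}K_t=J_n}U_{\#K_1}(K_1)\prod_{\mu_t\in\mu}W_{\#K_{t+1}+\#\mu_t}(\mu_t,K_{t+1}),
\end{equation*}
and similarly for $U_{n+1}(\xi,x_1,J_n)$, $U_{n+1}(x_1,x_1,J_n)$ and $U_n(x_1,J_n)$. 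The key claim is the identity
\begin{equation*}
\text{RHS of \eqref{eq5.12}}=\sum_{K_1\sqcup K_2=J_n}U_{\#K_1}(K_1)\,E_{\#K_2+1}(x_1,K_2).
\end{equation*}
Once this holds, the left-hand side vanishes, the induction hypothesis kills every term with $K_1\neq\emptyset$, and therefore $E_n(x_1,J_n)=0$.

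We verify the identity term by term.

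\textbf{Terms without $x_i$-derivatives.} These are the $x_1^2U_{n+3}$ term, the $N^3x_1U_n$ term and the $\lim_{\xi\to x_1}$ term. Each factors directly: the variables in $K_1$ are absorbed into the $U_{\#K_1}(K_1)$ prefactor, and what remains is the corresponding term of $E_{\#K_2+1}(x_1,K_2)$. The differential operator in $x_1$ passes through $U_{\#K_1}(K_1)$ because it does not involve $x_1$. The constant $N^4U_{n-1}(J_n)$ matches the $N^4$ term of $E_1$ multiplied by $U_{n-1}(J_n)$, and it is absent from $E_m$ for $m\geq2$.

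\textbf{Terms $\partial_{x_i}A_i$.} Split according to whether $x_i\in K_1$ or $x_i\in K_2$. When $x_i\in K_2$, the replacements made in Definition~\ref{def7} reproduce $\partial_{x_i}\overline{A}_i$ inside $E_{\#K_2+1}$. When $x_i\in K_1$, these contributions must instead be the pieces of $\partial_{x_i}A_i$ in which the $x_i$-dependence sits inside a $U$ factor disjoint from $x_1$.

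\textbf{Terms $A_{i,j}$ and $A_{i,j,h}$.} These are handled in the same way. Each $U_{n-1}(K)$ or $U_{n-3}(K)$ is decomposed by Lemma~\ref{Lemma7}, with the partitions of $K''$ chosen to match Definition~\ref{def7}.

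The main obstacle is this bookkeeping for the $x_i$-derivative terms. One has to check that contributions in which $x_i$ falls outside the $W$-block containing $x_1$ recombine exactly into $U_{\#K_1}(K_1)\,\partial_{x_i}\overline{A}_i(\cdots)$ for smaller $n$. This is delicate because terms such as $U_{n-1}(J_n)$ and $U_{n+1}(x_i,x_i,J_n)$ do not contain $x_1$, and the rational prefactors $1/(x_1-x_i)$ must also match.

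To handle this, I would first check the identity explicitly at $n=3$, where it reduces to finitely many terms. I would then argue generally by observing that every replacement rule in Definition~\ref{def7} is precisely "keep only partitions in which the variables of $J_n\setminus\{x_i\}$ attach to the $\mu$-blocks". This is the same combinatorial statement as Lemma~\ref{Lemma7}, so the complementary terms are exactly the $K_1\neq\emptyset$ terms. The computation is otherwise identical to that for Proposition~\ref{prop7}.
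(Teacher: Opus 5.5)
Your proposal matches the paper's argument. The authors omit the proof because it is the inductive argument of Proposition~\ref{prop7} with \eqref{eq5.12} in place of \eqref{eq4.26}: rewrite the $U_n$ loop equation via Lemma~\ref{Lemma7}, identify it with $\sum_{K_1\sqcup K_2=J_n}U_{\#K_1}(K_1)E_{\#K_2+1}(x_1,K_2)$, and remove the $K_1\neq\emptyset$ terms by induction.

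One clarification on your ``main obstacle''. Definition~\ref{def7} expands each $U$ relative to all of $K'$ (resp.\ $K''$), and this set includes $x_i$ (resp.\ $x_i,x_j$). The derivative terms therefore decompose directly as $\sum_{K_1\subseteq J_n\setminus\{x_i\}}U_{\#K_1}(K_1)\,\partial_{x_i}\overline{A}_i(x_1,J_n\setminus K_1)$, and analogously for $A_{i,j}$ and $A_{i,j,h}$. So contributions with $x_i\in K_1$ never arise, and there is nothing to recombine.
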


The above loop equation can be used to derive loop equations characterising the correlator expansion coefficients $W_n^l$ using the method described below Proposition \ref{prop7}. Like in \S\ref{s4.2}, we do not display these loop equations here, but note that they are straightforward rewritings of equation \eqref{eq5.20} in a similar fashion to how equation \eqref{eq5.14} is just equation \eqref{eq5.13} with products of connected correlators $W_{n'}$ replaced by suitable sums of products of correlator expansion coefficients $W_{n'}^{l'}$.

The loop equation \eqref{eq5.20} is of higher order than equation \eqref{eq4.37} (and also the loop equation of \citep{DF20}), this being indicated by the spectral curve \eqref{eq5.15} being a quartic polynomial in $W_1^0(x_1)$ instead of cubic, the first sum in the right-hand side of equation \eqref{eq5.20} containing a product of four correlators instead of at most three, and there being a sum over $2\leq i<j<h\leq n$ of third order partial derivatives $\partial_{x_i}\partial_{x_j}\partial_{x_h}$ of the simple functions $\overline{A}_{i,j,h}(x_1,J_n)$ involving $W_{n-3}$ in said loop equation; cf.~the last two lines of equation \eqref{eq4.37}. Further points of difference include the occurrence of the limiting term in the last two lines of equation \eqref{eq5.20} and the lack of denominators involving differences of squares, which is a special feature of the antisymmetrised Laguerre ensemble.

\subsection{Discussion on \texorpdfstring{$W_1^{(\mathcal{H}_1),0}$}{W1(H1),0} and \texorpdfstring{$W_2^{(\mathcal{H}_1),0}$}{W2(H1),0}} \label{s5.3}
In parallel with \S\ref{s4.3}, we now find a rational parametrisation for the spectral curve \eqref{eq5.15} of the $(N,N)$ Hermitised Laguerre ensemble, evaluate $W_1^0(x_1)$, $W_2^0(x_1,x_2)$, and $W_1^2(x_1)$ in terms of this parametrisation, and then use residue calculus to compute some coefficients of the genus expansion
\begin{equation} \label{eq5.21}
c_{k_1,\ldots,k_n}^{(\mathcal{H}_1)}=N^{2-n}\sum_{l=0}^{\tfrac{3}{2}\overline{k}_n-n}\frac{c_{k_1,\ldots,k_n}^{(\mathcal{H}_1),l}}{N^l}
\end{equation}
for particular values of $n,k_1,\ldots,k_n$; similar to the expansion \eqref{eq4.47}, the above is known to exist due to Proposition \ref{prop1}. Thus, our immediate goal is to find rational functions $x(z),y(z)$ such that equation \eqref{eq5.15} holds true for all $z$ upon replacing $W_1^0(x_1)$ by $y(z)$ and $x_1$ by $x(z)$. By using the Newton polygon method \citep{BP00}, we know that the complex algebraic curve $0=x^2y^4-xy+1$ is of genus zero and is thus the Riemann sphere $\mathbb{CP}^1$. Hence, our desired parametrisation must exist (see, e.g., \citep[Thrm.~4.63]{SWP08}; an algorithm guaranteed to produce such a parametrisation is given in \citep[Sec.~4.7]{SWP08}). Writing $u(z)=x(z)y(z)^2$ shows that our spectral curve (written in the $x(z),y(z)$ variables) is equivalent to
\begin{equation*}
0=u(z)^2-\frac{u(z)}{y(z)}+1.
\end{equation*}
As this is a quadratic equation in $u(z)$, as considered in the one-cut 1-Hermitian matrix model, we may adapt the Joukowsky transform to let
\begin{equation} \label{eq5.22}
\frac{1}{y(z)}=z+\frac{1}{z}\implies y(z)=\frac{z}{z^2+1}.
\end{equation}
Then, $u(z)=z$ (or $u(z)=1/z$, but we ignore this choice without loss of generality) and, consequently,
\begin{equation} \label{eq5.23}
x(z)=\frac{u(z)}{y(z)^2}=\frac{(z^2+1)^2}{z}.
\end{equation}
It can immediately be seen that the parametrisations \eqref{eq5.22}, \eqref{eq5.23} satisfy the equation
\begin{equation*}
0=x(z)^2y(z)^4-x(z)y(z)+1,
\end{equation*}
so $y(z)$ is the single-valued analytic continuation of $W_1^0(x(z))$ to $\mathbb{CP}^1$. That is, $W_1^0(x_1)$ is equal to $y(z)$ when $z\in x^{-1}(x_1)$ is chosen to be close to $z=0$. To see why we must make this choice, note that our desired solution of equation \eqref{eq5.15} has the property that $W_1^0(x_1)\sim1/x_1$ as $x_1\to\infty$, which translates to the requirement that $y(z)\sim x(z)y(z)^2=u(z)=z\to0$ as $x(z)\to\infty$ (cf.~the discussion below the parametrisation \eqref{eq4.44}).

Moving on, taking $z_1,z_2\in\mathbb{CP}^1$ and making the substitutions $x_1=x(z_1)$ and $x_2=x(z_2)$ (with $W_1^0(x_i)=y(z_i)$) into equation \eqref{eq5.19} shows that
\begin{equation} \label{eq5.24}
W_2^0(x(z_1),x(z_2))=\frac{1}{x'(z_1)x'(z_2)}\frac{1}{(z_1-z_2)^2}-\frac{1}{(x(z_1)-x(z_2))^2}.
\end{equation}
This is exactly the expected universal form for $W_2^0$ when working with a genus zero spectral curve; cf.~the first line of equation \eqref{eq4.45} and the discussion below said equation.

Per the comment below equation \eqref{eq5.17}, $W_1^1(x_1)$ is identically zero, so the next-to-leading order term in the large $N$ expansion of $W_1(x_1)$ is $W_1^2(x_1)$, which is equivalently the $1/N^2$ correction to $W_1^0(x_1)$. With the evaluation \eqref{eq5.24} in hand, we may set $l=2$ in the loop equation \eqref{eq5.14} with $x_1$ replaced by $x(z_1)$ \eqref{eq5.23}, $W_1^0(x_1)$ replaced by $y(z_1)$ \eqref{eq5.22}, and $W_2^0(x_1,x_1)$ replaced by the limit
\begin{equation*}
\lim_{z_2\to z_1}W_2^0(x(z_1),x(z_2))=\frac{6z_1^6-2z_1^4+10z_1^2+2}{z_1^4\,x'(z_1)^4}
\end{equation*}
to compute
\begin{equation} \label{eq5.25}
W_1^2(x(z_1))=-\frac{z_1^3(9z_1^8+38z_1^4+16z_1^2+1)}{(3z_1^2-1)^5(z_1^2+1)^3}.
\end{equation}

As in the case of the antisymmetrised Laguerre ensemble, we observe that
\begin{equation} \label{eq5.26}
W_n^l(x_1,\ldots,x_n)=\sum_{k_1,\ldots,k_n=0}^{\infty}\frac{c_{k_1,\ldots,k_n}^{(\mathcal{H}_1),l}}{x_1^{k_1+1}\cdots x_n^{k_n+1}}.
\end{equation}
Hence, in analogy with equations \eqref{eq4.48}, \eqref{eq4.49} and recalling that $x(z_1),\ldots,x(z_n)\to\infty$ corresponds to $z_1,\ldots,z_n\to0$ in the regime where equation \eqref{eq5.26} is valid, we may use the residue formulae
\begin{align}
c_{k_1,\ldots,k_n}^{(\mathcal{H}_1),l}&=(-1)^n\underset{x_1,\ldots,x_n=\infty}{\mathrm{Res}}x_1^{k_1}\cdots x_n^{k_n}W_n^l(x_1,\ldots,x_n)\,\mathrm{d}x_1\cdots\mathrm{d}x_n \nonumber
\\&=(-1)^n\underset{z_1,\ldots,z_n=0}{\mathrm{Res}}W_n^l(x(z_1),\ldots,x(z_n))\prod_{i=1}^nx(z_i)^{k_i}x'(z_i)\,\mathrm{d}z_i \label{eq5.27}
\end{align}
to compute the coefficients of the genus expansion \eqref{eq5.21}. Substituting equations \eqref{eq5.22}, \eqref{eq5.24}, and \eqref{eq5.25} into the latter residue formula, we present the evaluation of $c_{k_1,\ldots,k_n}^{(\mathcal{H}_1),l}$ for $(n,l)=(1,0),(1,2),(2,0)$ and some low values of $k_1,k_2$ in Appendix \ref{appendixD.2}.

\section*{Acknowledgements}
The majority of the work of AAR was supported by a grant of the Australian Research Council, Discovery Project DP210102887. While finalising this work, AAR was supported by the Hong Kong RGC grants GRF 16304724 and GRF 17304225.

\appendix
\setcounter{equation}{0}
\section{Loop Equations on the \texorpdfstring{$m_{k_1,\ldots,k_n}^{(\mathcal{J}_1)}$}{m.k1,...,kn(J1)}} \label{appendixC.1}
Let $m_{k_1,\ldots,k_n}=m_{k_1,\ldots,k_n}^{(\mathcal{J}_1)}$ be as defined in equation \eqref{eq1.6} with $m=1$. Substituting equations \eqref{eq4.16}--\eqref{eq4.19} into equation \eqref{eq4.10} and then substituting the result into equation \eqref{eq4.7} yields the following loop equation (recall that $I_n=(k_2,\ldots,k_n)$ and $\Tr B^{I_n}=\Tr B^{k_2}\cdots\Tr B^{k_n}$):
\begin{multline} \label{eqA.1}
0=N^2m_{k_1+1,I_n}+\sum_{p_1+p_2+p_3=k_1-1}m_{p_1,p_2,p_3,I_n}-\sum_{p_1+p_2=k_1-1}m_{p_1,p_2,I_n}+\frac{k_1^2-1}{2}m_{k_1-1,I_n}
\\+8\sum_{2\leq i<j\leq n}k_ik_j[k_i+1]_{\mathrm{mod}\,2}[k_j+1]_{\mathrm{mod}\,2}m_{k_1+k_i+k_j-1,I_n\setminus\{k_i,k_j\}}
\\\hspace{-2em}+2\sum_{i=2}^nk_i[k_i+1]_{\mathrm{mod}\,2}\Bigg(2\sum_{p_1+p_2=k_1-1}m_{k_i+p_1,p_2,I_n\setminus\{k_i\}}+\sum_{p_1+p_2=k_i-1}m_{k_1+p_1,p_2,I_n\setminus\{k_i\}}
\\-m_{k_1+k_i-1,I_n\setminus\{k_i\}}\Bigg).
\end{multline}

To convert this equation into the loop equation \eqref{eq4.26} on the unconnected correlators $U_n(x_1,\ldots,x_n)=U^{(\mathcal{J}_1)}_n(x_1,\ldots,x_n)$, we multiply both sides by $x_1^{-k_1-1}\cdots x_n^{-k_n-1}$ and sum over $k_1,\ldots,k_n\geq0$. Applying this prescription term by term requires the following glossary (recall that $J_n=(x_2,\ldots,x_n)$):
\begin{equation} \label{eqA.2}
\sum_{k_1,\ldots,k_n\geq0}\sum_{p_1+p_2+p_3=k_1-1}\frac{m_{p_1,p_2,p_3,I_n}}{x_1^{k_1+1}\cdots x_n^{k_n+1}}=x_1U_{n+2}(x_1,x_1,x_1,J_n),
\end{equation}
\begin{equation} \label{eqA.3}
\sum_{k_1,\ldots,k_n\geq0}\sum_{p_1+p_2=k_1-1}\frac{m_{p_1,p_2,I_n}}{x_1^{k_1+1}\cdots x_n^{k_n+1}}=U_{n+1}(x_1,x_1,J_n),
\end{equation}
\begin{equation} \label{eqA.4}
\sum_{k_1,\ldots,k_n\geq0}\frac{m_{k_1+1,I_n}}{x_1^{k_1+1}\cdots x_n^{k_n+1}}=x_1U_n(x_1,J_n)-NU_{n-1}(J_n),
\end{equation}
\begin{equation} \label{eqA.5}
\sum_{k_1,\ldots,k_n\geq0}\frac{k_1^2-1}{2}\frac{m_{k_1-1,I_n}}{x_1^{k_1+1}\cdots x_n^{k_n+1}}=\frac{1}{2x_1}\left(x_1^2\frac{\partial^2}{\partial x_1^2}+x_1\frac{\partial}{\partial x_1}-1\right)U_n(x_1,J_n),
\end{equation}
\begin{multline} \label{eqA.6}
\sum_{k_1,\ldots,k_n\geq0}k_i\left[k_i+1\right]_{\textrm{mod 2}}\frac{m_{k_1+k_i-1,I_n\setminus\{k_i\}}}{x_1^{k_1+1}\cdots x_n^{k_n+1}}
\\=\frac{1}{x_1}\frac{\partial}{\partial x_i}\left\{\frac{x_i^2U_{n-1}(x_1,J_n\setminus\{x_i\})-x_1x_iU_{n-1}(J_n)}{x_1^2-x_i^2}\right\},
\end{multline}
\begin{multline} \label{eqA.7}
\sum_{k_1,\ldots,k_n\geq0}\sum_{p_1+p_2=k_i-1}k_i\left[k_i+1\right]_{\textrm{mod 2}}\frac{m_{k_1+p_1,p_2,I_n\setminus\{k_i\}}}{x_1^{k_1+1}\cdots x_n^{k_n+1}}
\\=\frac{\partial}{\partial x_i}\left\{\frac{x_1x_iU_{n}(x_1,J_n)-x_i^2U_{n}(x_i,J_n)}{x_1^2-x_i^2}\right\},
\end{multline}
\begin{multline} \label{eqA.8}
\sum_{k_1,\ldots,k_n\geq0}\sum_{p_1+p_2=k_1-1}k_i\left[k_i+1\right]_{\textrm{mod 2}}\frac{m_{k_i+p_1,p_2,I_n\setminus\{k_i\}}}{x_1^{k_1+1}\cdots x_n^{k_n+1}}
\\=\frac{\partial}{\partial x_i}\left\{\frac{x_i^2U_n(x_1,x_1,J_n\setminus\{x_i\})-x_1x_iU_n(x_1,J_n)}{x_1^2-x_i^2}\right\},
\end{multline}
\begin{multline} \label{eqA.9}
\sum_{k_1,\ldots,k_n\geq0}k_ik_j\left[k_i+1\right]_{\textrm{mod 2}}\left[k_j+1\right]_{\textrm{mod 2}}\frac{m_{k_1+k_i+k_j-1,I_n\setminus\{k_i,k_j\}}}{x_1^{k_1+1}\cdots x_n^{k_n+1}}
\\=\frac{1}{x_1}\frac{\partial^2}{\partial x_i \partial x_j}\Bigg\{\frac{x_1x_i^2x_jU_{n-2}(J_n\setminus\{x_i\})}{(x_1^2-x_j^2)(x_i^2-x_j^2)}-\frac{x_1x_ix_j^2U_{n-2}(J_n\setminus\{x_j\})}{(x_1^2-x_i^2)(x_i^2-x_j^2)}
\\+\frac{x_i^2x_j^2U_{n-2}(x_1,J_n\setminus\{x_i,x_j\})}{(x_1^2-x_i^2)(x_1^2-x_j^2)}\Bigg\}.
\end{multline}

The proofs of equations \eqref{eqA.2}--\eqref{eqA.9} are a little more involved than one might expect, so we compute a few illustrative examples for the reader's convenience:

To prove equation \eqref{eqA.2}, one needs to appropriately split up the factor $x_1^{k_1+1}$ and interchange the order of summation to see that
\begin{align*}
\sum_{k_1\geq0}\sum_{p_1+p_2+p_3=k_1-1}\frac{m_{p_1,p_2,p_3,I_n}}{x_1^{k_1+1}}&=x_1\sum_{k_1=0}^\infty\sum_{p_1=0}^{k_1-1}\sum_{p_2=0}^{k_1-p_1-1}\frac{m_{p_1,p_2,k_1-p_1-p_2-1,I_n}}{x_1^{p_1+1}x_1^{p_2+1}x_1^{k_1-p_1-p_2}}
\\&=x_1\sum_{p_1=0}^\infty\sum_{k_1=p_1+1}^\infty\sum_{p_2=0}^{k_1-p_1-1}\frac{m_{p_1,p_2,k_1-p_1-p_2-1,I_n}}{x_1^{p_1+1}x_1^{p_2+1}x_1^{k_1-p_1-p_2}}
\\&=x_1\sum_{p_1=0}^\infty\sum_{k_1=0}^\infty\sum_{p_2=0}^{k_1}\frac{m_{p_1,p_2,k_1-p_2,I_n}}{x_1^{p_1+1}x_1^{p_2+1}x_1^{k_1-p_2+1}}
\\&=x_1\sum_{p_1=0}^\infty\sum_{p_2=0}^\infty\sum_{k_1=p_2}^\infty\frac{m_{p_1,p_2,k_1-p_2,I_n}}{x_1^{p_1+1}x_1^{p_2+1}x_1^{k_1-p_2+1}}
\\&=x_1\sum_{p_1=0}^\infty\sum_{p_2=0}^\infty\sum_{k_1=0}^\infty\frac{m_{p_1,p_2,k_1,I_n}}{x_1^{p_1+1}x_1^{p_2+1}x_1^{k_1+1}}.
\end{align*}
Multiplying this result by $x_2^{-k_2-1}\cdots x_n^{-k_n-1}$ and further summing over $k_2,\ldots,k_n\geq0$ then gives the right-hand side of equation \eqref{eqA.2}.

Equation \eqref{eqA.4} has a simple but comparatively unique proof:
\begin{multline*}
\sum_{k_1=0}^{\infty}\frac{m_{k_1+1,I_n}}{x_1^{k_1+1}}=\sum_{k_1=-1}^{\infty}\frac{m_{k_1+1,I_n}}{x_1^{k_1+1}}-m_{0,I_n}=x_1\sum_{k_1=-1}^{\infty}\frac{m_{k_1+1,I_n}}{x_1^{k_1+2}}-m_{0,I_n}
\\=x_1\sum_{k_1=0}^{\infty}\frac{m_{k_1,I_n}}{x_1^{k_1+1}}-m_{0,I_n}=x_1\sum_{k_1=0}^{\infty}\frac{m_{k_1,I_n}}{x_1^{k_1+1}}-Nm_{I_n},
\end{multline*}
using the fact that $m_{0,I_n}=\mean{\Tr B^0\Tr B^{I_n}}=N\mean{\Tr B^{I_n}}=Nm_{I_n}$. Multiplying again by $x_2^{-k_2-1}\cdots x_n^{-k_n-1}$ and summing over $k_2,\ldots,k_n\geq0$ produces the desired result.

To prove equation \eqref{eqA.6}, we first note that since $[k_i+1]_{\mathrm{mod}\,2}$ vanishes if $k_i$ is odd, we can replace $k_i$ by $2k_i$ in the summand of the left-hand side of equation \eqref{eqA.6}. However, since $m_{k_1+2k_i-1,I_n\setminus\{k_i\}}=\mean{\Tr B^{k_1+2k_i-1}\Tr B^{I_n\setminus\{k_i\}}}$ vanishes whenever $k_1$ is even, we must also replace $k_1$ by $2k_1+1$. Thus, the left-hand side of equation \eqref{eqA.6} simplifies as
\begin{align*}
\sum_{k_1,\ldots,k_n\geq0}&k_i\left[k_i+1\right]_{\textrm{mod 2}}\frac{m_{k_1+k_i-1,I_n\setminus\{k_i\}}}{x_1^{k_1+1}\cdots x_n^{k_n+1}}
\\&=-\frac{\partial}{\partial x_i}x_i\sum_{k_1,\ldots,k_n\geq0}\frac{m_{2k_1+2k_i,I_n\setminus\{k_i\}}}{x_1^{2k_1+2}x_2^{k_2+1}\cdots x_{i-1}^{k_{i-1}+1}x_i^{2k_i+1}x_{i+1}^{k_{i+1}+1}\cdots x_n^{k_n+1}}
\\&=-\frac{1}{x_1}\frac{\partial}{\partial x_i}\sum_{k_1,\ldots,k_n\geq0}\frac{m_{2k_1+2k_i,I_n\setminus\{k_i\}}}{x_1^{2k_1+2k_i+1}x_2^{k_2+1}\cdots x_{i-1}^{k_{i-1}+1}x_{i+1}^{k_{i+1}+1}\cdots x_n^{k_n+1}}\left(\frac{x_1}{x_i}\right)^{2k_i}
\\&=-\frac{1}{x_1}\frac{\partial}{\partial x_i}\sum_{k_2,\ldots,k_n\geq0}\sum_{k_1\geq k_i}\frac{m_{2k_1,I_n\setminus\{k_i\}}}{x_1^{2k_1+1}x_2^{k_2+1}\cdots x_{i-1}^{k_{i-1}+1}x_{i+1}^{k_{i+1}+1}\cdots x_n^{k_n+1}}\left(\frac{x_1}{x_i}\right)^{2k_i}
\\&=-\frac{1}{x_1}\frac{\partial}{\partial x_i}\sum_{k_1,\ldots,k_{i-1},k_{i+1},\ldots,k_n\geq0}\sum_{k_i=0}^{k_1}\frac{m_{2k_1,I_n\setminus\{k_i\}}}{x_1^{2k_1+1}x_2^{k_2+1}\cdots x_{i-1}^{k_{i-1}+1}x_{i+1}^{k_{i+1}+1}\cdots x_n^{k_n+1}}\left(\frac{x_1}{x_i}\right)^{2k_i}
\\&=-\frac{1}{x_1}\frac{\partial}{\partial x_i}\sum_{k_1,\ldots,k_{i-1},k_{i+1},\ldots,k_n\geq0}\frac{m_{2k_1,I_n\setminus\{k_i\}}}{x_1^{2k_1+1}x_2^{k_2+1}\cdots x_{i-1}^{k_{i-1}+1}x_{i+1}^{k_{i+1}+1}\cdots x_n^{k_n+1}}\frac{1-(x_1/x_i)^{2k_1+2}}{1-x_1^2/x_i^2}
\\&=\frac{1}{x_1}\frac{\partial}{\partial x_i}\frac{x_i^2}{x_1^2-x_i^2}\Bigg(\sum_{k_1,\ldots,k_{i-1},k_{i+1},\ldots,k_n\geq0}\frac{m_{2k_1,I_n\setminus\{k_i\}}}{x_1^{2k_1+1}x_2^{k_2+1}\cdots x_{i-1}^{k_{i-1}+1}x_{i+1}^{k_{i+1}+1}\cdots x_n^{k_n+1}}
\\&\hspace{10em}-\sum_{k_1,\ldots,k_{i-1},k_{i+1},\ldots,k_n\geq0}\frac{x_1}{x_i}\frac{m_{2k_1,I_n\setminus\{k_i\}}}{x_i^{2k_1+1}x_2^{k_2+1}\cdots x_{i-1}^{k_{i-1}+1}x_{i+1}^{k_{i+1}+1}\cdots x_n^{k_n+1}}\Bigg).
\end{align*}
We may now replace $2k_1$ by $k_1$ in the summands without any problem since $m_{k_1,I_n\setminus\{k_i\}}=0$ whenever $k_1$ is odd. Rewriting the sums in terms of the unconnected correlators according to equation \eqref{eq1.8} and performing some basic algebra then yields the right-hand side of equation \eqref{eqA.6}, as required.

We do not display the proofs of equations \eqref{eqA.3}, \eqref{eqA.5}, \eqref{eqA.7}--\eqref{eqA.9} since they can be proven using extensions of the ideas given above.

\setcounter{equation}{0}
\section{Loop Equations on the \texorpdfstring{$m_{k_1,\ldots,k_n}^{(\mathcal{H}_1)}$}{m.k1,...,kn(H1)}} \label{appendixC.2}
Let $m_{k_1,\ldots,k_n}=m^{(\mathcal{H}_1)}_{k_1,\ldots,k_n}$ and $U_n(x_1,\ldots,x_n)=U^{(\mathcal{H}_1)}_n(x_1,\ldots,x_n)$. Taking $i,j,h$ to be pairwise distinct, the averages $\mathcal{A}_{i',j',h'}$ of equation \eqref{eq5.8} are given by
\begin{multline} \label{eqB.1}
\mathcal{A}_{1,1,1}=\sum_{p_1+\cdots+p_4=k_1-1}m_{p_1,\ldots,p_4,I_n}+\frac{k_1(k_1-1)}{2}\sum_{p_1+p_2=k_1-1}m_{p_1,p_2,I_n}
\\+2\sum_{p_1+p_2+p_3=k_1-1}p_1m_{p_1+p_2,p_3,I_n},
\end{multline}
\begin{multline}
\mathcal{A}_{1,1,i}=\mathcal{A}_{1,i,1}=\mathcal{A}_{i,1,1}
\\=\frac{k_ik_1(k_1-1)}{2}m_{k_1+k_i-1,I_n\setminus\{k_i\}}+k_i\sum_{p_1+p_2+p_3=k_1-1}m_{p_1+k_i,p_2,p_3,I_n\setminus\{k_i\}},
\end{multline}
\begin{align}
\mathcal{A}_{1,i,i}&=\mathcal{A}_{i,i,1}=k_i\sum_{p_1+p_2=k_1-1}\sum_{q_1+q_2=k_i-1}m_{p_1+q_1+1,p_2,q_2,I_n\setminus\{k_i\}},
\\\mathcal{A}_{i,1,i}&=k_1k_i^2m_{k_1+k_i-1,I_n\setminus\{k_i\}},
\\\mathcal{A}_{i,i,i}&=\frac{k_i^2(k_i-1)}{2}m_{k_1+k_i-1,I_n\setminus\{k_i\}}+k_i\sum_{p_1+p_2+p_3=k_i-1}m_{k_1+p_1,p_2,p_3,I_n\setminus\{k_i\}},
\\\mathcal{A}_{1,i,j}&=\mathcal{A}_{i,j,1}=k_ik_j\sum_{p_1+p_2=k_1-1}m_{p_1+k_i+k_j,p_2,I_n\setminus\{k_i,k_j\}},
\\\mathcal{A}_{i,1,j}&=k_ik_j\sum_{p_1+p_2=k_1-1}m_{p_1+k_i,p_2+k_j,I_n\setminus\{k_i,k_j\}},
\\\mathcal{A}_{i,i,j}&=\mathcal{A}_{j,i,i}=k_ik_j\sum_{p_1+p_2=k_i-1}m_{k_1+p_1+k_j,p_2,I_n\setminus\{k_i,k_j\}},
\\\mathcal{A}_{i,j,i}&=k_ik_j\sum_{p_1+p_2=k_i-1}m_{k_1+p_1,p_2+k_j,I_n\setminus\{k_i,k_j\}},
\\\mathcal{A}_{i,j,h}&=k_ik_jk_hm_{k_1+k_i+k_j+k_h-1,I_n\setminus\{k_i,k_j,k_h\}}. \label{eqB.10}
\end{align}

We do not provide proofs for equations \eqref{eqB.1}--\eqref{eqB.10}, as they can be computed using Lemma \ref{Lemma8} via the same ideas as in the proof of Lemma \ref{Lemma6}.

Now, substituting equation \eqref{eq5.8} into equation \eqref{eq5.5} while observing that some of the above expressions are unchanged upon reordering indices shows that
\begin{align}
0&=-N^3m_{k_1+1,I_n}+k_1\sum_{p_1+p_2=k_1-1}m_{p_1,p_2,I_n}+2\sum_{2\leq i<j\leq n}k_ik_jm_{k_1,k_i+k_j-1,I_n\setminus\{k_i,k_j\}} \nonumber
\\&\quad+\sum_{i=2}^nk_i\left(2k_1m_{k_1+k_i-1,I_n\setminus\{k_i\}}+\sum_{p_1+p_2=k_i-1}m_{k_1,p_1,p_2,I_n\setminus\{k_i\}}\right)+\mathcal{A}_{1,1,1} \nonumber
\\&\quad+\sum_{i=2}^n\left(3\mathcal{A}_{1,1,i}+2\mathcal{A}_{1,i,i}+\mathcal{A}_{i,1,i}+\mathcal{A}_{i,i,i}\right)+2\sum_{2\leq i<j\leq n}\left(2\mathcal{A}_{1,i,j}+\mathcal{A}_{i,1,j}\right) \nonumber
\\&\quad+\sum_{\substack{2\leq i,j\leq n,\\i\neq j}}\left(2\mathcal{A}_{i,i,j}+\mathcal{A}_{i,j,i}\right)+6\sum_{2\leq i<j<h\leq n}\mathcal{A}_{i,j,h}. \label{eqB.11}
\end{align}
Inserting the expressions \eqref{eqB.1}--\eqref{eqB.10} into this equation then produces the loop equation on the mixed moments,
\begin{align}
0&=-N^3m_{k_1+1,I_n}+k_1\sum_{p_1+p_2=k_1-1}m_{p_1,p_2,I_n}+2\sum_{2\leq i<j\leq n}k_ik_jm_{k_1,k_i+k_j-1,I_n\setminus\{k_i,k_j\}} \nonumber
\\&\quad+\sum_{i=2}^nk_i\left(2k_1m_{k_1+k_i-1,I_n\setminus\{k_i\}}+\sum_{p_1+p_2=k_i-1}m_{k_1,p_1,p_2,I_n\setminus\{k_i\}}\right)+\sum_{p_1+\cdots+p_4=k_1-1}m_{p_1,\ldots,p_4,I_n} \nonumber
\\&\quad+\frac{k_1(k_1-1)}{2}\sum_{p_1+p_2=k_1-1}m_{p_1,p_2,I_n}+2\sum_{p_1+p_2+p_3=k_1-1}p_1m_{p_1+p_2,p_3,I_n} \nonumber
\\&\quad+3\sum_{i=2}^n\left(\frac{k_ik_1(k_1-1)}{2}m_{k_1+k_i-1,I_n\setminus\{k_i\}}+k_i\sum_{p_1+p_2+p_3=k_1-1}m_{p_1+k_i,p_2,p_3,I_n\setminus\{k_i\}}\right) \nonumber
\\&\quad+\sum_{i=2}^n\left(2k_i\sum_{p_1+p_2=k_1-1}\sum_{q_1+q_2=k_i-1}m_{p_1+q_1+1,p_2,q_2,I_n\setminus\{k_i\}}+k_1k_i^2m_{k_1+k_i-1,I_n\setminus\{k_i\}}\right) \nonumber
\\&\quad+\sum_{i=2}^n\left(\frac{k_i^2(k_i-1)}{2}m_{k_1+k_i-1,I_n\setminus\{k_i\}}+k_i\sum_{p_1+p_2+p_3=k_i-1}m_{k_1+p_1,p_2,p_3,I_n\setminus\{k_i\}}\right) \nonumber
\\&\quad+2\sum_{2\leq i<j\leq n}k_ik_j\left(2\sum_{p_1+p_2=k_1-1}m_{p_1+k_i+k_j,p_2,I_n\setminus\{k_i,k_j\}}+\sum_{p_1+p_2=k_1-1}m_{p_1+k_i,p_2+k_j,I_n\setminus\{k_i,k_j\}}\right) \nonumber
\\&\quad+\sum_{\substack{2\leq i,j\leq n,\\i\neq j}}k_ik_j\left(2\sum_{p_1+p_2=k_i-1}m_{k_1+p_1+k_j,p_2,I_n\setminus\{k_i,k_j\}}+\sum_{p_1+p_2=k_i-1}m_{k_1+p_1,p_2+k_j,I_n\setminus\{k_i,k_j\}}\right) \nonumber
\\&\quad+6\sum_{2\leq i<j<h\leq n}k_ik_jk_hm_{k_1+k_i+k_j+k_h-1,I_n\setminus\{k_i,k_j,k_h\}}. \label{eqB.12}
\end{align}

Multiplying both sides of this equation by $x_1^{-k_1-1}\cdots x_n^{-k_n-1}$ and then taking the sum over $k_1,\ldots,k_n\geq0$ yields the loop equation on the unconnected correlators $U_n(x_1,\ldots,x_n)$ given in Proposition~\ref{prop9}. We apply this prescription term by term using extensions of the ideas underlying the proofs presented in Appendix \ref{appendixC.1} above (recall again that $J_n=(x_2,\ldots,x_n)$):
\begin{equation} \label{eqB.13}
\sum_{k_1,\ldots,k_n\geq0}\frac{m_{k_1+1,I_n}}{x_1^{k_1+1}\cdots x_n^{k_n+1}}=x_1U_n(x_1,J_n)-NU_{n-1}(J_n),
\end{equation}
\begin{equation} \label{eqB.14}
\sum_{k_1,\ldots,k_n\geq0}k_1\sum_{p_1+p_2=k_1-1}\frac{m_{p_1,p_2,I_n}}{x_1^{k_1+1}\cdots x_n^{k_n+1}}=-\frac{\partial}{\partial x_1}x_1U_{n+1}(x_1,x_1,J_n),
\end{equation}
\begin{multline} \label{eqB.15}
\sum_{k_1,\ldots,k_n\geq0}k_ik_j\frac{m_{k_1,k_i+k_j-1,I_n\setminus\{k_i,k_j\}}}{x_1^{k_1+1}\cdots x_n^{k_n+1}}
\\=\frac{\partial^2}{\partial x_i\partial x_j}\left\{\frac{x_iU_{n-1}(x_1,J_n\setminus\{x_i\})-x_jU_{n-1}(x_1,J_n\setminus\{x_j\})}{x_i-x_j}\right\},
\end{multline}
\begin{equation} \label{eqB.16}
\sum_{k_1,\ldots,k_n\geq0}k_1k_i\frac{m_{k_1+k_i-1,I_n\setminus\{k_i\}}}{x_1^{k_1+1}\cdots x_n^{k_n+1}}=\frac{\partial^2}{\partial x_1\partial x_i}\left\{\frac{x_1U_{n-1}(J_n)-x_iU_{n-1}(x_1,J_n\setminus\{x_i\})}{x_1-x_i}\right\},
\end{equation}
\begin{equation} \label{eqB.17}
\sum_{k_1,\ldots,k_n\geq0}k_i\sum_{p_1+p_2=k_i-1}\frac{m_{k_1,p_1,p_2,I_n\setminus\{k_i\}}}{x_1^{k_1+1}\cdots x_n^{k_n+1}}=-\frac{\partial}{\partial x_i}x_iU_{n+1}(x_1,x_i,J_n),
\end{equation}
\begin{multline} \label{eqB.18}
\sum_{k_1,\ldots,k_n\geq0}\frac{\mathcal{A}_{1,1,1}}{x_1^{k_1+1}\cdots x_n^{k_n+1}}=x_1^2U_{n+3}(x_1,x_1,x_1,x_1,J_n)
\\\hspace{4em}+\left(\frac{1}{2}x_1^2\frac{\partial^2}{\partial x_1^2}+2x_1\frac{\partial}{\partial x_1}+1\right)U_{n+1}(x_1,x_1,J_n)
\\+\lim_{\xi\to x_1}\left(x_1^2\frac{\partial^2}{\partial x_1^2}+2x_1\frac{\partial}{\partial x_1}\right)U_{n+1}(\xi,x_1,J_n),
\end{multline}
\begin{multline} \label{eqB.19}
\sum_{k_1,\ldots,k_n\geq0}\frac{\mathcal{A}_{1,1,i}}{x_1^{k_1+1}\cdots x_n^{k_n+1}}=\frac{\partial}{\partial x_i}\frac{x_1x_i}{x_1-x_i}\big\{U_{n+1}(x_1,x_1,x_1,J_n\setminus\{x_i\})-U_{n+1}(x_1,x_1,J_n)\big\}
\\+\frac{1}{2}\frac{\partial^3}{\partial x_1^2\partial x_i}\left\{\frac{x_1x_iU_{n-1}(x_1,J_n\setminus\{x_i\})-x_1^2U_{n-1}(J_n)}{x_1-x_i}\right\},
\end{multline}
\begin{equation} \label{eqB.20}
\sum_{k_1,\ldots,k_n\geq0}\frac{\mathcal{A}_{1,i,i}}{x_1^{k_1+1}\cdots x_n^{k_n+1}}=\frac{\partial}{\partial x_i}\left\{\frac{x_1x_iU_{n+1}(x_1,x_1,J_n)-x_i^2U_{n+1}(x_1,x_i,J_n)}{x_1-x_i}\right\},
\end{equation}
\begin{multline} \label{eqB.21}
\sum_{k_1,\ldots,k_n\geq0}\frac{\mathcal{A}_{i,1,i}}{x_1^{k_1+1}\cdots x_n^{k_n+1}}
\\=\left(x_i\frac{\partial}{\partial x_i}+1\right)\frac{\partial^2}{\partial x_1\partial x_i}\left\{\frac{x_iU_{n-1}(x_1,J_n\setminus\{x_i\})-x_1U_{n-1}(J_n)}{x_1-x_i}\right\},
\end{multline}
\begin{multline} \label{eqB.22}
\sum_{k_1,\ldots,k_n\geq0}\frac{\mathcal{A}_{i,i,i}}{x_1^{k_1+1}\cdots x_n^{k_n+1}}=\frac{1}{2x_1}\left(x_i\frac{\partial}{\partial x_i}+1\right)\frac{\partial^2}{\partial x_i^2}\left\{\frac{x_i^2U_{n-1}(x_1,J_n\setminus\{x_i\})-x_1x_iU_{n-1}(J_n)}{x_1-x_i}\right\}
\\+\frac{\partial}{\partial x_i}\frac{x_i^2}{x_1-x_i}\big\{U_{n+1}(x_1,x_i,J_n)-U_{n+1}(x_i,x_i,J_n)\big\},
\end{multline}
\begin{multline} \label{eqB.23}
\sum_{k_1,\ldots,k_n\geq0}\frac{\mathcal{A}_{1,i,j}}{x_1^{k_1+1}\cdots x_n^{k_n+1}}=\frac{\partial^2}{\partial x_i\partial x_j}x_ix_j\left\{\frac{U_{n-1}(x_1,J_n\setminus\{x_i\})}{(x_1-x_j)(x_i-x_j)}-\frac{U_{n-1}(x_1,J_n\setminus\{x_j\})}{(x_1-x_i)(x_i-x_j)}\right.
\\+\left.\frac{U_{n-1}(x_1,x_1,J_n\setminus\{x_i,x_j\})}{(x_1-x_i)(x_1-x_j)}\right\},
\end{multline}
\begin{multline} \label{eqB.24}
\sum_{k_1,\ldots,k_n\geq0}\frac{\mathcal{A}_{i,1,j}}{x_1^{k_1+1}\cdots x_n^{k_n+1}}=\frac{\partial^2}{\partial x_i\partial x_j}\frac{x_ix_j}{(x_1-x_i)(x_1-x_j)}
\\\hspace{9em}\times\big\{U_{n-1}(x_1,x_1,J_n\setminus\{x_i,x_j\})-U_{n-1}(x_1,J_n\setminus\{x_j\})
\\+U_{n-1}(J_n)-U_{n-1}(x_1,J_n\setminus\{x_i\})\big\},
\end{multline}
\begin{multline} \label{eqB.25}
\sum_{k_1,\ldots,k_n\geq0}\frac{\mathcal{A}_{i,i,j}}{x_1^{k_1+1}\cdots x_n^{k_n+1}}=\frac{\partial^2}{\partial x_i\partial x_j}x_ix_j\left\{\frac{U_{n-1}(J_n)}{(x_1-x_j)(x_i-x_j)}-\frac{U_{n-1}(x_i,J_n\setminus\{x_j\})}{(x_1-x_i)(x_i-x_j)}\right.
\\+\left.\frac{U_{n-1}(x_1,J_n\setminus\{x_j\})}{(x_1-x_i)(x_1-x_j)}\right\},
\end{multline}
\begin{multline} \label{eqB.26}
\sum_{k_1,\ldots,k_n\geq0}\frac{\mathcal{A}_{i,j,i}}{x_1^{k_1+1}\cdots x_n^{k_n+1}}=\frac{\partial^2}{\partial x_i\partial x_j}\frac{x_ix_j}{(x_1-x_i)(x_i-x_j)}
\\\hspace{6em}\times\big\{U_{n-1}(x_1,J_n\setminus\{x_j\})-U_{n-1}(x_i,J_n\setminus\{x_j\})
\\+U_{n-1}(J_n)-U_{n-1}(x_1,J_n\setminus\{x_i\})\big\},
\end{multline}
\begin{multline} \label{eqB.27}
\sum_{k_1,\ldots,k_n\geq0}\frac{\mathcal{A}_{i,j,h}}{x_1^{k_1+1}\cdots x_n^{k_n+1}}
\\\hspace{-4em}=\frac{1}{x_1}\frac{\partial^3}{\partial x_i\partial x_j\partial x_h}\left\{\frac{x_ix_jx_hU_{n-3}(x_1,J_n\setminus\{x_i,x_j,x_h\})}{(x_1-x_i)(x_1-x_j)(x_1-x_h)}-\frac{x_1x_jx_hU_{n-3}(J_n\setminus\{x_j,x_h\})}{(x_1-x_i)(x_i-x_j)(x_i-x_h)}\right.
\\+\left.\frac{x_1x_ix_hU_{n-3}(J_n\setminus\{x_i,x_h\})}{(x_1-x_j)(x_i-x_j)(x_j-x_h)}-\frac{x_1x_ix_jU_{n-3}(J_n\setminus\{x_i,x_j\})}{(x_1-x_h)(x_i-x_h)(x_j-x_h)}\right\}.
\end{multline}
Equation \eqref{eqB.13} is simply equation \eqref{eqA.4}, while equation \eqref{eqB.14} can be obtained by applying the operator $-\frac{\partial}{\partial x_1}x_1$ to both sides of equation \eqref{eqA.3}. Likewise, equation \eqref{eqB.17} can be obtained by first interchanging $x_1\leftrightarrow x_i$ in both sides of equation \eqref{eqA.3} before applying $-\frac{\partial}{\partial x_i}x_i$ to both sides of said equation. Equation \eqref{eqB.16} is proven in a similar way to equation \eqref{eqA.6}, with steps relating to the factor $[k_i+1]_{\mathrm{mod}\,2}$ being skipped. All other computations \eqref{eqB.15}--\eqref{eqB.27}, bar a caveat on equation \eqref{eqB.18}, can be proven using the methods demonstrated in Appendix \ref{appendixC.1} above and the reasonings just given. We now conclude this appendix with a partial proof of equation \eqref{eqB.18}:

The last term on the right-hand side of equation \eqref{eqB.1} stands out in that it cannot be treated in the same manner as the other expressions considered in this appendix. Multiplying this term by $x_1^{-k_1-1}$, summing over $k_1\geq0$, and simplifying appropriately shows that 
\begin{align*}
\sum_{k_1\geq0}2\sum_{p_1+p_2+p_3=k_1-1}p_1\frac{m_{p_1+p_2,p_3,I_n}}{x_1^{k_1+1}}&=2\sum_{k_1\geq0}\sum_{p_1=0}^{k_1-1}\sum_{p_2=0}^{k_1-p_1-1}p_1\frac{m_{p_1+p_2,k_1-p_1-p_2-1,I_n}}{x_1^{k_1+1}}
\\&=2\sum_{p_1=0}^{\infty}\sum_{k_1=p_1+1}^{\infty}\sum_{p_2=0}^{k_1-p_1-1}p_1\frac{m_{p_1+p_2,k_1-p_1-p_2-1,I_n}}{x_1^{k_1+1}}
\\&=2\sum_{p_1=0}^{\infty}\sum_{k_1=0}^{\infty}\sum_{p_2=0}^{k_1}p_1\frac{m_{p_1+p_2,k_1-p_2,I_n}}{x_1^{k_1+p_1+2}}
\\&=2\sum_{p_1,p_2=0}^{\infty}\sum_{k_1=p_2}^{\infty}p_1\frac{m_{p_1+p_2,k_1-p_2,I_n}}{x_1^{k_1+p_1+2}}
\\&=2\sum_{p_1,p_2,k_1=0}^{\infty}p_1\frac{m_{p_1+p_2,k_1,I_n}}{x_1^{k_1+p_1+p_2+2}}
\\&=2\sum_{q,k_1=0}^{\infty}\sum_{p_1+p_2=q}p_1\frac{m_{q,k_1,I_n}}{x_1^{k_1+1}x_1^{q+1}}
\\&=\sum_{q,k_1=0}^{\infty}q(q+1)\frac{m_{q,k_1,I_n}}{x_1^{k_1+1}x_1^{q+1}}.
\end{align*}
Further multiplying this result by $x_2^{-k_2-1}\cdots x_n^{-k_n-1}$ and summing over $k_2,\ldots,k_n\geq0$ then shows that
\begin{multline*}
\sum_{k_1,\ldots,k_n\geq0}2\sum_{p_1+p_2+p_3=k_1-1}p_1\frac{m_{p_1+p_2,p_3,I_n}}{x_1^{k_1+1}\cdots x_n^{k_n+1}}=\sum_{q,k_1,\ldots,k_n\geq0}q(q+1)\frac{m_{q,k_1,I_n}}{x_1^{k_1+1}x_1^{q+1}x_2^{k_2+1}\cdots x_n^{k_n+1}}
\\=\lim_{\xi\to x_1}\left(x_1^2\frac{\partial^2}{\partial x_1^2}+2x_1\frac{\partial}{\partial x_1}\right)\sum_{k_1,\ldots,k_n\geq0}\frac{m_{q,k_1,I_n}}{\xi^{k_1+1}x_1^{q+1}x_2^{k_2+1}\cdots x_n^{k_n+1}}.
\end{multline*}
The final expression here is manifestly the third line of equation \eqref{eqB.18}.

\setcounter{equation}{0}
\section{Evaluation of \texorpdfstring{$c_{k_1}^{(\mathcal{J}_1),0}$}{c.k1(J1),0}, \texorpdfstring{$c_{k_1}^{(\mathcal{J}_1),1}$}{c.k1(J1),1}, and \texorpdfstring{$c_{k_1,k_2}^{(\mathcal{J}_1),0}$}{c.k1,k2(J1),0}} \label{appendixD.1}
As discussed in \S\ref{s4.3}, setting $x(z_i)=1/(z_i^3+z_i)$ \eqref{eq4.44} in the residue formula \eqref{eq4.49}
\begin{equation} \label{eqC.1}
c_{k_1,\ldots,k_n}^{(\mathcal{J}_1),l}=(-1)^n\underset{z_1,\ldots,z_n=0}{\mathrm{Res}}W_n^{(\mathcal{J}_1),l}(x(z_1),\ldots,x(z_n))\prod_{i=1}^nx(z_i)^{k_i}x'(z_i)\,\mathrm{d}z_i
\end{equation}
and substituting in $W_1^{(\mathcal{J}_1),0}(x(z_1))=z_1$ along with the specifications \eqref{eq4.46} and \eqref{eq4.45} of, respectively, $W_1^{(\mathcal{J}_1),1}(x(z_1))$ and $W_2^{(\mathcal{J}_1),0}(x(z_1),x(z_2))$ enables the computation of the cumulant expansion coefficients $c_{k_1}^{(\mathcal{J}_1),0}$, $c_{k_1}^{(\mathcal{J}_1),1}$, and $c_{k_1,k_2}^{(\mathcal{J}_1),0}$ defined implicitly through equation \eqref{eq4.47}.

Letting $\mathcal{J}_1$ be drawn from the $(N,N)$ antisymmetrised Laguerre ensemble, we have that
\begin{align}
c_{k_1}^{(\mathcal{J}_1),0}&=\lim_{N\to\infty}\frac{1}{N}\mean{\Tr \mathcal{J}_1^{k_1}}, \label{eqC.2}
\\ c_{k_1}^{(\mathcal{J}_1),1}&=\lim_{N\to\infty}\left(\mean{\Tr \mathcal{J}_1^{k_1}}-Nc_{k_1}^{(\mathcal{J}_1),0}\right). \label{eqC.3}
\end{align}
Due to the antisymmetric nature of $\mathcal{J}_1$, the cumulant expansion coefficients $c_{k_1}^{(\mathcal{J}_1),0}$ and $c_{k_1}^{(\mathcal{J}_1),1}$ vanish for odd values of $k_1$, while for $k_1$ even, the former relates to the order 3 Fuss--Catalan numbers \eqref{eq4.40} through the relation
\begin{equation} \label{eqC.4}
c_{k_1}^{(\mathcal{J}_1),0}=m_{k_1}^{(\mathcal{J}_1),0}=\mathrm{i}^{k_1}m_{k_1/2}^{(\mathrm{FC}_3)}=\frac{\mathrm{i}^{k_1}}{k_1+1}\binom{3k_1/2}{k_1/2}.
\end{equation}
We now present the values of $c_{k_1}^{(\mathcal{J}_1),0}$ and $c_{k_1}^{(\mathcal{J}_1),1}$ computed through the residue formula \eqref{eqC.1} with $0\leq k_1\leq 18$ even --- our calculations are consistent with equation \eqref{eqC.4} above.

\begin{center}
\small
\begin{tabular}{c|c|c|c|c|c|c|c|c|c|c} 
$k_1$&$0$&$2$&$4$&$6$&$8$&$10$&$12$&$14$&$16$&$18$ \\ \hline
$c_{k_1}^{(\mathcal{J}_1),0}$&$1$&$-1$&$3$&$-12$&$55$&$-273$&$1\,428$&$-7\,752$&$43\,263$&$-246\,675$ \\ \hline
$c_{k_1}^{(\mathcal{J}_1),1}$&$0$&$1$&$-5$&$28$&$-165$&$1\,001$&$-6\,188$&$38\,760$&$-245\,157$&$1\,562\,275$
\end{tabular}
\end{center}
Comparing the second row of the above table to equation \eqref{eqC.4} suggests that for $k_1\in2\mathbb{N}$,
\begin{multline*}
c_{k_1}^{(\mathcal{J}_1),1}=-\mathrm{i}^{k_1}\binom{3k_1/2-1}{k_1/2-1}+\chi_{k_1=0}=\frac{\chi_{k_1=0}}{3}-\frac{(k_1+1)}{3}c_{k_1}^{(\mathcal{J}_1),0}
\\\iff W_1^{(\mathcal{J}_1),1}(x(z_1))=\frac{x(z_1)}{3x'(z_1)}\frac{\partial}{\partial z_1}W_1^{(\mathcal{J}_1),0}(x(z_1))+\frac{1}{3x(z_1)}=\frac{x(z_1)}{3x'(z_1)}+\frac{1}{3x(z_1)},
\end{multline*}
which is precisely in keeping with equation \eqref{eq4.46}.

For $n=2$, the analogue of equations \eqref{eqC.2} and \eqref{eqC.3} is
\begin{equation} \label{eqC.5}
c_{k_1,k_2}^{(\mathcal{J}_1),0}=\lim_{N\to\infty}\left(\mean{\Tr \mathcal{J}_1^{k_1}\Tr \mathcal{J}_1^{k_2}}-\mean{\Tr \mathcal{J}_1^{k_1}}\mean{\Tr \mathcal{J}_1^{k_2}}\right).
\end{equation}
Note that the right-hand side of equation \eqref{eqC.5} is identically zero whenever either of $k_1,k_2$ are odd, since $\mathcal{J}_1$ is antisymmetric, and also whenever either of $k_1,k_2$ equal zero, since $\Tr \mathcal{J}_1^0=N$ factors out of the first covariance therein. Thus, we present the values of $c_{k_1,k_2}^{(\mathcal{J}_1),0}$ computed through equation \eqref{eqC.1} with $2\leq k_1,k_2\leq 14$ even --- the asterisks represent redundant data, as $c_{k_2,k_1}^{(\mathcal{J}_1),0}=c_{k_1,k_2}^{(\mathcal{J}_1),0}$.

\begin{center}
\small
\begin{tabular}{c c|c|c|c|c|c|c|c}
&$k_1$&$2$&$4$&$6$&$8$&$10$&$12$&$14$
\\$k_2$&$c_{k_1,k_2}^{(\mathcal{J}_1),0}$&&&&&&&
\\ \hline $2$&&$12$&$-80$&$504$&$-3\,168$&$20\,020$&$-127\,296$&$813\,960$
\\ \hline $4$&&$*$&$600$&$-4\,032$&$26\,400$&$-171\,600$&$1\,113\,840$&$-7\,235\,200$
\\ \hline $6$&&$*$&$*$&$28\,224$&$-190\,080$&$1\,261\,260$&$-8\,316\,672$&$54\,698\,112$
\\ \hline $8$&&$*$&$*$&$*$&$1\,306\,800$&$-8\,808\,800$&$58\,810\,752$&$-390\,700\,800$
\\ \hline $10$&&$*$&$*$&$*$&$*$&$60\,120\,060$&$-405\,437\,760$&$2\,715\,913\,200$
\\ \hline $12$&&$*$&$*$&$*$&$*$&$*$&$2\,756\,976\,768$&$-18\,597\,358\,080$
\\ \hline $14$&&$*$&$*$&$*$&$*$&$*$&$*$&$126\,196\,358\,400$
\end{tabular}
\end{center}
Comparing this table to Table 1 of \citep{DF20} suggests that $c_{k_1,k_2}^{(\mathcal{J}_1),0}$ relates to the analogous quantity $c_{k_1,k_2}^{(c\mathcal{W}_2),0}$ of the $(N,N,N)$ complex Wishart product ensemble according to
\begin{equation*}
c_{2k_1,2k_2}^{(\mathcal{J}_1),0}=4\,(-1)^{k_1+k_2}\,c_{k_1,k_2}^{(c\mathcal{W}_2),0},\qquad k_1,k_2\in\mathbb{N},
\end{equation*}
which is in agreement with the discussion leading up to equation \eqref{eq4.39}.

\setcounter{equation}{0}
\section{Evaluation of \texorpdfstring{$c_{k_1}^{(\mathcal{H}_1),0}$}{c.k1(H1),0}, \texorpdfstring{$c_{k_1}^{(\mathcal{H}_1),2}$}{c.k1(H1),2}, and \texorpdfstring{$c_{k_1,k_2}^{(\mathcal{H}_1),0}$}{c.k1,k2(H1),0}} \label{appendixD.2}
In parallel to Appendix \ref{appendixD.1}, let us recall from \S\ref{s5.3} that setting $x(z_i)=(z_i^2+1)^2/z_i$ \eqref{eq5.23} in the residue formula \eqref{eq5.27}
\begin{equation} \label{eqD.1}
c_{k_1,\ldots,k_n}^{(\mathcal{H}_1),l}=(-1)^n\underset{z_1,\ldots,z_n=0}{\mathrm{Res}}W_n^{(\mathcal{H}_1),l}(x(z_1),\ldots,x(z_n))\prod_{i=1}^nx(z_i)^{k_i}x'(z_i)\,\mathrm{d}z_i
\end{equation}
and then substituting in $W_1^{(\mathcal{H}_1),0}(x(z_1))=y(z_1)$ \eqref{eq5.22} along with the expressions for $W_1^{(\mathcal{H}_1),2}(x(z_1))$ and $W_2^{(\mathcal{H}_1),0}(x(z_1),x(z_2))$ given respectively in equations \eqref{eq5.25} and \eqref{eq5.24} gives us a means to compute the coefficients $c_{k_1}^{(\mathcal{H}_1),0}$, $c_{k_1}^{(\mathcal{H}_1),2}$, and $c_{k_1,k_2}^{(\mathcal{H}_1),0}$ of the genus expansions \eqref{eq5.21} of the associated mixed cumulants $c_{k_1,\ldots,k_n}^{(\mathcal{H}_1)}$.

With $\mathcal{H}_1$ drawn from the $(N,N)$ Hermitised Laguerre ensemble, the relevant analogues of equations \eqref{eqC.2}, \eqref{eqC.3}, \eqref{eqC.5} are
\begin{align*}
c_{k_1}^{(\mathcal{H}_1),0}&=\lim_{N\to\infty}\frac{1}{N}\mean{\Tr  \mathcal{H}_1^{k_1}},
\\ c_{k_1}^{(\mathcal{H}_1),2}&=\lim_{N\to\infty}\left(N\mean{\Tr  \mathcal{H}_1^{k_1}}-N^2c_{k_1}^{(\mathcal{H}_1),0}\right),
\\ c_{k_1,k_2}^{(\mathcal{H}_1),0}&=\lim_{N\to\infty}\left(\mean{\Tr  \mathcal{H}_1^{k_1}\Tr  \mathcal{H}_1^{k_2}}-\mean{\Tr  \mathcal{H}_1^{k_1}}\mean{\Tr  \mathcal{H}_1^{k_2}}\right).
\end{align*}
Here, we have used the fact that the $l=1$ coefficient $c_{k_1}^{(\mathcal{H}_1),1}$ within the expansion \eqref{eq5.21} is zero.

Recall from Lemma \ref{Lemma2} that $c_{k_1}^{(\mathcal{H}_1)}$, hence $c_{k_1}^{(\mathcal{H}_1),0}$ and $c_{k_1}^{(\mathcal{H}_1),2}$, vanish whenever $k_1$ is odd. Thus, we display the values of these expansion coefficients obtained through equation \eqref{eqD.1} for $0\leq k_1\leq18$ even.

\begin{center}
\small
\begin{tabular}{c|c|c|c|c|c|c|c|c|c|c} 
$k_1$&$0$&$2$&$4$&$6$&$8$&$10$&$12$&$14$&$16$&$18$ \\ \hline
$c_{k_1}^{(\mathcal{H}_1),0}$&$1$&$1$&$4$&$22$&$140$&$969$&$7\,084$&$53\,820$&$420\,732$&$3\,362\,260$ \\ \hline
$c_{k_1}^{(\mathcal{H}_1),2}$&$0$&$1$&$34$&$645$&$9\,828$&$133\,620$&$1\,694\,154$&$20\,490\,470$&$239\,545\,800$&$2\,729\,482\,668$
\end{tabular}
\end{center}
In keeping with the discussion below equation \eqref{eq5.16}, it is expected from the relation
\begin{equation*}
W_1^{(\mathcal{H}_1),0}(x_1)=x_1W_1^{(\mathrm{FC}_4)}(x_1^2)
\end{equation*}
that for even integers $k_1$, $c_{k_1}^{(\mathcal{H}_1),0}$ relates to the order 4 Fuss--Catalan numbers \eqref{eq4.40} according to
\begin{equation*}
c_{k_1}^{(\mathcal{H}_1),0}=m_{k_1/2}^{(\mathrm{FC}_4)}=\frac{1}{3k_1/2+1}\binom{2k_1}{k_1/2}.
\end{equation*}
It can readily be checked that this is in line with the values of $c_{k_1}^{(\mathcal{H}_1),0}$ given in the first row of the above table.

For the same reason as given below equation \eqref{eqC.5}, $c_{k_1,k_2}^{(\mathcal{H}_1),0}$ vanishes when either of $k_1,k_2$ are zero. Thus, we give evaluations of this cumulant expansion coefficient for $1\leq k_1,k_2\leq 9$ --- the asterisks have the same meaning as in the table for $c_{k_1,k_2}^{(\mathcal{J}_1),0}$ displayed earlier.

\begin{center}
\small
\begin{tabular}{c c|c|c|c|c|c|c|c|c|c}
&$k_1$&$1$&$2$&$3$&$4$&$5$&$6$&$7$&$8$&$9$
\\$k_2$&$c_{k_1,k_2}^{(\mathcal{H}_1),0}$&&&&&&&&&
\\ \hline $1$&&$2$&$0$&$15$&$0$&$120$&$0$&$1\,001$&$0$&$8\,568$
\\ \hline $2$&&$*$&$12$&$0$&$112$&$0$&$990$&$0$&$8\,736$&$0$
\\ \hline $3$&&$*$&$*$&$150$&$0$&$1\,350$&$0$&$12\,012$&$0$&$107\,100$
\\ \hline $4$&&$*$&$*$&$*$&$1\,176$&$0$&$11\,088$&$0$&$101\,920$&$0$
\\ \hline $5$&&$*$&$*$&$*$&$*$&$12\,960$&$0$&$120\,120$&$0$&$1\,101\,600$
\\ \hline $6$&&$*$&$*$&$*$&$*$&$*$&$108\,900$&$0$&$1\,029\,600$&$0$
\\ \hline $7$&&$*$&$*$&$*$&$*$&$*$&$*$&$1\,145\,144$&$0$&$10\,720\,710$
\\ \hline $8$&&$*$&$*$&$*$&$*$&$*$&$*$&$*$&$9\,937\,200$&$0$
\\ \hline $9$&&$*$&$*$&$*$&$*$&$*$&$*$&$*$&$*$&$101\,959\,200$
\end{tabular}
\end{center}
Observe that $c_{k_1,k_2}^{(\mathcal{H}_1),0}=0$ whenever $k_1+k_2$ is odd, which agrees precisely with what is known from Lemma \ref{Lemma2}.

\newpage
\small
\bibliographystyle{plainnat}

\normalsize
\end{document}